\theoremstyle{plain}
\newtheorem{thm}{\protect\theoremname}
\theoremstyle{plain}
\newtheorem{cor}{\protect\corollaryname}
\theoremstyle{plain}
\newtheorem{lem}{\protect\lemmaname}
\theoremstyle{plain}
\newtheorem{prop}{\protect\propositionname}
\theoremstyle{plain}
\newtheorem*{theoremprime}{Theorem 1$'$}
\theoremstyle{definition}
 \newtheorem{example}{\protect\examplename}
\DeclareMathOperator{\Supp}{Supp}
\DeclareMathOperator*{\argmax}{arg\,max} 
\newcommand{\bdu}{\mathbf{u}}
\newcolumntype{C}{>{\centering\arraybackslash $}p{1.3cm}<{$}}
\newcolumntype{D}{>{\centering\arraybackslash $}p{1.4cm}<{$}}
\newcolumntype{S}{>{\centering\arraybackslash $}p{1.8cm}<{$}}
\newcolumntype{N}{>{\centering\arraybackslash $}p{1.6cm}<{$}}
\date{}
\providecommand{\corollaryname}{Corollary}
\providecommand{\examplename}{Example}
\providecommand{\lemmaname}{Lemma}
\providecommand{\propositionname}{Proposition}
\providecommand{\theoremname}{Theorem}
\begin{document}
\title{Commitment and Randomization in Communication}
\author{Emir Kamenica and Xiao Lin\thanks{Kamenica: Booth School of Business, University of Chicago, 5807 S.
Woodlawn Ave., Chicago, IL 60637 (emir@uchicago.edu); Lin: Department
of Economics, University of Pennsylvania, 133 S. 36th St., Philadelphia
PA 19104 (xiaolin7@sas.upenn.edu). We are grateful to James Best,
Nina Bobkova, Ben Brooks, Nima Haghpanah, Ilwoo Hwang, Yuhta Ishii,
Jonathan Libgober, Elliot Lipnowski, Eric Mbakop, John Moore, Roger
Myerson, Phil Reny, Daniel Rappoport, Vasiliki Skreta, Ina Taneva,
and Rakesh Vohra for helpful comments.} \\
 \\
November 2025}
\maketitle
\begin{abstract}
\begin{doublespace}
When does Sender, in a Sender-Receiver game, strictly value commitment?
In a setting with finitely many actions and states, we establish that,
generically, commitment has no value if and only if a partitional
experiment is optimal. Moreover, if Sender's preferred cheap-talk
equilibrium necessarily involves randomization, then Sender values
commitment. Our results imply that if a school values commitment to
a grading policy, then the school necessarily prefers to grade unfairly.
We also ask: for what share of preferences does commitment have no
value? For any state space, if there are $\left|A\right|$ actions,
the share is at least $\frac{1}{\left|A\right|^{\left|A\right|}}$.
As the number of states grows large, the share converges precisely
to $\frac{1}{\left|A\right|^{\left|A\right|}}$. 
\end{doublespace}

\begin{onehalfspace}
\vspace*{1cm}
\noindent\textit{Keywords}: Bayesian persuasion; cheap talk

\noindent\textit{JEL\ Codes}: D80, D83
\end{onehalfspace}

\vspace*{\fill}
\end{abstract}
\pagebreak{}

\setstretch{1.65}

\section{Introduction}

Commitment is often valuable. In the context of communication, this
fact is brought out by the contrast of Sender's payoff in Bayesian
persuasion versus cheap talk. For any prior, and any profile of Sender
and Receiver's preferences, Sender's payoff is always weakly higher
under Bayesian persuasion than in any cheap-talk equilibrium.\footnote{In fact, Bayesian persuasion provides the upper bound on Sender's
equilibrium payoff under any communication protocol, such as disclosure
or signaling; see Proposition 1 in the Online Appendix of \citet{kamenicagentzkow2011}.} In this paper, we ask: when does commitment make Sender \emph{strictly}
better off?

Answering this question would contribute to our understanding of circumstances
that incentivize building strong institutions that are immune to influence
(\citealt{north1993institutions}; \citealt*{lipnowski2022persuasion})
or building a reputation for a degree of honesty (\citealt{best2024persuasion};
\citealt*{mathevet2024reputation}). 

We focus exclusively on environments with finitely many states and
actions. We show that, generically, Sender with commitment values
that commitment if and only if he values randomization (Theorem \ref{Thm:commitmentWTA}).\footnote{Most of the results in this paper hold generically, i.e., for all
but a knife-edge set of preferences. To ease exposition, we henceforth
sweep the genericity modifier under the rug in the Introduction.} In other words, the Bayesian persuasion payoff is achievable in a
cheap-talk equilibrium if and only if a partitional experiment is
a solution to the Bayesian persuasion problem. Moreover, if Sender's
preferred equilibrium in a cheap-talk game necessarily involves randomization,
then Sender values commitment (Theorem \ref{Thm:commitmentWTP}). 

For an application of these results, consider a school that assigns
grades to students, each of whom is characterized by a vector of attributes.
Some of the attributes are relevant, in the sense that an employer
values those attributes or the school's value of placing a student
depends on them. Other attributes are irrelevant. The school assigns
a grade to each student based on her attributes. The school's grading
policy is \emph{fair }if it assigns the same grade to students with
identical relevant attributes. Theorem \ref{Thm:commitmentWTA} tells
us that if the school values committing to a grading policy of any
form (such as mandating a maximum GPA), then the school prefers to
grade unfairly. Conversely, if a fair grading scheme is optimal, there
is no need for commitment: discretionary ``cheap-talk'' grades are
as effective as those disciplined by a publicly declared grading policy. 

We also derive results about the share of preferences such that Sender
finds commitment (or, equivalently, randomization) valuable. We uncover
a potentially surprising connection between the share of preferences
where commitment has value and the cardinalities of the action set
and the state space (Theorem \ref{Thm:how-often}).

Say that preferences are felicitous if Sender can obtain his ideal
payoff under cheap talk. If preferences are felicitous, commitment
obviously has no value. We show that, for any number of states, the
share of felicitous preferences is at least $\frac{1}{\left|A\right|^{\left|A\right|}}$,
where $\left|A\right|$ denotes the cardinality of the action set.
Hence, in any state space, the share of preferences such that commitment
has no value is at least $\frac{1}{\left|A\right|^{\left|A\right|}}$.

Moreover, as the number of states grows large: (i) the share of felicitous
preferences converges to $\frac{1}{\left|A\right|^{\left|A\right|}}$,
and (ii) commitment has no value if and only if preferences are felicitous.
Thus, when the state space is large, the share of preferences such
that commitment has no value is approximately $\frac{1}{\left|A\right|^{\left|A\right|}}$.
For example, if the action set is binary and there are many states,
commitment is valuable for roughly 75\% of possible preferences.

\subsection*{Illustrative example}

The workhorse example in the Bayesian-persuasion literature is a prosecutor
(Sender) trying to convince a judge (Receiver) to convict a defendant
who is guilty or innocent. The judge's preferences are such that she
prefers to convict if the probability of guilt is weakly higher than
the probability of innocence. The prosecutor has state-independent
preferences and always prefers conviction. The prior probability of
guilt is $0.3$. 

If the environment is cheap talk, the unique equilibrium outcome is
that the judge ignores the prosecutor and always acquits the defendant.
If the prosecutor can commit to an experiment about the state, however,
he will conduct a stochastic experiment that indicates guilt whenever
the defendant is guilty and indicates guilt with probability $\frac{3}{7}$
when the defendant is innocent (\citealt{kamenicagentzkow2011}).
This experiment induces the judge to convict the defendant with 60\%
probability. The prosecutor is thus strictly better off than under
cheap talk.

Our Theorem \ref{Thm:commitmentWTA} tells us that the two facts,
(i) the prosecutor's optimal experiment involves randomization and
(ii) the prosecutor does better under commitment, imply each other.\footnote{Theorem \ref{Thm:commitmentWTA} only states that (i) and (ii) imply
one another for a generic set of preferences. To apply the theorem
here, we note that the preferences in the prosecutor-judge example
belong to the generic set used in the proof of the Theorem. Moreover,
in Online Appendix \ref{subsec:transparent} we show that Theorem
\ref{Thm:commitmentWTA} holds when Sender has state-independent preferences.} Of course, the prosecutor-judge example was designed to be extremely
simple, so in this particular example one can easily determine the
optimal experiment and the value of commitment without our result.
In more complicated environments, however, Theorem \ref{Thm:commitmentWTA}
can simplify the determination of whether commitment is valuable.
Except in certain cases, such as uniform-quadratic (\citealt{crawford1982strategic})
or transparent preferences (\citealt{lipnowski2020cheap}), cheap-talk
games can be difficult to solve. Theorem \ref{Thm:commitmentWTA}
can then be used to determine whether commitment is valuable without
solving for cheap talk equilibria, simply by computing the Bayesian-persuasion
optima and checking whether they include a partitional experiment.\footnote{Recent research provides a large toolbox for solving Bayesian-persuasion
problems, including concavification \citep{kamenicagentzkow2011},
price-theoretic approaches (\citealt{kolotilin2018optimal}; \citealt{dworczak2019simple}),
duality (\citealt{dworczak2024persuasion}), and optimal-transport
theory (\citealt*{kolotilin2023persuasion}). \citet{bergemann2016bayes}
show that persuasion problem can be formulated as a linear program;
it is well known that linear programs can be computed in polynomial
time. In contrast, \citet{babichenko2023algorithmic} establish that
it is NP-hard to approximate Sender's maximum payoff in cheap-talk,
or even to determine if that payoff is strictly greater than in a
babbling equilibrium. For a survey of computational approaches to
Bayesian persuasion, see \citet{dughmi2017algorithmic}.} 

The prosecutor-judge example also illustrates the distinction between
the if-and-only-if result in Theorem \ref{Thm:commitmentWTA} and
the unidirectional Theorem \ref{Thm:commitmentWTP}. Recall that Theorem
\ref{Thm:commitmentWTP} does not claim that the value of commitment
is positive only if randomization is valuable in cheap talk. The prosecutor-judge
example provides a counterexample to such a claim. In the cheap-talk
game, the prosecutor has no value for randomization: with or without
it, he never obtains any convictions. Yet, the prosecutor obviously
values commitment.

Finally, the prosecutor-judge example also helps illustrate what Theorem
\ref{Thm:commitmentWTA} does \emph{not} say. Prohibiting randomization
would not mean commitment is not valuable. Suppose that the prosecutor
is endowed with commitment, but is legally obliged to use only partitional
experiments. In that case, the prosecutor would provide a fully informative
experiment, obtaining a conviction with 30\% probability. That is
still better than his cheap-talk payoff of no convictions.

\subsection*{Related literature}

Our paper connects the literatures on cheap talk (\citealt{crawford1982strategic})
and Bayesian persuasion (\citealt{kamenicagentzkow2011}). \citet{min2021bayesian}
and \citet*{lipnowski2022persuasion} examine environments with limited
commitment that are a mixture of cheap talk and Bayesian persuasion.
In contrast, we focus on the question of when cheap talk and Bayesian
persuasion yield the same payoff to Sender.\footnote{\citet{perez2014interim} and \citet{koessler2023informed} examine
the circumstances under which Sender attains his Bayesian persuasion
payoff even if he learns the state prior to selecting the experiment.}

\citet{glazer2006study} and \citet{sher2011credibility} consider
disclosure games and derive conditions on preferences that imply that
Receiver values neither commitment nor randomization. \citet{arieli2025communicationleadefficiency}
establish connections between randomization and efficiency under both
cheap talk and Bayesian persuasion.

Several papers examine value of commitment under the assumption that
Sender has state-independent preferences. When the action space is
finite, as in our framework, \citet{lipnowski2020cheap} show that
(for almost every prior) Sender either: (i) obtains his ideal payoff
in cheap talk, or (ii) values commitment; \citet{best2024persuasion}
show that (for almost every prior) Sender either: (i) obtains his
ideal payoff under the prior, or (ii) values randomization. \citet{titovazhang2024persuasion}
establish a connection between randomization and the attainability
of the Bayesian persuasion payoff under verifiable messages. \citet{corrao2023bounds}
examine Sender's payoff under cheap talk, mediation, and Bayesian
persuasion. They establish that Sender does not value commitment if
his payoffs are the same under mediation and Bayesian persuasion.

In the context of mechanism design, value of commitment and value
of randomization have been studied separately. Mechanism design with
limited commitment has been studied by \citet{akbarpourLi} and \citet{doval2022mechanism},
among others. Value of randomization in mechanism design has been
widely recognized in single-agent multi-product monopolist settings
(e.g., \citealt{manelli2006bundling}). In contrast, with two or more
agents, \citet{chen2019equivalence} establish that if agents' types
are atomless and independently distributed, randomization is never
valuable.

\section{Set-up and definitions}

\subsubsection*{Preference and beliefs}

Receiver (she) has a utility function $u_{R}\left(a,\omega\right)$
that depends on her action $a\in A$ and the state of the world $\omega\in\Omega$.
Both $A$ and $\Omega$ are finite; our analysis relies heavily on
this assumption.\footnote{At the risk of being excessively philosophical, we consider environments
with finite $A$ and $\Omega$ to be more realistic; the use of infinite
sets often provides tractability but rarely improves realism. We discuss
the role of the finiteness assumption in Online Appendix \ref{subsec:finite}.} For any finite set $X$, we denote its cardinality by $\left|X\right|$.
Sender (he) has a utility function $u_{S}\left(a,\omega\right)$ that
depends on Receiver's action and the state. The players share an interior
common prior $\mu_{0}$ on $\Omega$. We say action $a^{*}$ is $i$'s
\emph{ideal action in $\omega$} if $a^{*}\in\argmax_{a\in A}u_{i}(a,\omega)$.

\subsubsection*{Environments, shares, and genericity}

We refer to the pair $\left(u_{S},u_{R}\right)$ as the (preference)\emph{
environment}. 

Since $u_{S}$ and $u_{R}$ have a finite domain, they are bounded.
We further restrict our attention to environments where $u_{S}$ and
$u_{R}$ take values in some fixed interval, which, without loss of
generality, we set to $\left[0,1\right]$. Under these assumptions,
the set of all environments is $\left[0,1\right]^{2\left|A\right|\left|\Omega\right|}$.
When we say that a claim holds for a $\gamma$ share of environments,
we simply mean that the set of environments where the claim holds
has Lebesgue measure $\gamma$ on $\mathbb{R}{}^{2\left|A\right|\left|\Omega\right|}$.

We say a set of environments is \emph{generic} if it has Lebesgue
measure one on $\mathbb{R}{}^{2\left|A\right|\left|\Omega\right|}$.\footnote{Our results also hold if we use a topological rather than a measure-theoretic
notion of genericity. See Footnotes \ref{fn:puq-topology} and \ref{fn:scant-topology}
in the Appendix.} When we say that a claim holds \emph{generically}, we mean that it
holds for a generic set of environments.\footnote{\citet{lipnowski2020equivalence}, who focuses on finite action and
state spaces as we do, establishes that commitment has no value when
Sender's value function over Receiver's beliefs is continuous. Such
continuity, however, holds for a zero share of environments. In contrast,
we focus on results that hold generically.} When we say that, given $A$, a claim holds \emph{generically as
$\left|\Omega\right|\rightarrow\infty$}, we mean that the share of
environments where the claim does not hold converges to zero as $\left|\Omega\right|\rightarrow\infty$.

\subsubsection*{Cheap talk, Bayesian persuasion, and value of commitment}

Let $M$ be a finite message space with $|M|>\max\{|\Omega|,|A|\}$.\footnote{Our results concern Sender's payoffs under cheap talk, Bayesian persuasion,
and restriction to partitional strategies in those models. To derive
Sender's maximal payoff, it is without loss of generality to set $|M|\geq|\Omega|$
for cheap talk (\citealt{Matthews90}), $|M|\geq\min\{|\Omega|,|A|\}$
for Bayesian persuasion (\citealt{kamenicagentzkow2011}), and $|M|\geq|\Omega|$
for partitional strategies (trivially). Therefore, assuming $|M|\geq|\Omega|$
would suffice for our results. However, further assuming $|M|\geq|A|+1$
simplifies the proofs of Lemmas \ref{lem:Receiver-not-mix} and \ref{lem:permissive-implies-obedient}.} Sender chooses a messaging strategy $\sigma:\Omega\rightarrow\Delta M$.
Receiver chooses an action strategy $\rho:M\rightarrow\Delta A$. 

A profile of strategies $(\sigma,\rho)$ induces expected payoffs
\[
U_{i}(\sigma,\rho)=\sum_{\omega,m,a}\,\mu_{0}(\omega)\,\sigma(m|\omega)\,\rho(a|m)\,u_{i}(a,\omega)\quad\text{for }i=S,R.
\]

A profile $(\sigma^{*},\rho^{*})$ is\emph{ S-BR} if $\sigma^{*}\in\argmax_{\sigma}U_{S}(\sigma,\rho^{*})$.
A profile $(\sigma^{*},\rho^{*})$ is\emph{ R-BR} if $\rho^{*}\in\argmax_{\rho}U_{R}(\sigma^{*},\rho)$.

Sender's \emph{ideal payoff} is the maximum $U_{S}$ induced by any
profile.

A \textit{cheap-talk equilibrium} is a profile that satisfies S-BR
and R-BR.\footnote{This definition may seem unconventional since it uses Nash equilibrium,
rather than perfect Bayesian equilibrium, as the solution concept.
In cheap-talk games, however, the set of equilibrium outcomes (joint
distributions of states, messages, and actions) is exactly the same
whether we apply Nash or perfect Bayesian as the equilibrium concept.
The formulation in terms of Nash equilibria streamlines the proofs.} We define (Sender's)\emph{ cheap-talk payoff} as the maximum $U_{S}$
induced by a cheap-talk equilibrium.\footnote{Throughout, we examine the value of commitment to Sender; hence the
focus on Sender's payoff. The set of equilibrium payoffs is compact
so a maximum exists. We are interested in whether Sender can attain
his commitment payoff in \emph{some} equilibrium, so we focus on Sender-preferred
equilibria. Unless no information is the commitment optimum, it cannot
be that \emph{every} cheap-talk equilibrium yields the commitment
payoff since every cheap-talk game admits a babbling equilibrium.
(And even if no information is the commitment optimum, there might
be cheap-talk equilibria that yield a lower payoff to Sender than
babbling.)}

A \textit{persuasion profile} is a profile that satisfies R-BR. The
(Bayesian)\emph{ persuasion payoff} is the maximum $U_{S}$ induced
by a persuasion profile.\footnote{\citet*{lipnowski2024perfect} establish that, with finite $A$ and
$\Omega$, this is generically the only payoff that Sender could attain
in an equilibrium of a Bayesian persuasion game.} We refer to a persuasion profile that yields the persuasion payoff
as \emph{optimal}.

We say that\emph{ commitment is valuable} if the persuasion payoff
is strictly higher than the cheap-talk payoff. Otherwise, we say \emph{commitment
has no value}.

\subsubsection*{Partitional strategies and value of randomization}

A messaging strategy $\sigma$ is \emph{partitional} if for every
$\omega$, there is a message $m$ such that $\sigma\left(m|\omega\right)=1$.
A profile $\left(\sigma,\rho\right)$ is a \emph{partitional profile}
if $\sigma$ is partitional.\footnote{Our focus is on the connection between Sender's value of commitment
and Sender's randomization. Consequently, the definition of a partitional
profile only concerns Sender's strategy. That said, along the way
we will establish a result about Receiver playing pure strategies
(see Lemma \ref{lem:Receiver-not-mix}).} The \emph{partitional persuasion payoff} is the maximum $U_{S}$
induced by a partitional persuasion profile. The \emph{partitional
cheap-talk payoff} is the maximum $U_{S}$ induced by a partitional
cheap-talk equilibrium.\footnote{A partitional cheap-talk equilibrium always exists because the babbling
equilibrium outcome can be supported by Sender always sending the
same message.}

We say that \emph{committed Sender values randomization }if the persuasion
payoff is strictly higher than the partitional persuasion payoff.
We say that \emph{cheap-talk Sender values randomization} if the cheap-talk
payoff is strictly higher than the partitional cheap-talk payoff.

\section{Value of commitment and randomization under commitment}

In this section, we consider a Sender with commitment power, who can
choose his messaging strategy prior to being informed of the state.
We ask whether this commitment power makes Sender strictly better
off. We link the value of commitment to Sender's behavior under commitment,
in particular to whether Sender has a strict preference for randomization.
\begin{thm}
\label{Thm:commitmentWTA}Generically, commitment is valuable if and
only if committed Sender values randomization.
\end{thm}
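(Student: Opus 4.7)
The plan is to prove the contrapositive: generically, commitment has no value if and only if committed Sender has no value for randomization. I would establish this equivalence through two constructions performed on a common full-measure set of environments. Construction~(a): every cheap-talk equilibrium attaining the persuasion payoff can be collapsed into a partitional persuasion profile with the same payoff. Construction~(b): every optimal partitional persuasion profile is itself a cheap-talk equilibrium. Direction~(a) gives ``commitment has no value $\Rightarrow$ randomization has no value'', and direction~(b) gives the converse.

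For construction~(a), the first step would be to invoke Lemma~\ref{lem:Receiver-not-mix} so that, generically, in every cheap-talk equilibrium Receiver responds to each on-path message with a pure action. Given a pure-strategy Receiver, Sender in state $\omega$ strictly prefers messages inducing the action that maximizes $u_{S}(\cdot,\omega)$ over the set of on-path actions, and generically this maximizer is unique. The equilibrium therefore induces a deterministic mapping $\alpha:\Omega\to A$. Grouping on-path messages by the action they induce yields a partitional strategy with cells $\alpha^{-1}(a)$. By linearity of Receiver's expected utility in beliefs, action $a$ remains R-BR at the aggregated posterior $\mu_{0}(\cdot\mid\alpha^{-1}(a))$, and since the distribution over actions conditional on each state is unchanged the payoff is preserved, so the resulting partitional profile is a persuasion profile that attains the persuasion payoff.

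Construction~(b) is the step I expect to be the main obstacle. Let $(\sigma^{*},\rho^{*})$ be an optimal partitional persuasion profile with partition $\{P_{1},\dots,P_{K}\}$, induced posteriors $\mu_{k}$, and Receiver best responses $a_{k}$. Suppose, for a contradiction, that $\sigma^{*}$ is not S-BR, so that some state $\omega\in P_{k}$ and some $j$ satisfy $u_{S}(a_{j},\omega)>u_{S}(a_{k},\omega)$. I would consider the $\varepsilon$-perturbation that, in state $\omega$, sends $m_{j}$ with probability $\varepsilon$ and $m_{k}$ with probability $1-\varepsilon$. Since the set of posteriors arising from any partition of $\Omega$ is finite, the set of environments on which Receiver strictly prefers $a_{k}$ at $\mu_{k}$ (for every occurring $k$) has full Lebesgue measure; on this set, Receiver's best responses to $m_{j}$ and $m_{k}$ remain unchanged for small~$\varepsilon$, so the perturbation is still a valid persuasion profile. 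Its payoff exceeds that of $(\sigma^{*},\rho^{*})$ by $\varepsilon\mu_{0}(\omega)\bigl(u_{S}(a_{j},\omega)-u_{S}(a_{k},\omega)\bigr)>0$, contradicting optimality. Lemma~\ref{lem:permissive-implies-obedient} presumably formalizes this local-optimality-implies-S-BR step in the uniform form that is needed.

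The principal difficulty is to certify that a single full-measure set of environments serves both directions simultaneously. One must rule out jointly: Receiver indifference at every posterior obtainable from some partition of $\Omega$, ties in Sender's ranking of on-path actions within any given state, and various degeneracies associated with non-uniqueness of persuasion optima. Lemmas~\ref{lem:Receiver-not-mix} and~\ref{lem:permissive-implies-obedient} appear to be tailored to exactly these conditions, so combining them together with the finiteness of the candidate partitions should yield the desired generic environment set and complete the proof.
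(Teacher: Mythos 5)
Your proposal is correct and follows essentially the same route as the paper: your construction~(a) is the paper's Proposition~\ref{Lemma:purity} (built from Lemmas~\ref{lem:Receiver-not-mix} and~\ref{lem:Sender-not-mix}), and your construction~(b) is the paper's chain (i)$\Rightarrow$(iii)$\Rightarrow$(ii) in Theorem~1$'$, resting on the same partitional-unique-response genericity and the same $\varepsilon$-perturbation contradiction. The only slips are minor: the local-optimality-implies-S-BR step is carried out directly in the proof of Theorem~1$'$ rather than in Lemma~\ref{lem:permissive-implies-obedient} (which concerns joint inclusivity and felicity for Theorem~\ref{Thm:how-often}), and in construction~(b) you should also redefine Receiver's off-path responses to mimic some on-path response so that deviations to unsent messages are unprofitable.
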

Here we provide an intuition about the only-if direction of the theorem.
We postpone the discussion of the converse until the next section,
as the intuition for it is related to that for Theorem \ref{Thm:commitmentWTP}.
Formal proofs are in the Appendix.\footnote{Theorem \ref{Thm:commitmentWTA} can be extended to establish a threefold
equivalence. Generically, the following imply each other: (i) commitment
is valuable, (ii) committed Sender values randomization, and (iii)
any optimal persuasion profile induces a belief under which Receiver
has multiple optimal actions (see Theorem 1$'$ in the Appendix).}

For any $E\subseteq\Omega$, let $\mu_{E}$ denote the posterior belief
induced by learning that $\omega$ is in $E$. For a generic set of
environments, Receiver's optimal action given any such $\mu_{E}$
is unique and remains optimal in a neighborhood of beliefs around
$\mu_{E}$. 

Now, suppose that there is a partitional optimal persuasion profile
$\left(\sigma,\rho\right)$. Let $M_{\sigma}$ be the set of messages
that are sent under $\sigma$. Because $\sigma$ is partitional, each
$m\in M_{\sigma}$ is associated with a subset of the state space,
namely $\Omega_{m}\equiv\left\{ \omega|\sigma\left(m|\omega\right)=1\right\} $.
For each $m\in M_{\sigma}$, let $\mu_{m}$ be the belief induced
by $m$, and let $a_{m}$ be Receiver's (uniquely) optimal action
given $\mu_{m}$. As noted above, $a_{m}$ remains optimal in a neighborhood
of beliefs around $\mu_{m}$.

Key to the proof is to note that every action $a_{m}$ taken in equilibrium
must be Sender's preferred action, among the actions taken in equilibrium,
in all states where action $a_{m}$ is taken. In other words, let
$A^{*}=\left\{ a_{m}|m\in M_{\sigma}\right\} $; for every $m\in M_{\sigma}$,
for every $\omega\in\Omega_{m}$, we have $u_{S}\left(a_{m},\omega\right)\geq u_{S}\left(a_{m'},\omega\right)$
for all $a_{m'}\in A^{*}$. Why does this hold? If it were not the
case, Sender could attain a higher payoff with an alternative strategy:
if $u_{S}\left(a_{m},\omega\right)<u_{S}\left(a_{m'},\omega\right)$
for some $a_{m'}\in A^{*}$, $\omega\in\Omega_{m}$, sender could
send $m'$ in $\omega$ with a small probability and still keep $a_{m}$
optimal given $m$. 

Finally, the fact that for every $m\in M_{\sigma}$, $u_{S}\left(a_{m},\omega\right)\geq u_{S}\left(a_{m'},\omega\right)$
for all $a_{m'}\in A^{*}$ and all $\omega\in\Omega_{m}$ implies
that $\left(\sigma,\rho\right)$ is a cheap-talk equilibrium.\footnote{Deviating to an on-path message $\hat{m}\in M_{\sigma}$ cannot be
profitable by the inequality $u_{S}\left(a_{m},\omega\right)\geq u_{S}\left(a_{\hat{m}},\omega\right)$
for $\hat{m}\in M_{\sigma}$; for any off-path message $\hat{m}\notin M_{\sigma}$,
we can just set $\text{\ensuremath{\rho}\ensuremath{\left(\cdot|\hat{m}\right)}}=\ensuremath{\rho}\left(\cdot|m^{*}\right)$
for some $m^{*}\in M_{\sigma},$ thus ensuring that such a deviation
is also not profitable.} Hence, commitment is not valuable.

Theorem \ref{Thm:commitmentWTA} only tells us that, generically,
commitment has \emph{zero }value if and only if randomization has
\emph{zero }value. A natural question is whether, generically, small
value of commitment implies or is implied by small value of randomization.
The answer is no. We construct a positive measure of environments
where the value of randomization is arbitrarily small but the value
of commitment is not (Online Appendix \ref{subsec:Large-comm-small-rand}),
and a positive measure of environments where the value of commitment
is arbitrarily small but the value of randomization is not (Online
Appendix \ref{subsec:Small-comm-large-rand}).

\section{Value of commitment and randomization in cheap talk}

In this section, we consider a Sender without commitment power who
engages in a cheap-talk game. We ask whether he would be strictly
better off if he had commitment power. We link the value of such commitment
to Sender's behavior in Sender-preferred cheap-talk equilibria, in
particular to whether Sender necessarily randomizes in such equilibria.
\begin{thm}
\label{Thm:commitmentWTP}Generically, commitment is valuable if cheap-talk
Sender values randomization.
\end{thm}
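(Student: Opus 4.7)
The plan is to prove the contrapositive: if commitment is \emph{not} valuable, then generically cheap-talk Sender does not value randomization. So suppose the persuasion payoff equals the cheap-talk payoff. By the only-if direction of Theorem~\ref{Thm:commitmentWTA}, which holds on a generic set of environments, committed Sender does not value randomization either, so I may pick a partitional persuasion profile $(\sigma,\rho)$ that attains the persuasion payoff.

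Next I would recycle the observation sketched right after Theorem~\ref{Thm:commitmentWTA}. On the generic set of environments where every posterior of the form $\mu_E$ with $E\subseteq\Omega$ admits a unique Receiver-optimal action that remains optimal in an open neighborhood of beliefs, the following must hold for $(\sigma,\rho)$: writing $M_\sigma$ for the on-path messages, $\Omega_m=\{\omega:\sigma(m|\omega)=1\}$, and $A^*=\{a_{m'}:m'\in M_\sigma\}$, we have $u_S(a_m,\omega)\geq u_S(a_{m'},\omega)$ for every $m\in M_\sigma$, every $\omega\in\Omega_m$, and every $a_{m'}\in A^*$. Otherwise Sender could profitably perturb by shifting a small mass from $m$ to $m'$ in state $\omega$; the perturbation leaves $a_m$ strictly optimal given $m$ by the neighborhood-stability of the optimal action, contradicting persuasion optimality of $(\sigma,\rho)$.

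This inequality is exactly the obedience condition needed for $(\sigma,\rho)$ to be a cheap-talk equilibrium: no deviation to an on-path $\hat m\in M_\sigma$ is profitable in any state, and for each off-path $\hat m\notin M_\sigma$ I can assign $\rho(\cdot|\hat m)=\rho(\cdot|m^*)$ for some fixed $m^*\in M_\sigma$, blocking off-path deviations as well. Hence $(\sigma,\rho)$ is a partitional cheap-talk equilibrium whose payoff equals the persuasion payoff, which by hypothesis equals the cheap-talk payoff; the partitional cheap-talk payoff therefore coincides with the cheap-talk payoff, which is the desired conclusion.

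The main obstacle is a bookkeeping one, namely verifying that a \emph{single} generic set of environments simultaneously (i) satisfies the only-if direction of Theorem~\ref{Thm:commitmentWTA} and (ii) supports the reinterpretation of a partitional persuasion optimum as a cheap-talk equilibrium. Since the proof sketch of Theorem~\ref{Thm:commitmentWTA} relies on exactly the same uniqueness-and-stability property of Receiver's optimal action at posteriors $\mu_E$, both conditions follow from the same genericity lemma, and the two sets can be taken to coincide. With this in hand the argument above is a short deduction, and the main intellectual content is inherited from Theorem~\ref{Thm:commitmentWTA}.
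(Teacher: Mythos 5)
Your first step is where the argument breaks. From ``commitment is not valuable'' you want to conclude ``committed Sender does not value randomization,'' and you attribute this to the only-if direction of Theorem~\ref{Thm:commitmentWTA}. But the only-if direction is the implication ``commitment valuable $\Rightarrow$ committed Sender values randomization,'' whose contrapositive is ``randomization not valuable $\Rightarrow$ commitment not valuable'' --- the reverse of what you need. The implication you actually need is the \emph{if} direction, and it is not delivered by the partitional-unique-response / neighborhood-stability genericity you invoke. The paper's proof of that direction is Proposition~\ref{Lemma:purity}, which rests on the scant-indifferences genericity and two nontrivial lemmas: Lemma~\ref{lem:Receiver-not-mix} (a perturbation argument splitting $\mu_m$ into $\mu_m\pm\varepsilon x$ to show Receiver cannot mix on path in any payoff-attaining R-BR profile) and Lemma~\ref{lem:Sender-not-mix} (de-randomizing Sender). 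Example~\ref{Ex:value-randomization-not-value-commitment} in the Online Appendix shows why this cannot be skipped: there is a (non-generic) environment in which commitment has no value yet every optimal persuasion profile is non-partitional, precisely because the cheap-talk equilibrium attaining the persuasion payoff has both players mixing. Ruling that configuration out generically is the entire content of the theorem, and your proposal never engages with it.

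Your closing paragraph compounds the problem by asserting that both steps ``follow from the same genericity lemma,'' namely uniqueness and stability of Receiver's best response at the posteriors $\mu_E$. That property (partitional-unique-response, Lemma~\ref{lemma:R-unique-response-genericity}) supports only the easy direction --- that a partitional persuasion optimum, \emph{once you have one}, is a cheap-talk equilibrium. If one instead reads your first step as invoking the full biconditional of Theorem~\ref{Thm:commitmentWTA} as a black box, the deduction becomes logically valid, but then all of the content is hidden inside the if direction of that theorem, which the paper proves via exactly the Key Proposition that also yields Theorem~\ref{Thm:commitmentWTP} in one line. The paper's route is: commitment has no value $\Rightarrow$ some cheap-talk equilibrium attains the persuasion payoff $\Rightarrow$ (Lemma~\ref{lem:Receiver-not-mix}, scant-indifferences) Receiver is pure on path $\Rightarrow$ (Lemma~\ref{lem:Sender-not-mix}) there is a partitional cheap-talk equilibrium attaining that payoff, hence the partitional cheap-talk payoff equals the cheap-talk payoff. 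You need to supply, or explicitly cite, that chain; the neighborhood-stability argument alone does not produce it.
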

Theorem \ref{Thm:commitmentWTP} and the if-direction of Theorem \ref{Thm:commitmentWTA}
both derive from the following result. Generically, if a cheap-talk
equilibrium yields the persuasion payoff, then there is a partitional
$\sigma$ and a (pure-strategy) $\rho$ such that $(\sigma,\rho)$
is a cheap-talk equilibrium and yields the persuasion payoff. We build
this result (Proposition \ref{Lemma:purity} in Appendix \ref{subsec:purity})
in two steps.

The first step (Lemma \ref{lem:Receiver-not-mix}) shows that, generically,
if $\left(\sigma,\rho\right)$ is R-BR and yields the persuasion payoff,
then $\rho$ must be pure on-path. Consider toward contradiction that
there is a message $m$ sent with positive probability under $\sigma$,
and there are two distinct actions, say $a$ and $a'$, in the support
of $\rho\left(\cdot|m\right)$. It must be that both Sender and Receiver
are indifferent between $a$ and $a'$ under $\mu_{m}$, the belief
induced by $m$: Receiver has to be indifferent because $\left(\sigma,\rho\right)$
is R-BR; Sender has to be indifferent because $\left(\sigma,\rho\right)$
yields the persuasion payoff, which maximizes $U_{S}$ over all persuasion
profiles.\footnote{If Sender strictly prefers one action over the other, say $a$ over
$a'$, at $\mu_{m}$, then Sender would obtain a higher payoff if
Receiver always takes $a$ following $m$ (which would remain R-BR
given Receiver's indifference).} The result then follows from establishing that such a coincidence
of indifferences generically cannot arise when Sender is optimizing.
For some intuition for why this is the case, consider Figure \ref{fig:Indifference-optimality}
which illustrates this result when there are three states. Suppose
$a_{1}$ and $a_{2}$ are in the support of $\rho\left(\cdot|m\right)$.
Region $R_{i}$ denotes beliefs where Receiver prefers $a_{i}$. Region
$S_{i}$ denotes beliefs where Sender prefers $a_{i}$. Generically,
the border between $R_{1}$ and $R_{2}$ is distinct from the border
between $S_{1}$ and $S_{2}$ and thus the two borders have at most
one intersection, $\mu_{m}$. Moreover, generically $\mu_{m}$ (if
it exists) is an interior belief. But now, Sender could deviate to
an alternate strategy that induces beliefs $\mu_{1}$ and $\mu_{2}$
instead of $\mu_{m}$, with Receiver still indifferent between $a_{1}$
and $a_{2}$ at both $\mu_{1}$ and $\mu_{2}$. Suppose that Receiver
takes action $a_{i}$ following belief $\mu_{i}$. This strategy is
still R-BR for Receiver and gives Sender a strictly higher payoff.
Thus, we have reached a contradiction. With more than three states
and more than two actions, the proof that the coincidence of indifferences
generically cannot arise is conceptually similar but notationally
more involved. It is presented in the Appendix as Lemma \ref{lemma:full-rank-genericity}.

\begin{figure}
\begin{centering}
\includegraphics{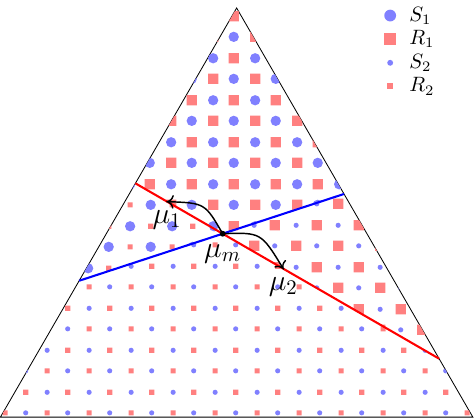}
\par\end{centering}
\caption{\protect\label{fig:Indifference-optimality}Indifference incompatible
with optimality}
\end{figure}

The second step (Lemma \ref{lem:Sender-not-mix}) shows that, generically,
if $\left(\sigma,\rho\right)$ is a cheap-talk equilibrium that yields
the persuasion payoff, and $\rho$ is a pure strategy on-path, then
there is a partitional cheap-talk equilibrium that yields the persuasion
payoff. This is easy to see. Generically, for any $\omega$ and any
$a\neq a'$, we have $u_{S}\left(a,\omega\right)\neq u_{S}\left(a',\omega\right)$.
Now, consider some cheap-talk equilibrium $\left(\sigma,\rho\right)$,
with $\rho$ pure on-path, that yields the persuasion payoff. If $\sigma$
is partitional, our result is immediate. Suppose to the contrary that
in some $\omega$, both $m$ and $m'$ are sent with positive probability.
Then, $m$ and $m'$ must induce the same action: if $m$ induces
some $a$ and $m'$ induces a distinct $a'$, the fact that $u_{S}\left(a,\omega\right)\neq u_{S}\left(a',\omega\right)$
would mean that $\sigma$ cannot be S-BR. Given that any two messages
sent in $\omega$ induce the same action, we can define $\rho\left(\sigma\left(\omega\right)\right)$
as the action that Receiver takes in state $\omega$ given $\left(\sigma,\rho\right)$.

Now, we can consider an alternative, partitional profile $\left(\hat{\sigma},\hat{\rho}\right)$.
Let $f$ be any injective function from $A$ to $M$. Let $\hat{\sigma}\left(\omega\right)=f\left(\rho\left(\sigma\left(\omega\right)\right)\right)$
and $\hat{\rho}\left(f\left(a\right)\right)=a$. It is immediate that
$\left(\hat{\sigma},\hat{\rho}\right)$ is also a cheap-talk equilibrium
and yields the persuasion payoff.

It is perhaps worth noting that Theorems \ref{Thm:commitmentWTA}
and \ref{Thm:commitmentWTP} jointly imply the following:
\begin{cor}
Generically, if cheap-talk Sender values randomization, then committed
Sender values randomization.
\end{cor}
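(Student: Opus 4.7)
The plan is to chain the two preceding theorems through the common middle term ``commitment is valuable.'' Theorem \ref{Thm:commitmentWTP} gives that, generically, if cheap-talk Sender values randomization then commitment is valuable; the only-if direction of Theorem \ref{Thm:commitmentWTA} gives that, generically, if commitment is valuable then committed Sender values randomization. Composing these two implications yields the corollary. So I would first suppose that cheap-talk Sender values randomization, invoke Theorem \ref{Thm:commitmentWTP} to conclude that the persuasion payoff strictly exceeds the cheap-talk payoff, and then invoke the only-if direction of Theorem \ref{Thm:commitmentWTA} to conclude that the persuasion payoff strictly exceeds the partitional persuasion payoff, which is precisely the statement that committed Sender values randomization.

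The only bookkeeping issue is that the two theorems hold on (possibly different) generic subsets of the environment space $[0,1]^{2|A||\Omega|}$, say $G_1$ for Theorem \ref{Thm:commitmentWTA} and $G_2$ for Theorem \ref{Thm:commitmentWTP}. Since the intersection of two sets of full Lebesgue measure in $\mathbb{R}^{2|A||\Omega|}$ again has full measure, $G \equiv G_1 \cap G_2$ is itself generic, and on $G$ both implications hold simultaneously, so the composition is valid on a generic set. There is no substantive obstacle here, since the analytical work has already been absorbed into the proofs of Theorems \ref{Thm:commitmentWTA} and \ref{Thm:commitmentWTP}; the corollary is a purely logical consequence, and the only thing to verify is the trivial measure-theoretic fact about intersections of full-measure sets.
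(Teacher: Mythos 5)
Your proposal is correct and is exactly the paper's argument: the paper states the corollary as a direct joint consequence of Theorems \ref{Thm:commitmentWTA} and \ref{Thm:commitmentWTP}, chained through ``commitment is valuable,'' with the implicit understanding that the intersection of the two generic sets is generic. Nothing is missing.
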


\section{Application to grading}

For an application of our results, we consider their implications
for grading policies. This application also clarifies a sense in which
``randomization'' in the statement of our results need not be interpreted
literally.

Suppose Sender is a school that assigns grades to its students. We
interpret $M$ as the set of potential grades. Each student is characterized
by a vector of attributes. We say an attribute is \emph{relevant}
if an employer values it or the school's value of placing the student
with an employer depends on it. We interpret $\Omega$ as the set
of all possible configurations of the relevant attributes. We maintain
the assumption that $\Omega$ is finite.

Students also have irrelevant attributes. We denote by $X$ as the
set of all possible configurations of the irrelevant attributes. We
assume that the distribution over $X$ is atomless. The school utilizes
a deterministic \emph{grading scheme} $g:\Omega\times X\rightarrow M$.
We say a grading scheme $g$ is \emph{fair} if $g\left(\omega,x\right)=g\left(\omega,x'\right)$
for every $\omega,x,x'$. Otherwise, the scheme is \emph{unfair}. 

For this application, instead of envisioning a single Receiver, we
assume that each student applies to a distinct employer. Each employer
observes the grade $m\in M$ of its applicant and chooses one of finitely
many actions $a\in A$ (e.g., whether to hire the student and if so
for what position). All employers have the same utility function $u_{R}\left(a,\omega\right)$
that depends on the employer's action and the relevant attributes
of the applicant. (If there were a single employer who observed the
grades of all of the applicants, this would effectively provide Sender
with some commitment power because the distribution of messages would
be directly observable to Receiver.) The school's utility is additive
across its students; for each student, the school's payoff $u_{S}\left(a,\omega\right)$
depends on that student's outcome and that student's relevant attributes. 

Under \emph{discretionary grading}, the school freely chooses a grade
to assign to each student, i.e., the school selects any grading scheme
it wishes. The employer only observes its applicant's grade but not
the grading scheme that was used.

Alternatively, the school could implement a (publicly observable)
\emph{grading policy} that restricts the set of schemes that it can
use. A grading policy could be a restriction to one specific grading
scheme. This would make the situation equivalent to Bayesian persuasion.
This is the case even though the grading scheme is deterministic because,
by conditioning the grade on the irrelevant attributes, the school
can implement any distribution of grades conditional on each $\omega$.\footnote{The formulation of experiments as deterministic functions of an expanded
state space was introduced by \citet{gentzkow2017bayesian} and \citet{green2022two}.
It has been further studied in \citet*{brooks2022information} and
\citet*{brooks2024representing}.}

Another type of grading policy is one where the school commits to
a given distribution of grades (\citealt{lin2024credible}). We refer
to such a policy as a\emph{ mandated curve}. For example, the University
of Chicago Law School mandates a pre-specified share of students that
will receive a given narrow range of numerical grades.

More common is commitment to a \emph{GPA cap}. For example, the University
of Chicago Booth School of Business mandates that the average grade
assigned in a given course must not exceed B+.

We say that \emph{the} \emph{school values commitment }if it strictly
prefers to implement any grading policy (full commitment, mandated
curve, GPA cap, etc.) over discretionary grading. We know that any
policy must yield a payoff that is weakly lower than full commitment
and weakly higher than discretionary grading. Consequently, if any
grading policy yields a strictly higher payoff than discretionary
grades, we know that the persuasion payoff (full commitment) exceeds
the cheap talk payoff (discretionary grades).

We say that \emph{the} \emph{school prefers to grade unfairly} if
its optimal grading scheme is unfair. In other words, if the school
were able to commit to a particular grading scheme, it would select
an unfair one.

Theorem \ref{Thm:commitmentWTA} tells us that, generically, the school
values commitment if and only if it prefers to grade unfairly. Thus,
whenever we observe a school mandating a curve or a GPA cap, we know
that the school's ideal policy is unfair.\footnote{Our analysis views the school (that cares about student placements)
and the professor (who is assigning grades) as a single agent. A distinct
motivation for a grading policy such as a GPA cap, outside of our
Sender-Receiver framework, is an agency conflict between the school
and the professor. For example, the professor may wish to give uniformly
high grades in order to avoid student complaints so the school might
impose a GPA cap to mitigate that temptation (\citealt{frankel2014aligned}).
Moreover, a grading policy could have a distinct benefit of aiding
equilibrium coordination about the meaning of grades; our focus on
Sender-preferred equilibria assumes miscoordination away. Finally,
our analysis takes the distribution of relevant attributes as exogenous.
In practice, grading schemes not only provide information about the
students but also incentivize the students to learn the material (\citealt{boleslavsky2015grading}).}

Note, however, that even if we observe a school mandating a curve,
Theorem \ref{Thm:commitmentWTA} does not imply that the school will
implement an unfair scheme if it can only commit to a mandated curve
(i.e., is unable to fully commit to a particular scheme). In Online
Appendix \ref{subsec:Partial}, we analyze whether partial commitment
being valuable (i.e., mandating a curve yields a strictly higher payoff
than discretionary grades) implies that randomization under partial
commitment is valuable (i.e., among the schemes that yield the mandated
curve, every scheme that is optimal is unfair). Under the assumption
that the school's preferences are supermodular, we establish that
this is indeed the case (Theorem \ref{Thm:Curve}). Whether the conclusion
of this result holds when preferences are not supermodular remains
an open question.

\section{How often is commitment valuable?}

Theorems \ref{Thm:commitmentWTA} and \ref{Thm:commitmentWTP} by
themselves do not shed light on the types of environments where commitment
(and thus randomization) are valuable nor about how common such environments
are. This section aims to fill that gap. 

Say that the environment is \emph{felicitous} if Sender obtains his
ideal payoff under cheap talk.\footnote{We thank Roger Myerson for suggesting this term.}
It is immediate that commitment has no value if the environment is
felicitous. The converse, of course, is not generally true (e.g, $u_{S}=-u_{R}$). 

We first show that, for any $\Omega$, the share of environments that
are felicitous is at least $\frac{1}{\left|A\right|^{\left|A\right|}}$.
Thus, in general, the share of environments such that commitment has
no value must be greater than $\frac{1}{\left|A\right|^{\left|A\right|}}$. 

We then establish that, in the limit as $\Omega$ grows large: ($\ast$)
the share of environments that are felicitous converges to exactly
$\frac{1}{\left|A\right|^{\left|A\right|}}$, and ($\ast\ast$) generically,
commitment has no value \emph{if and only if} the environment is felicitous,
i.e., the non-felicitous environments where commitment has no value
(such as $u_{S}=-u_{R}$) become vanishingly rare. Thus, as $\left|\Omega\right|\rightarrow\infty$,
the share of environments such that commitment has no value converges
exactly to $\frac{1}{\left|A\right|^{\left|A\right|}}$. 
\begin{thm}
\label{Thm:how-often}Fix $A$.
\begin{enumerate}
\item[(i)] For any $\Omega$, the share of environments such that commitment
has no value is at least $\frac{1}{\left|A\right|^{\left|A\right|}}$
.
\item[(ii)] As $\left|\Omega\right|\rightarrow\infty$, 
\begin{enumerate}
\item generically, commitment has no value if and only if Sender obtains
his ideal payoff under cheap talk.
\item the share of environments such that commitment has no value converges
to $\frac{1}{\left|A\right|^{\left|A\right|}}$.
\end{enumerate}
\end{enumerate}
\end{thm}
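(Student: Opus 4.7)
The plan is to isolate a simple sufficient condition for commitment to have no value, show that its Lebesgue measure is at least $1/|A|^{|A|}$ uniformly in $\Omega$, and then argue that it is generically also necessary so that the share converges to this number as $|\Omega|\to\infty$.

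Write $a_S^*(\omega)=\argmax_a u_S(a,\omega)$ (generically unique) and let the \emph{ideal partition} be $\{\Omega_a\}_{a\in A}$ with $\Omega_a=\{\omega:a_S^*(\omega)=a\}$. Say event $\mathcal{I}$ occurs if this partition is obedient: for every non-empty $\Omega_a$, action $a$ maximizes $a'\mapsto\sum_{\omega\in\Omega_a}\mu_0(\omega)u_R(a',\omega)$. On $\mathcal{I}$, the ideal partition simultaneously achieves Sender's pointwise maximum $\sum_\omega\mu_0(\omega)\max_a u_S(a,\omega)$ (hence the persuasion payoff) and forms a cheap-talk equilibrium (truthful ``ideal-action'' reports leave Sender with no profitable deviation since his report is the pointwise maximizer of $u_S(\cdot,\omega)$), so commitment has no value. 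Conditioning on $u_S$ fixes $\{\Omega_a\}$; because the $u_R$-coordinates are iid uniform, within each non-empty bin the $|A|$ sums $\sum_{\omega\in\Omega_a}\mu_0(\omega)u_R(a',\omega)$ are iid over $a'$, so $a$ is the maximizer with probability $1/|A|$ by symmetry, and different bins use disjoint $u_R$-coordinates so these events are independent across bins. Hence $P(\mathcal{I}\mid u_S)=(1/|A|)^{|A_+(u_S)|}$, where $|A_+(u_S)|$ is the number of non-empty bins. Since $|A_+|\leq|A|$, integrating gives $P(\mathcal{I})\geq 1/|A|^{|A|}$, which yields Part (i).

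For Part (ii) I would first observe that $P(\mathcal{I})\to 1/|A|^{|A|}$: as $|\Omega|\to\infty$, a coupon-collector argument on $|\Omega|$ iid uniform labels in $A$ gives $P(|A_+|=|A|)\to 1$, so $E[(1/|A|)^{|A_+|}]\to(1/|A|)^{|A|}$. Combined with Theorem~\ref{Thm:commitmentWTA}, it then suffices to show that outside a null set of environments, $\neg\mathcal{I}$ implies committed Sender values randomization---equivalently, that generically any partitional persuasion optimum must coincide with the ideal partition.

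This last step is the main obstacle. My argument would be: tightness of any obedience constraint at any partitional vertex of the persuasion LP is a single codimension-one linear equation on $u_R$, and since there are only finitely many (partition, obedience-constraint) pairs, the union of these hyperplanes has Lebesgue measure zero. On the complement, any partitional persuasion optimum $P^*$ has every obedience constraint strictly slack at $P^*$, so small perturbations of $P^*$ within the simplex remain feasible; linearity of Sender's objective then forces $P^*$ to be the unconstrained maximizer of $\sum_\omega\mu_0(\omega)u_S(\cdot,\omega)$ over the simplex, i.e., the ideal partition, which in particular is feasible ($\mathcal{I}$ holds). Establishing this KKT/perturbation step carefully across all finitely many partitions is the technical core of the convergence claim.
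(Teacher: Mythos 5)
Your Part (i) is correct and is essentially the paper's argument: your event $\mathcal{I}$ is exactly what the paper calls a \emph{felicitous} environment, and the conditional-probability computation $P(\mathcal{I}\mid u_S)=(1/|A|)^{|A_+(u_S)|}\geq 1/|A|^{|A|}$, using disjointness of the $u_R$-coordinates across bins, is the paper's Lemma on the measure of felicity. Likewise your observation that $P(\mathcal{I})\to 1/|A|^{|A|}$ because all bins are eventually non-empty matches the paper.

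Part (ii), however, has a genuine gap, and the step you flag as ``the technical core'' is in fact false as stated. You claim that \emph{outside a null set} (for fixed $\Omega$), $\neg\mathcal{I}$ implies committed Sender values randomization, equivalently that generically every partitional persuasion optimum is the ideal partition. This fails on sets of \emph{positive} measure for fixed finite $\Omega$: for instance, on an open set of environments where no information is the persuasion optimum and the default action is not Sender's ideal action in every state, commitment has no value, a partitional (babbling) optimum exists, yet the environment is not felicitous. The paper itself notes that for small state spaces the cheap-talk and persuasion payoffs can coincide while being strictly below Sender's ideal payoff. The flaw in your KKT/perturbation sketch is that a partitional optimum sits at a \emph{vertex} of the outcome-distribution polytope, so slackness of the obedience constraints only licenses perturbations that shift mass toward actions \emph{already induced} (or toward actions that happen to be Receiver-optimal at the relevant degenerate posterior); shifting mass to a never-induced action $a'$ activates a brand-new obedience constraint that need not hold. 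Hence linearity only yields that each induced action is Sender-preferred among the induced actions, not that the optimum is the unconstrained pointwise maximizer. The missing ingredient is precisely the paper's \emph{joint-inclusivity} condition (for every $a$ there is a state where $a$ is the unique ideal action for both players), which guarantees that every action must be induced at a persuasion optimum and whose measure tends to $1$ only as $|\Omega|\to\infty$ --- it is bounded away from $1$ for fixed $\Omega$. Your argument needs to be restated as: the implication $\{\text{commitment has no value}\}\Rightarrow\mathcal{I}$ holds generically \emph{within} the jointly-inclusive environments, and the complement of joint-inclusivity has measure $|A|(1-1/|A|^2)^{|\Omega|}\to 0$; as written, the null-set claim and the perturbation step do not go through.
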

Here we provide a sketch of the proof. For each $a\in A$, let $\Omega_{a}$
be the set of states where $a$ is Sender's ideal action. Generically,
if $a\neq a'$, $\Omega_{a}$ and $\Omega_{a'}$ do not intersect;
consider this case. Then, it is easy to see that the environment is
felicitous if and only if for each $a,a'\in A$, we have\footnote{This condition also appears in \citet*{antic2022subversive} and \citet{aybas2024cheap}.
In \citet*{antic2022subversive}, it is a necessary condition for
the possibility of subversive conversations: without it, a third-party
with veto power would prevent a committee from implementing a project
solely based on the information that the committee wants to do so.
\citet{aybas2024cheap} consider preferences of the form $u_{R}\left(a,\omega\left(\cdot\right)\right)=\omega\left(a\right)^{2}$
and $u_{S}\left(a,\omega\left(\cdot\right)\right)=\left(\omega\left(a\right)-b\right)^{2}$
for some $b>0$ where $\omega:A\rightarrow\mathbb{R}$ is the realized
path of a Brownian motion. They identify features of $b$ and $A$
such that the condition holds.}
\begin{equation}
\sum_{\omega\in\Omega_{a}}\mu_{0}\left(\omega\right)\left(u_{R}\left(a,\omega\right)-u_{R}\left(a',\omega\right)\right)\geq0.\label{eq:obedient}
\end{equation}
Now, for any $\Omega_{a}$ that is not empty, the share of Receiver's
preferences on $A\times\Omega_{a}$ such that inequality $\left(\ref{eq:obedient}\right)$
is satisfied is $\frac{1}{\left|A\right|}$. Thus if all of $\Omega_{a}$'s
are non-empty, the share of environments that are felicitous is $\left(\frac{1}{\left|A\right|}\right)^{\left|A\right|}$,
or $\frac{1}{\left|A\right|^{\left|A\right|}}$.

If an $\Omega_{a}$ is empty, inequality $\left(\ref{eq:obedient}\right)$
is satisfied vacuously for that $a$. Thus, the share of felicitous
environment is weakly greater than $\frac{1}{\left|A\right|^{\left|A\right|}}$.
Since commitment has no value in felicitous environments, we conclude
that commitment has no value in at least $\frac{1}{\left|A\right|^{\left|A\right|}}$
share of environments.

We establish part (ii) of the theorem by showing that as $\left|\Omega\right|$
grows large: ($\ast$) the share of preference such that an $\Omega_{a}$
is empty converges to zero so the share of environments that are felicitous
converges to $\frac{1}{\left|A\right|^{\left|A\right|}}$, and ($\ast\ast$)
the share of environments such that commitment has no value converges
to the share of environments that are felicitous.

Part ($\ast$) is easy to see. For any $a\in A$, as $\Omega$ grows
large, the share of preferences such that there is \emph{no} state
where $a$ is Sender's ideal action converges to zero.

To establish part ($\ast\ast$), say that an environment is \emph{jointly-inclusive
}if for every action $a$, there is some state $\omega$ such that
$a$ is the ideal action for both Sender and Receiver in $\omega$.
Analogously to part ($\ast$), it is easy to see that as $\Omega$
grows large, the share of environments that are jointly-inclusive
converges to $1$. To complete the proof of part ($\ast\ast$), we
argue that, generically, if the environment is jointly-inclusive and
commitment has no value, then the environment must be felicitous.
First, because commitment has no value, Theorem \ref{Thm:commitmentWTP}
implies that there is a partitional profile $\left(\sigma,\rho\right)$
that is a cheap-talk equilibrium and yields the persuasion payoff.
Next, we note that every action $a\in A$ must be induced by $\left(\sigma,\rho\right)$:
there is a state $\omega$ where $a$ is both Sender's and Receiver's
ideal action, so if $a$ were never taken, the committed Sender could
profitably deviate by sometimes\footnote{Sender could reveal $\omega$ with some probability $\varepsilon$;
Receiver's response to all other messages would remain unchanged if
$\varepsilon$ is sufficiently small.} revealing $\omega$ and inducing $a$, thus contradicting the fact
that $\left(\sigma,\rho\right)$ yields the persuasion payoff. This
in turn implies that, for every $\omega$, $\rho\left(\sigma\left(\omega\right)\right)$
must be Sender's ideal action in $\omega$. (If Sender strictly preferred
some other $a'$ in $\omega$, $\left(\sigma,\rho\right)$ could not
be S-BR as the cheap-talk Sender would profitably deviate and set
$\sigma$$\left(\omega\right)$ to be whatever message induces $a'$;
since all actions are induced by $\left(\sigma,\rho\right)$, there
must be such a message.) Thus, $\left(\sigma,\rho\right)$ is a partitional
profile that is R-BR and induces Receiver to take Sender's ideal action
in every state. But this means that every message sent under $\sigma$
fully reveals what action is ideal for Sender, and Receiver complies
and takes that action. Hence, the environment is felicitous.

We conclude this section with a few comments.

First, whether commitment has value in a given environment $\left(u_{S},u_{R}\right)$
depends on the prior $\mu_{0}$. Yet, Theorem \ref{Thm:how-often}
remarkably holds for any (interior) prior.

Second, a corollary of Theorem \ref{Thm:how-often}, part 2b, is that
in the limit $\lim_{\left|A\right|\rightarrow\infty}\lim_{\left|\Omega\right|\rightarrow\infty}$,
commitment generically has value. One might be tempted to summarize
this corollary as ``if both $A$ and $\Omega$ are large, then commitment
is almost certainly valuable.'' An issue with this summary, however,
is that, at least given our current proof method, the conclusion relies
on the order of limits. Whether commitment is generically valuable
when the order of the limits is reversed (i.e., $\lim_{\left|\Omega\right|\rightarrow\infty}\lim_{\left|A\right|\rightarrow\infty})$
or we consider the joint limit (i.e., $\lim_{\substack{|\Omega|\to\infty\\
|A|\to\infty
}
}$) remain open questions.

Third, the felicity condition has a flavor of alignment of Sender
and Receiver's preferences. However, it does not preclude the possibility
that Receiver is much worse off than she would be if Sender and Receiver's
preferences were fully aligned. For instance, consider the prosecutor-judge
example and suppose that the prior is $0.7$ rather than $0.3$; then,
the environment is felicitous but Receiver obtains no information.

\pagebreak{}

\bibliographystyle{plainnat}
\bibliography{randomization}

\pagebreak{}

\appendix

\section{Appendix }

\subsection{Notation and terminology}

Let $A=\left\{ a_{1},...,a_{\left|A\right|}\right\} $. Let $\Omega=\left\{ \omega_{1},...,\omega_{\left|\Omega\right|}\right\} $. 

Given a messaging strategy $\sigma$, let $M_{\sigma}=\{m\in M|\sigma(m|\omega)>0\text{ for some }\omega\}$
be the set of messages that are sent with positive probability under
$\sigma$. For any $\omega$, if $\sigma(\cdot|\omega)$ is degenerate
(i.e., there exists a message $m$ such that $\sigma(m|\omega)=1$),
let $\sigma(\omega)$ denote the message that is sent in state $\omega$.
Similarly, if $\rho(\cdot|m)$ is degenerate, let $\rho\left(m\right)$
denote the action taken following message $m$.

Say that $\rho$ is \textit{pure} if $\rho(\cdot|m)$ is degenerate
for all $m\in M$. Given a profile $\left(\sigma,\rho\right)$, say
$\rho$ is \emph{pure-on-path} if $\rho\left(\cdot|m\right)$ is degenerate
for all $m\in M_{\sigma}$.

Let $I=\left[\begin{array}{c}
e_{1}\\
...\\
e_{\left|\Omega\right|}
\end{array}\right]$ be the identity matrix of size $\left|\Omega\right|$, with $e_{i}$
being the row vector with all entries equal to $0$ except for the
$i^{th}$ entry equal to $1$. Denote a vector all of whose elements
are equal to $r$ by $\boldsymbol{r}$. Denote the $j^{th}$ element
of $\mu$ by $\left[\mu\right]_{j}$. 

\subsection{Generic environments for the proofs}

We now introduce two generic sets of environments that play important
roles in the proofs. 

\subsubsection{Partitional-unique-response environments}

An environment $\left(u_{S},u_{R}\right)$ satisfies \emph{partitional-unique-response}
if for every non-empty $\hat{\Omega}\subseteq\Omega$, 
\[
\arg\max_{a\in A}\sum_{\omega\in\hat{\Omega}}\mu_{0}(\omega)u_{R}(a,\omega)
\]
is a singleton.

Note that whether an environment satisfies partitional-unique-response
does not depend on Sender's preferences. The partitional-unique-response
property requires that, at the finitely many beliefs that can be induced
by a partitional experiment, Receiver has a unique best response.
\begin{lem}
\label{lemma:R-unique-response-genericity} The set of partitional-unique-response
environments is generic.
\end{lem}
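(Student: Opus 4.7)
The plan is to show the complementary set---environments that fail partitional-unique-response---is a finite union of measure-zero hyperplanes, hence has Lebesgue measure zero.

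Fix a non-empty $\hat{\Omega}\subseteq\Omega$ and two distinct actions $a,a'\in A$. The condition that both $a$ and $a'$ attain the maximum of $\sum_{\omega\in\hat{\Omega}}\mu_0(\omega)u_R(\cdot,\omega)$ implies, in particular, the single linear equation
\[
\sum_{\omega\in\hat{\Omega}}\mu_0(\omega)\bigl(u_R(a,\omega)-u_R(a',\omega)\bigr)=0.
\]
Viewed as a condition on the vector of payoffs $u_R\in\mathbb{R}^{|A||\Omega|}$, this defines a hyperplane. Because $\mu_0$ is interior, at least one coefficient $\mu_0(\omega)$ (with $\omega\in\hat{\Omega}$) is strictly positive, so the hyperplane is proper and therefore has $|A||\Omega|$-dimensional Lebesgue measure zero.

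Next I would take the union over all choices of $(\hat{\Omega},a,a')$. There are $2^{|\Omega|}-1$ non-empty subsets of $\Omega$ and $\binom{|A|}{2}$ unordered pairs of distinct actions, so the set of $u_R$'s for which \emph{some} subset $\hat{\Omega}$ yields a non-singleton argmax is a finite union of measure-zero hyperplanes in $\mathbb{R}^{|A||\Omega|}$, hence has measure zero there.

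Finally, I would lift this conclusion to the ambient space of environments $\mathbb{R}^{2|A||\Omega|}$. By Fubini, a set of the form $B\times\mathbb{R}^{|A||\Omega|}$, where $B\subseteq\mathbb{R}^{|A||\Omega|}$ (the $u_R$-coordinates) has measure zero, itself has measure zero. Thus the set of environments that fail partitional-unique-response has measure zero, and its complement is generic. No step here is especially delicate; the only thing to be mindful of is that the interiority of $\mu_0$ is what guarantees each defining equation is non-trivial, so that each hyperplane is genuinely of codimension one rather than degenerate.
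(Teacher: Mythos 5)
Your proposal is correct and follows essentially the same route as the paper: for each triple $(\hat{\Omega},a,a')$ the tie condition defines a proper hyperplane in $u_R$-space (the paper's set $Q(\hat{\Omega},a_i,a_j)$), and the finite union over subsets and action pairs has measure zero. Your explicit remarks about interiority of $\mu_0$ ensuring nondegeneracy and the Fubini lift to the full environment space are points the paper leaves implicit, but the argument is the same.
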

\begin{proof}
Let $\mathcal{T}$ denote the set of triplets $\left(\hat{\Omega},a_{i},a_{j}\right)$
such that $\emptyset\neq\hat{\Omega}\subseteq\Omega$, $a_{i},a_{j}\in A$,
and $a_{i}\neq a_{j}$. For any $\left(\hat{\Omega},a_{i},a_{j}\right)\in\mathcal{T}$,
let $Q(\hat{\Omega},a_{i},a_{j})$ denote the set of $u_{R}$ such
that 
\begin{equation}
\sum_{\omega\in\hat{\Omega}}\,\mu_{0}(\omega)u_{R}(a_{i},\omega)=\sum_{\omega\in\hat{\Omega}}\,\mu_{0}(\omega)u_{R}(a_{j},\omega).\label{eq:not-unique-BR}
\end{equation}

We will show that $\cup_{\left(\hat{\Omega},a_{i},a_{j}\right)\in\mathcal{T}}Q(\hat{\Omega},a_{i},a_{j})$
has measure zero in $[0,1]^{|A||\Omega|}$. Since $A$ and $\Omega$
are finite, it suffices to show that for any $\left(\hat{\Omega},a_{i},a_{j}\right)\in\mathcal{T}$,
$Q(\hat{\Omega},a_{i},a_{j})$ has measure zero.

Fix any $\left(\hat{\Omega},a_{i},a_{j}\right)\in\mathcal{T}$. Note
that $Q(\hat{\Omega},a_{i},a_{j})$ can be written as
\begin{equation}
\left\{ u_{R}\in\left[0,1\right]^{|A||\Omega|}|\sum_{\omega,a}u_{R}(a,\omega)\eta(a,\omega)=0\right\} \label{eq:generic-2}
\end{equation}
where 
\[
\eta(a,\omega)=\begin{cases}
\mu_{0}(\omega) & \text{ if }a=a_{i},\omega\in\hat{\Omega}\\
-\mu_{0}(\omega) & \text{ if }a=a_{j},\omega\in\hat{\Omega}\\
0 & \text{ otherwise.}
\end{cases}
\]
Hence, $Q(\hat{\Omega},a_{i},a_{j})$ is a subset of a hyperplane
in $\mathbb{R}^{|A||\Omega|}$, and thus has measure zero.\footnote{The set of partitional-unique-response environment is also generic
in the topological sense. It is easy to see that $Q(\hat{\Omega},a_{i},a_{j})$
is closed. Since $\cup_{\left(\hat{\Omega},a_{i},a_{j}\right)\in\mathcal{T}}Q(\hat{\Omega},a_{i},a_{j})$
is therefore closed, its complement is open. Since $\cup_{\left(\hat{\Omega},a_{i},a_{j}\right)\in\mathcal{T}}Q(\hat{\Omega},a_{i},a_{j})$
has measure zero, its complement is dense. Thus, the set of partitional-unique-response
environments, which is a superset of the complement of $\cup_{\left(\hat{\Omega},a_{i},a_{j}\right)\in\mathcal{T}}Q(\hat{\Omega},a_{i},a_{j})$,
contains an open, dense set. \label{fn:puq-topology}}
\end{proof}

\subsubsection{Scant-indifferences environments }

For each $a_{i}\in A$, let $\bdu_{S}(a_{i})=u_{S}(a_{i},\cdot)\in\mathbb{R}^{\left|\Omega\right|}$
and $\bdu_{R}(a_{i})=u_{R}(a_{i},\cdot)\in\mathbb{R}^{|\Omega|}$
denote the payoff vectors across states.

For each $a_{i},$ define the \emph{expanded-indifference matrix}
$T^{i}$ as follows. Let $T_{S}^{i}$ be the matrix with $\left|A\right|-1$
rows and $\left|\Omega\right|$ columns, with each row associated
with $j\neq i$ and equal to $\bdu_{S}(a_{j})-\bdu_{S}(a_{i})$. Let
$T_{R}^{i}$ be the matrix with $\left|A\right|-1$ rows and $\left|\Omega\right|$
columns, with each row associated with $j\neq i$ and equal to $\bdu_{R}(a_{j})-\bdu_{R}(a_{i})$.
Then, let
\[
T^{i}=\begin{bmatrix}T_{S}^{i}\\
T_{R}^{i}\\
I
\end{bmatrix}.
\]

Given any matrix $T$, a \emph{row-submatrix} of $T$ is a matrix
formed by removing some of the rows of $T$.

We say that an environment satisfies \emph{scant-indifferences} if
for each $a_{i}\in A$, every row-submatrix of the expanded-indifference
matrix $T^{i}$ is full rank.

We anticipate that the reader might find this definition mysterious,
so we now try to provide some intuition by connecting this definition
to the proof sketch we gave in the body of the paper for Theorem \ref{Thm:commitmentWTP}
in the case with two actions and three states. 

Recall, that in Figure \ref{fig:Indifference-optimality}, the argument
behind Lemma \ref{lem:Receiver-not-mix} relied on two facts that
must hold generically. First, the border between $R_{1}$ and $R_{2}$
is distinct from the border between $S_{1}$ and $S_{2}$ and thus
the two borders have at most one intersection, $\mu_{m}$. Second,
generically $\mu_{m}$ (if it exists) is an interior belief. Moreover,
the argument behind Lemma \ref{lem:Sender-not-mix} relied on the
fact that, generically, for any $\omega$ and $a_{i}\neq a_{j}$,
$u_{S}(a_{i},\omega)\neq u_{S}(a_{j},\omega)$. 

We now illustrate why these three facts hold in any scant-indifferences
environment. With only two actions, we can look at $T^{1}$ only,
since the argument for $T^{2}$ is identical. We have
\[
T^{1}=\begin{bmatrix}\begin{array}{ccc}
\overset{\Delta}{u}_{S}\left(\omega_{1}\right) & \overset{\Delta}{u}_{S}\left(\omega_{2}\right) & \overset{\Delta}{u}_{S}\left(\omega_{3}\right)\\
\overset{\Delta}{u}_{R}\left(\omega_{1}\right) & \overset{\Delta}{u}_{R}\left(\omega_{2}\right) & \overset{\Delta}{u}_{R}\left(\omega_{3}\right)\\
1 & 0 & 0\\
0 & 1 & 0\\
0 & 0 & 1
\end{array}\end{bmatrix}
\]
where $\overset{\Delta}{u}_{S}\left(\omega_{i}\right)=u_{S}\left(a_{2},\omega_{i}\right)-u_{S}\left(a_{1},\omega_{i}\right)$
and analogously for $\overset{\Delta}{u}_{R}$.

First, consider the row-submatrix 
\[
\overset{\Delta}{T}=\begin{bmatrix}\overset{\Delta}{u}_{S}\left(\omega_{1}\right) & \overset{\Delta}{u}_{S}\left(\omega_{2}\right) & \overset{\Delta}{u}_{S}\left(\omega_{3}\right)\\
\overset{\Delta}{u}_{R}\left(\omega_{1}\right) & \overset{\Delta}{u}_{R}\left(\omega_{2}\right) & \overset{\Delta}{u}_{R}\left(\omega_{3}\right)
\end{bmatrix}.
\]
Note that both Sender and Receiver are indifferent between the two
actions at a belief $\mu$ if and only if $\overset{\Delta}{T}\mu=0$.
Thus, requiring that $\overset{\Delta}{T}$ be full-rank is equivalent
to requiring that the border between $R_{1}$ and $R_{2}$ not be
parallel to the border between $S_{1}$ and $S_{2}$. A fortiori,
the environment satisfying scant-indifferences implies that the two
borders do not coincide.

Second, consider the row-submatrix 
\[
\begin{bmatrix}\overset{\Delta}{u}_{S}\left(\omega_{1}\right) & \overset{\Delta}{u}_{S}\left(\omega_{2}\right) & \overset{\Delta}{u}_{S}\left(\omega_{3}\right)\\
\overset{\Delta}{u}_{R}\left(\omega_{1}\right) & \overset{\Delta}{u}_{R}\left(\omega_{2}\right) & \overset{\Delta}{u}_{R}\left(\omega_{3}\right)\\
1 & 0 & 0
\end{bmatrix}.
\]
Requiring that this matrix be full-rank yields that $\mu_{m}$ puts
strictly positive probability on $\omega_{1}$. Considering the row-submatrices
that alternatively include the other two rows of the identity matrix
yields that $\mu_{m}$ puts strictly positive probability on $\omega_{2}$
and $\omega_{3}$.

Finally, suppose that in, say state $\omega_{1}$, $\overset{\Delta}{u}_{S}\left(\omega_{1}\right)=0$.
Consider the row-submatrix
\[
\begin{bmatrix}0 & \overset{\Delta}{u}_{S}\left(\omega_{2}\right) & \overset{\Delta}{u}_{S}\left(\omega_{3}\right)\\
0 & 1 & 0\\
0 & 0 & 1
\end{bmatrix}.
\]
Clearly, this matrix is not full-rank, so scant-indifferences rules
out the possibility that $u_{S}(a_{1},\omega_{1})=u_{S}(a_{2},\omega_{1})$.

Having motivated the definition of scant-indifferences environments,
we now establish that the set of such environments is generic.
\begin{lem}
\label{lemma:full-rank-genericity}The set of scant-indifferences
environments is generic. 
\end{lem}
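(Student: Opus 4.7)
The plan is to reduce the claim to a finite collection of polynomial non-vanishing statements, each of which I will verify by exhibiting a concrete witness environment. Since $A$ is finite and each $T^i$ admits only finitely many row-submatrices, it suffices to fix one $a_i \in A$ and one row-submatrix $T'$ of $T^i$---obtained by retaining rows of $T_S^i$ indexed by $J_S \subseteq \{j : j \neq i\}$, rows of $T_R^i$ indexed by $J_R \subseteq \{j : j \neq i\}$, and identity rows indexed by $K \subseteq \{1,\dots,|\Omega|\}$---and to show that the environments at which $T'$ is rank-deficient form a Lebesgue-null subset of $\mathbb{R}^{2|A||\Omega|}$. A finite union of null sets is null.

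Every entry of $T^i$ is affine in the coordinates $(u_S, u_R)$: the $T_S^i$ rows are differences of entries of $u_S$, the $T_R^i$ rows are differences of entries of $u_R$, and the identity rows are constants. Hence every $d \times d$ minor of $T'$, where $d=\min(|J_S|+|J_R|+|K|,\,|\Omega|)$, is a polynomial in $(u_S,u_R)$. The rank-deficiency locus of $T'$ is the simultaneous zero set of all maximal minors, and is in particular contained in the zero set of any single one of them. Since the zero set of a non-zero polynomial on $\mathbb{R}^{2|A||\Omega|}$ has Lebesgue measure zero, it suffices to exhibit a single environment at which $T'$ has rank $d$; such a witness certifies that at least one maximal minor is a non-zero polynomial.

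For the witness I set $\bdu_S(a_i)=\bdu_R(a_i)=\boldsymbol{0}$, so that each $T_S^i$ row collapses to $\bdu_S(a_j)$ and each $T_R^i$ row to $\bdu_R(a_j)$. I then assign distinct standard basis vectors $e_\ell$ with $\ell \in \{1,\dots,|\Omega|\}\setminus K$ to the rows of $T_S^i$ and $T_R^i$ retained in $T'$. When $|J_S|+|J_R|+|K| \leq |\Omega|$, this produces $|J_S|+|J_R|+|K|$ distinct standard basis vectors, which are linearly independent, so $T'$ has rank $d=|J_S|+|J_R|+|K|$. When $|J_S|+|J_R|+|K| > |\Omega|$, I instead use the first $|\Omega|-|K|$ free rows to cover all coordinates outside $K$ and set the remaining $\bdu_S(a_j), \bdu_R(a_j)$ arbitrarily in $[0,1]^{|\Omega|}$; together with the identity rows these already span $\mathbb{R}^{|\Omega|}$, so $T'$ has rank $d=|\Omega|$.

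The only substantive obstacle is this case-split in the witness construction, which must handle both the \emph{wide} ($|J_S|+|J_R|+|K| \leq |\Omega|$) and \emph{tall} ($|J_S|+|J_R|+|K| > |\Omega|$) shapes of $T'$ and must allocate coordinates to rows without collisions between the identity and non-identity blocks. Once that is in order, the reduction to polynomial non-vanishing, the measure-zero conclusion, and the finite union over $i$ and over row-submatrices are routine.
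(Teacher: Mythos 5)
Your proof is correct and follows essentially the same route as the paper: reduce the claim to finitely many polynomial non-vanishing conditions (one per row-submatrix) and invoke the fact that the zero set of a non-zero polynomial is Lebesgue-null, so a finite union of such sets is null. The one place you go beyond the paper is the explicit witness construction certifying that some maximal minor is a non-zero polynomial --- the paper first reduces to square submatrices and then simply asserts the determinant is a non-zero polynomial --- and your case-split between the wide and tall shapes, assigning distinct standard basis vectors to the free rows on coordinates outside $K$, handles that verification correctly.
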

\begin{proof}
First, observe that given any expanded-indifference matrix $T^{i}$,
if every square row-submatrix of $T^{i}$ is full-rank, then every
row-submatrix of $T^{i}$ is full-rank. To see why, suppose every
square row-submatrix of $T^{i}$ is full-rank. Now, consider an arbitrary
row-submatrix $\hat{T}$ of $T^{i}$. If $\hat{T}$ is square, it
obviously has full-rank. Suppose that $\hat{T}$ has more than $\left|\Omega\right|$
rows. In that case, every square row-submatrix of $\hat{T}$ is also
a square row-submatrix of $T^{i}$. This row-submatrix has rank $\left|\Omega\right|$.
Therefore, $\hat{T}$ has rank $\left|\Omega\right|$ and is thus
full-rank. Finally, suppose that $\hat{T}$ has fewer than $\left|\Omega\right|$
rows. We know that $\hat{T}$ is a row-submatrix of some square row-submatrix
$\tilde{T}$ of $T^{i}$. We know $\tilde{T}$ has full-rank so all
of its rows are linearly independent. Consequently, the subset of
its rows that constitute $\hat{T}$ is also linearly independent. 

Hence, we can consider only square row-submatrices of $T^{i}$. Recall
that a square matrix is full-rank if and only if its determinant is
non-zero. Thus, it will suffice to show that for a full Lebesgue measure
set of $\left(u_{S},u_{R}\right)$, the determinant of every square
row-submatrix of each expanded-indifference matrix is non-zero. Consider
any square row-submatrix $\hat{T}$ of any expanded-indifference matrix.
The determinant of $\hat{T}$ is a non-zero polynomial function of
$(u_{S},u_{R})$.\footnote{To establish that this polynomial function is not identically zero,
we need to show that there are values of $(u_{S},u_{R})$ such that
the determinant of $\hat{T}$ is not zero. Recall that rearranging
the rows of a matrix does not change whether the determinant is zero.
Let $\tilde{T}$ be some matrix constructed by rearranging the rows
of $\hat{T}$ so that if $\tilde{T}$ contains any row equal to $e_{k}$,
that row is the $k^{th}$ row of $\tilde{T}$. Any other row $\ell$
of $\tilde{T}$ must belong either to $T_{S}^{i}$ or $T_{R}^{i}$.
Construct $\left(u_{S},u_{R}\right)$ as follows. If $\bdu_{S}(a_{j})-\bdu_{S}(a_{i})$
is not in $\hat{T}$ (and thus not in $\tilde{T}$), set $\bdu_{S}(a_{j})$
in an arbitrary way. If $\bdu_{S}(a_{j})-\bdu_{S}(a_{i})$ is the
$\ell^{th}$ row of $\tilde{T}$, set $\bdu_{S}(a_{j})$ as the row
vector with all entries equal to $0$ except for the $\ell^{th}$
entry equal to $1$. Similarly, if $\bdu_{R}(a_{j})-\bdu_{R}(a_{i})$
is not in $\hat{T}$ (and thus not in $\tilde{T}$), set $\bdu_{R}(a_{j})$
in an arbitrary way, and if $\bdu_{R}(a_{j})-\bdu_{R}(a_{i})$ is
the $\ell^{th}$ row of $\tilde{T}$, set $\bdu_{R}(a_{j})$ as the
row vector with all entries equal to $0$ except for the $\ell^{th}$
entry equal to $1$. Finally, set $\bdu_{S}(a_{i})=\bdu_{R}(a_{i})=\boldsymbol{0}$.
With $(u_{S},u_{R})$ defined this way, $\tilde{T}$ is an identity
matrix with determinant 1. Therefore, the determinant of $\hat{T}$
is a non-zero function of $(u_{S},u_{R})$.} The zero set of any non-zero polynomial function has Lebesgue measure
zero, so the set of $\left(u_{S},u_{R}\right)$ for which $\hat{T}$
does not have full rank is a measure-zero set. Since there are only
finitely many square row-submatrices of expanded-indifference matrices,
the set of scant-indifferences environments is generic.\footnote{The zero set of any non-zero polynomial function is closed, so the
set of scant indifferences environments is generic in the topological
sense as well.\label{fn:scant-topology}}
\end{proof}
As we noted above (for the three state, two action case), in scant-indifferences
environments, there is no state in which Sender is indifferent between
two distinct actions.
\begin{lem}
\label{lem:no-state-difference-between-two-actions}In any scant-indifferences
environment, for any $\omega$ and $a_{i}\neq a_{j}$, $u_{S}(a_{i},\omega)\neq u_{S}(a_{j},\omega)$.
\end{lem}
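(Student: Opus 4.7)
The plan is to prove the contrapositive by constructing, from any violation of the claim, a square row-submatrix of some expanded-indifference matrix whose determinant is zero. Suppose for contradiction that there exist distinct actions $a_i, a_j$ and a state $\omega_k$ with $u_S(a_i,\omega_k) = u_S(a_j,\omega_k)$.

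Consider the expanded-indifference matrix $T^i$. By definition, one of its rows (inside the $T_S^i$ block) is $\bdu_S(a_j) - \bdu_S(a_i)$, and by our assumption the $k$-th entry of this row equals zero. I would then form a square row-submatrix $\hat T$ of $T^i$ by selecting this single row from the $T_S^i$ block together with the $|\Omega|-1$ rows of the identity block $I$ indexed by $\ell \neq k$. This gives $|\Omega|$ rows in total, so $\hat T$ is an $|\Omega|\times|\Omega|$ matrix.

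The key observation is that the $k$-th column of $\hat T$ is identically zero: each of the $|\Omega|-1$ rows taken from $I$ has a zero in column $k$ (since the row of $I$ with a $1$ in column $k$ was excluded), and the row taken from $T_S^i$ has a zero in column $k$ by the contradiction hypothesis. A square matrix with a zero column has zero determinant, hence is not full-rank. This contradicts the scant-indifferences property, which requires every row-submatrix of $T^i$ to be full-rank. Therefore no such $\omega_k$, $a_i$, $a_j$ can exist, proving the lemma.

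There is essentially no obstacle here beyond choosing the right row-submatrix; the argument is a one-line application of the definition once one notices that the identity block inside $T^i$ lets us isolate a single column.
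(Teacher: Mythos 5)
Your proof is correct and is essentially identical to the paper's: both select the single row $\bdu_{S}(a_{j})-\bdu_{S}(a_{i})$ from the Sender block together with the $\left|\Omega\right|-1$ identity rows $e_{\ell}$, $\ell\neq k$, and conclude the resulting square row-submatrix is singular (the paper phrases this as the first row being a linear combination of the identity rows, you as the $k$-th column vanishing, which are the same observation). No gaps.
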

\begin{proof}
Suppose, toward a contradiction, that there exist some $\omega$,
$a_{i}$, and $a_{j}$ such that $u_{S}(a_{i},\omega)=u_{S}(a_{j},\omega)$.
Without loss, suppose this holds for $\omega_{1}$. Then, the vector
$\bdu_{S}(a_{i})-\bdu_{S}(a_{j})$ has zero as its first element.
Now consider the $|\Omega|\times|\Omega|$ row sub-matrix of $T^{j}$
\[
\begin{bmatrix}\bdu_{S}(a_{i})-\bdu_{S}(a_{j})\\
e_{2}\\
...\\
e_{|\Omega|}
\end{bmatrix}.
\]
This matrix is not full-rank because the first row can be expressed
as a linear combination of the other rows.
\end{proof}

\subsection{\protect\label{subsec:purity}Key Proposition}

In this section we establish a key proposition that will be useful
for proofs of Theorems \ref{Thm:commitmentWTA}, \ref{Thm:commitmentWTP},
and \ref{Thm:how-often}.
\begin{prop}
\label{Lemma:purity}In a scant-indifferences environment, if commitment
has no value, then there is a partitional $\hat{\sigma}$ and a pure
strategy $\hat{\rho}$ such that: (i) $(\hat{\sigma},\hat{\rho})$
is a cheap-talk equilibrium and yields the persuasion payoff, and
(ii) $\left|M_{\hat{\sigma}}\right|\leq\left|A\right|$.
\end{prop}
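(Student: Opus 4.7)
The plan is to begin from the hypothesis that commitment has no value, pick a cheap-talk equilibrium $(\sigma^{*},\rho^{*})$ that attains the persuasion payoff (which exists because the cheap-talk and persuasion payoffs coincide and the cheap-talk payoff is achieved), and then reduce $(\sigma^{*},\rho^{*})$ in two stages. The first stage makes Receiver's strategy pure on the messages actually sent (Lemma \ref{lem:Receiver-not-mix}); the second stage replaces Sender's strategy with a partitional one (Lemma \ref{lem:Sender-not-mix}). The final pair $(\hat{\sigma},\hat{\rho})$ will use at most $|A|$ distinct messages by construction.

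For the first stage, I would argue by contradiction. Suppose some $m\in M_{\sigma^{*}}$ has distinct actions $a,a'$ in the support of $\rho^{*}(\cdot|m)$. R-BR forces Receiver to be indifferent between $a$ and $a'$ at the induced belief $\mu_{m}$; persuasion optimality forces Sender to be indifferent as well, for otherwise shifting all of $\rho^{*}(\cdot|m)$ onto the Sender-preferred action would preserve R-BR and strictly raise $U_{S}$. The main obstacle is to turn this double indifference into a contradiction using scant-indifferences. I would use the full-rank properties of appropriate row-submatrices of $T^{i}$ (combining the relevant rows of $T_{S}^{i}$ and $T_{R}^{i}$ with rows of the identity block) to exhibit two interior posteriors $\mu_{1},\mu_{2}$ averaging to $\mu_{m}$ such that (i) Receiver remains indifferent between $a$ and $a'$ at each and (ii) Sender strictly prefers $a$ at $\mu_{1}$ and $a'$ at $\mu_{2}$. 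The transversality between Sender's and Receiver's indifference hyperplanes gives a direction of perturbation along Receiver's indifference that strictly changes Sender's payoff difference, and the identity rows guarantee that $\mu_{m}$ is interior so that small perturbations remain in the simplex. Splitting $\mu_{m}$ into $(\mu_{1},\mu_{2})$ and letting Receiver take the Sender-preferred action at each then yields a persuasion profile with strictly larger $U_{S}$, contradicting optimality of $(\sigma^{*},\rho^{*})$.

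For the second stage, assume now that $\rho^{*}$ is pure on $M_{\sigma^{*}}$. Lemma \ref{lem:no-state-difference-between-two-actions} gives $u_{S}(a,\omega)\neq u_{S}(a',\omega)$ for all $\omega$ and $a\neq a'$; hence in every state $\omega$ the messages in the support of $\sigma^{*}(\cdot|\omega)$ must all be mapped by $\rho^{*}$ to the same action, for otherwise Sender would strictly prefer to concentrate on the message inducing his preferred action in $\omega$, violating S-BR. Call this common action $\alpha(\omega)$, choose any injection $f:A\to M$ (possible since $|M|>|A|$), set $\hat{\sigma}(\omega)=f(\alpha(\omega))$ and $\hat{\rho}(f(a))=a$, and on off-path messages define $\hat{\rho}(\cdot|\hat{m})=\rho^{*}(\cdot|m^{*})$ for some fixed $m^{*}\in M_{\sigma^{*}}$. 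Then $\hat{\sigma}$ is partitional with $|M_{\hat{\sigma}}|\leq|A|$, $\hat{\rho}$ is pure, and the joint distribution over $(\omega,a)$ is unchanged, so the persuasion payoff is preserved. R-BR is inherited because the posterior following $f(a)$ under $\hat{\sigma}$ is a convex combination of posteriors at which $a$ was optimal under $\sigma^{*}$. S-BR holds because any profitable Sender deviation under $(\hat{\sigma},\hat{\rho})$ (to an on-path $f(a')$, or to an off-path message routed to $m^{*}$'s response) would already have been available under $(\sigma^{*},\rho^{*})$, contradicting S-BR of the latter. Composing the two stages gives the proposition.
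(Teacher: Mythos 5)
Your proposal is correct and follows essentially the same route as the paper: the same two-stage decomposition (first force $\rho$ to be pure on path via a belief-splitting contradiction built on the full-rank row-submatrices, then collapse Sender's randomization using the fact that scant-indifferences rules out $u_{S}(a,\omega)=u_{S}(a',\omega)$), with the same construction of $(\hat{\sigma},\hat{\rho})$ via the action-induced partition of $\Omega$. The only detail to be careful about in stage one, which the paper handles explicitly, is that the perturbation direction must preserve Receiver's indifference across \emph{all} actions in $\arg\max_{a}\bdu_{R}(a)\cdot\mu_{m}$ (not just the two in the support of $\rho(\cdot|m)$), so that no outside action becomes uniquely optimal at the split beliefs; your invocation of the appropriate row-submatrices accommodates this.
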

To establish the Proposition, we first show that if a cheap-talk equilibrium
(in fact any R-BR profile) yields the persuasion payoff, then Receiver
must not randomize on path in that equilibrium (Lemma \ref{lem:Receiver-not-mix}).
Second, we show that if Receiver does not randomize on path, Sender
also need not randomize (Lemma \ref{lem:Sender-not-mix}). 
\begin{lem}
\label{lem:Receiver-not-mix}In a scant-indifferences environment,
if $\left(\sigma,\rho\right)$ is R-BR and yields the persuasion payoff,
then $\rho$ must be pure-on-path.
\end{lem}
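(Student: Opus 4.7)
The plan is to argue by contradiction: suppose some on-path message $m$ has $\rho(\cdot|m)$ supported on two distinct actions $a_i,a_j$. My first step is to observe that both Sender and Receiver must be indifferent between $a_i$ and $a_j$ at the induced posterior $\mu_m$. Receiver's indifference follows from $(\sigma,\rho)$ being R-BR. Sender's indifference follows from persuasion-optimality: if Sender strictly preferred one of them, say $a_j$, at $\mu_m$, then redefining $\rho(\cdot|m)$ to place all mass on $a_j$ would preserve R-BR (by Receiver's indifference) while strictly increasing $U_S$, contradicting that $(\sigma,\rho)$ attains the persuasion payoff. In vector form, $\mu_m$ lies in both $H_R\equiv\{\mu:(\bdu_R(a_j)-\bdu_R(a_i))\cdot\mu=0\}$ and $H_S\equiv\{\mu:(\bdu_S(a_j)-\bdu_S(a_i))\cdot\mu=0\}$.

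Next, I extract geometric implications from scant-indifferences applied to the expanded-indifference matrix $T^i$. The $2\times|\Omega|$ row-submatrix whose rows are the two difference vectors $\bdu_S(a_j)-\bdu_S(a_i)$ and $\bdu_R(a_j)-\bdu_R(a_i)$ has rank $2$, so $H_S\neq H_R$; in particular, there is a direction $v\in H_R\setminus H_S$ with $\mathbf{1}\cdot v=0$. Further square row-submatrices of $T^i$ obtained by adjoining identity rows $e_k$ to these difference rows allow me to rule out the boundary configurations of the simplex in which the intended split cannot be carried out, so that for sufficiently small $\epsilon>0$ the posteriors $\mu^\pm:=\mu_m\pm\epsilon v$ lie in $\Delta(\Omega)$, both lie in $H_R$ (so $a_i$ and $a_j$ remain tied for Receiver), and they sit strictly on opposite sides of $H_S$. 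I orient $v$ so that Sender strictly prefers $a_j$ at $\mu^+$ and $a_i$ at $\mu^-$.

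The final step is to construct the improving persuasion profile. Replace the message $m$ in $\sigma$ with two messages $m^+,m^-$ each sent with half the original probability of $m$ and each chosen to induce $\mu^\pm$; Bayes-plausibility holds because $\mu_m=\tfrac12(\mu^++\mu^-)$. Let Receiver play $a_j$ after $m^+$ and $a_i$ after $m^-$. For $\epsilon$ small, any $a_k$ strictly dominated by $\{a_i,a_j\}$ at $\mu_m$ remains so at $\mu^\pm$ by continuity, so the modified profile is R-BR. A direct computation that exploits Sender's indifference at $\mu_m$ then shows this alternative persuasion profile strictly raises $U_S$ by $\tfrac{p_m\epsilon}{2}\,|(\bdu_S(a_j)-\bdu_S(a_i))\cdot v|>0$, where $p_m$ is the original probability of $m$, contradicting persuasion-optimality.

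I expect the principal obstacle to be the case $\mathrm{BR}(\mu_m)\supsetneq\{a_i,a_j\}$: additional best responses at $\mu_m$ impose extra Receiver-indifference constraints on $v$ (so that $a_i,a_j$ remain best responses at $\mu^\pm$), and when $|\Omega|$ is large, a generic coincidence of only two indifferences does not on its own rule out $[\mu_m]_k=0$ for some coordinates, which would block the split in certain directions. In both situations the resolution is to pass to a larger row-submatrix of $T^i$ incorporating the relevant further indifference rows and suitably chosen identity rows; the full-rank guarantee built into scant-indifferences is precisely what delivers a direction $v$ that respects all the additional constraints while still violating $H_S$ for the pair $(a_i,a_j)$, so the splitting argument goes through.
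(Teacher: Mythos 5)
Your proposal is correct and follows essentially the same route as the paper's proof: both players' indifference at $\mu_m$, a perturbation direction obtained from the full-rank (scant-indifferences) property of a row-submatrix of the expanded-indifference matrix that preserves Receiver's indifferences and the simplex facets while breaking a Sender indifference, and a half-half message split yielding a strict payoff improvement. The only small imprecision is that the rank comparison delivers a direction violating Sender's indifference for \emph{some} pair of actions in $\Supp(\rho(\cdot|m))$, not necessarily your originally chosen pair $(a_i,a_j)$; since Sender is indifferent among all supported actions at $\mu_m$, the improvement argument applies to whichever pair is broken, so this is a relabeling rather than a gap.
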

\begin{proof}
Suppose by contradiction that the environment satisfies scant-indifferences,
profile $\left(\sigma,\rho\right)$ is R-BR and yields the persuasion
payoff, yet there exists a message $m\in M_{\sigma}$ such that $|\Supp(\rho(\cdot|m))|=k>1$.

We first note that both Sender and Receiver must be indifferent among
all the actions in $\Supp(\rho(\cdot|m))$ given $\mu_{m}$, the belief
induced by message $m$. In other words, for all $a_{i},a_{j}\in\Supp(\rho(\cdot|m))$,
\begin{align}
\sum_{\omega}\mu_{m}(\omega)u_{R}(a_{i},\omega)=\sum_{\omega}\mu_{m}(\omega)u_{R}(a_{j},\omega),\label{eq:R-inddifference}\\
\sum_{\omega}\mu_{m}(\omega)u_{S}(a_{i},\omega)=\sum_{\omega}\mu_{m}(\omega)u_{S}(a_{j},\omega).\label{eq:S-inddifference}
\end{align}
Equation \eqref{eq:R-inddifference} follows immediately from R-BR.
Equation \eqref{eq:S-inddifference} follows from the fact that $(\sigma,\rho)$
yields the persuasion payoff: if say $\sum_{\omega}\mu_{m}(\omega)u_{S}(a_{i},\omega)>\sum_{\omega}\mu_{m}(\omega)u_{S}(a_{j},\omega)$,
an alternative strategy profile where Receiver breaks ties in favor
of Sender would still satisfy R-BR while strictly improving Sender's
payoff.

For each belief $\mu\in\Delta\Omega$, let $A_{R}^{*}(\mu)$ denote
the set of Receiver-optimal actions under belief $\mu$; that is,
$A_{R}^{*}(\mu)=\arg\max_{a\in A}\bdu_{R}(a)\cdot\mu$. Clearly, $\Supp(\rho(\cdot|m))\subseteq A_{R}^{*}(\mu_{m})$,
meaning that $A_{R}^{*}(\mu_{m})$ contains the $k$ actions in the
support of $\rho(\cdot|m)$, but may also contain additional actions
that are not played following $m$. Without loss of generality, let
$\Supp(\rho(\cdot|m))=\{a_{1},...,a_{k}\}$ and $A_{R}^{*}(\mu_{m})=\{a_{1},...,a_{k},a_{k+1},...,a_{k+r}\}$
for some $r\geq0$. Note that for any $i=2,...,k+r,$ $\bdu_{R}(a_{1})\cdot\mu_{m}=\bdu_{R}(a_{i})\cdot\mu_{m}$.

Equation \eqref{eq:S-inddifference} implies that for any $i=2,...,k,$
$\bdu_{S}(a_{1})\cdot\mu_{m}=\bdu_{S}(a_{i})\cdot\mu_{m}$. Combining
both Sender's and Receiver's indifference conditions, we have
\begin{equation}
\begin{bmatrix}\bdu_{S}(a_{2})-\bdu_{S}(a_{1})\\
...\\
\bdu_{S}(a_{k})-\bdu_{S}(a_{1})\\
\\\bdu_{R}(a_{2})-\bdu_{R}(a_{1})\\
...\\
\bdu_{R}(a_{k+r})-\bdu_{R}(a_{1})
\end{bmatrix}\mu_{m}=\boldsymbol{0}.\label{eq:indifference}
\end{equation}

Let $\hat{\Omega}=\{\omega|\mu_{m}(\omega)=0\}$, the (potentially
empty) set of states that are not in the support of $\mu_{m}$. Without
loss, suppose that $\hat{\Omega}=\left\{ \omega_{1},...\omega_{\ell}\right\} $
where $\ell\geq0$. If $\ell>0$ (i.e., $\hat{\Omega}\neq\emptyset$),
then we have 
\begin{equation}
\begin{bmatrix}e_{1}\\
...\\
e_{\ell}
\end{bmatrix}\mu_{m}=\boldsymbol{0}.\label{eq:facet}
\end{equation}

Let $\hat{T}_{S}=\begin{bmatrix}\bdu_{S}(a_{2})-\bdu_{S}(a_{1})\\
...\\
\bdu_{S}(a_{k})-\bdu_{S}(a_{1})
\end{bmatrix}$, $\hat{T}_{R}=\begin{bmatrix}\bdu_{R}(a_{2})-\bdu_{R}(a_{1})\\
...\\
\bdu_{R}(a_{k+r})-\bdu_{R}(a_{1})
\end{bmatrix}$, $\hat{E}=\begin{bmatrix}e_{1}\\
...\\
e_{\ell}
\end{bmatrix}$, and $\hat{T}=\begin{bmatrix}\hat{T}_{S}\\
\hat{T}_{R}\\
\hat{E}
\end{bmatrix}$. Note that $\hat{T}$ is a row-submatrix of the expanded-indifference
matrix $T^{1}$.

Combining \eqref{eq:indifference} and \eqref{eq:facet}, we know
$\hat{T}\mu_{m}=\boldsymbol{0}$. Moreover, since $\mu_{m}\in\Delta\Omega$,
we know $\boldsymbol{1}\mu_{m}=1$. 

Next we make two observations: (i) $rank(\hat{T})<|\Omega|$, otherwise
the unique solution to $\text{\ensuremath{\hat{T}}}\mu=\boldsymbol{0}$
is $\mu=\boldsymbol{0}$. Since we are in a scant-indifferences environment,
this means that $\hat{T}$ has full row rank; (ii) vector $\boldsymbol{1}$
can not be represented as a linear combination of rows of $\hat{T}$.
To see why, assume toward contradiction that there exists a row vector
$\lambda\in\mathbb{R}^{2k+r+\ell-2}$ such that $\lambda\hat{T}=\boldsymbol{1}$.
This would lead to a contradiction that $1=\boldsymbol{1}\mu_{m}=\lambda\hat{T}\mu_{m}=\lambda\boldsymbol{0}=0$. 

Observations (i) and (ii) together imply that the matrix $\begin{bmatrix}\hat{T}\\
\boldsymbol{1}
\end{bmatrix}$ has full row rank. Consequently, we know $rank\left(\left[\begin{array}{c}
\text{\ensuremath{\hat{T}}}\\
\boldsymbol{1}
\end{array}\right]\right)>rank\begin{pmatrix}\begin{bmatrix}\hat{T}_{R}\\
\hat{E}\\
\boldsymbol{1}
\end{bmatrix}\end{pmatrix}.$

Now, we claim that there exists $x\in\mathbb{R}^{n}$ such that 
\begin{equation}
\begin{bmatrix}\hat{T}_{R}\\
\hat{E}\\
\boldsymbol{1}
\end{bmatrix}x=0\label{eq:R-kernel}
\end{equation}
and 
\begin{equation}
\hat{T}_{S}\,x\neq0.\label{eq:S-kernel}
\end{equation}
To see this, suppose by contradiction that for any $x$ that solves
\eqref{eq:R-kernel}, we have $\hat{T}_{S}\,x=0$. This would imply
that the set of solutions to \eqref{eq:R-kernel} and the set of solutions
to
\begin{equation}
\begin{bmatrix}\hat{T}\\
\boldsymbol{1}
\end{bmatrix}x=0\label{eq:both-kernel}
\end{equation}
coincide. By the Rank-Nullity Theorem, however, the subspace defined
by \eqref{eq:both-kernel} has dimension $|\Omega|-rank\left(\begin{bmatrix}\hat{T}\\
\boldsymbol{1}
\end{bmatrix}\right)$, while the subspace defined by \eqref{eq:R-kernel} has a higher
dimension $|\Omega|-rank\left(\begin{bmatrix}\hat{T}_{R}\\
\hat{E}\\
\boldsymbol{1}
\end{bmatrix}\right)$.

Consider an $x$ that satisfies \eqref{eq:R-kernel} and \eqref{eq:S-kernel}.
Consider two vectors, $\mu_{m}+\varepsilon x$ and $\mu_{m}-\varepsilon x$,
for some $\varepsilon\in\mathbb{R}_{>0}$. First we verify that for
sufficiently small $\varepsilon$, $\mu_{m}\pm\varepsilon x\in\Delta\text{\ensuremath{\Omega}}$.
Since $\boldsymbol{1}x=0$, it follows that $\boldsymbol{1}\left(\mu_{m}\pm\varepsilon x\right)=\boldsymbol{1}\mu_{m}=1$.
For $\omega_{j}\notin\hat{\Omega}$, we have $\left[\mu_{m}\right]_{j}>0$,
so for small enough $\varepsilon$, $\left[\mu_{m}\pm\varepsilon x\right]_{j}\geq0$.
For $\omega_{j}\in\hat{\Omega}$, we know $e_{j}$ is a row of $\hat{E}$,
so $e_{j}x=0$. Consequently, $\left[\mu_{m}\pm\varepsilon x\right]_{j}=e_{j}\left(\mu_{m}\pm\varepsilon x\right)=\left[\mu_{m}\right]_{j}=0.$
Thus, $\mu_{m}\pm\varepsilon x\in\Delta\text{\ensuremath{\Omega}}$.

Observe that $A_{R}^{*}(\mu_{m})=A_{R}^{*}(\mu_{m}\pm\varepsilon x)$.
First, for any $a\notin A_{R}^{*}(\mu_{m})$, if $\varepsilon$ is
sufficiently small, $a\notin A_{R}^{*}(\mu_{m}\pm\varepsilon x)$.
Therefore, $A_{R}^{*}(\mu_{m}\pm\varepsilon x)\subseteq A_{R}^{*}(\mu_{m})$.
But, $\hat{T}_{R}\,x=0$ implies that $\left(\mu_{m}\pm\varepsilon x\right)\cdot\bdu_{R}(a)$
is constant across $a\in A_{R}^{*}(\mu_{m})$, so $A_{R}^{*}(\mu_{m}\pm\varepsilon x)=A_{R}^{*}(\mu_{m})$.

Consider an alternative messaging strategy $\hat{\sigma}$ that is
identical to $\sigma$, except that the message $m$ is split into
two new messages, $m^{+}$ and $m^{-}$, which induce the beliefs
$\mu_{m}+\varepsilon x$ and $\mu_{m}-\varepsilon x$, respectively.\footnote{It is possible for $M_{\sigma}=M$, but we can consider an alternative
strategy that induces the same outcome as $\sigma$ and uses only
$\left|A\right|$ messages. We can also let $m$ play the role of
$m^{+}$ or $m^{-}$, so our assumption that $\left|M\right|\geq\left|A\right|+1$
suffices. } We consider $\hat{\rho}$ that agrees with $\rho$ on messages other
than $\left\{ m,m^{+},m^{-}\right\} $ and leads Receiver to break
indifferences in Sender's favor following $m^{+}$ and $m^{-}$. We
will show that $\left(\hat{\sigma},\hat{\rho}\right)$ yields a strictly
higher payoff to Sender, thus contradicting the assumption that $\left(\sigma,\rho\right)$
yields the persuasion payoff.

Since $\hat{T}_{S}\,x\neq0$, we know there is an $a_{i}\in\{a_{2},...,a_{k}\}$
such that $x\cdot\left(\bdu_{S}(a_{i})-\bdu_{S}(a_{1})\right)\neq0$.

Because $a_{1}\in A_{R}^{*}(\mu_{m}\pm\varepsilon x)=A_{R}^{*}(\mu_{m})$,
we have
\[
\max_{a\in A^{*}(\mu_{m})}\left(\mu_{m}+\varepsilon x\right)\cdot\left(\bdu_{S}(a)-\bdu_{S}(a_{1})\right)\geq0
\]
and 
\[
\max_{a\in A^{*}(\mu_{m})}\left(\mu_{m}-\varepsilon x\right)\cdot\left(\bdu_{S}(a)-\bdu_{S}(a_{1})\right)\geq0.
\]
We now establish that at least one of these inequalities has to be
strict. Suppose toward contradiction that both hold with equality.
The first equality implies that for all $a_{i}\in\{a_{2},...,a_{k}\}$,
$\left(\mu_{m}+\varepsilon x\right)\cdot\left(\bdu_{S}(a_{i})-\bdu_{S}(a_{1})\right)\leq0$,
which combined with the fact that $\mu_{m}\cdot\bdu_{S}(a_{i})=\mu_{m}\cdot\bdu_{S}(a_{1})$
implies that $x\cdot\left(\bdu_{S}(a_{i})-\bdu_{S}(a_{1})\right)\leq0$
for all $a_{i}\in\{a_{2},...,a_{k}\}$. Similarly, the second equality
implies that $-x\cdot\left(\bdu_{S}(a_{i})-\bdu_{S}(a_{1})\right)\leq0$
for all $a_{i}\in\{a_{2},...,a_{k}\}$. Together, this yields that
$x\cdot\left(\bdu_{S}(a_{i})-\bdu_{S}(a_{1})\right)=0$ for all $a_{i}\in\{a_{2},...,a_{k}\}$,
a contradiction. Hence, one of the inequalities has to be strict.

Consequently, Sender's interim payoff under $\hat{\sigma}$ (in the
event that $m$ is sent under $\sigma$) is 
\begin{align*}
 & \frac{1}{2}\max_{a\in A^{*}(\mu_{m})}\left(\mu_{m}+\varepsilon x\right)\cdot\bdu_{S}(a)+\frac{1}{2}\max_{a\in A^{*}(\mu_{m})}\left(\mu_{m}-\varepsilon x\right)\cdot\bdu_{S}(a)\\
> & \frac{1}{2}\left(\mu_{m}+\varepsilon x\right)\cdot\bdu_{S}(a_{1})+\frac{1}{2}\left(\mu_{m}-\varepsilon x\right)\cdot\bdu_{S}(a_{1})\\
= & \mu_{m}\cdot\bdu_{S}(a_{1})
\end{align*}
Thus, $\left(\hat{\sigma},\hat{\rho}\right)$ yields a strictly higher
payoff to Sender, contradicting the assumption that $\left(\sigma,\rho\right)$
yields the persuasion payoff.
\end{proof}
\begin{lem}
\label{lem:Sender-not-mix}In a scant-indifferences environment, if
a cheap-talk equilibrium $(\sigma,\rho)$ yields the persuasion payoff
and $\rho$ is pure-on-path, then there exists a partitional $\hat{\sigma}$
and a pure strategy $\hat{\rho}$ such that $\left|M_{\hat{\sigma}}\right|\leq\left|A\right|$
and $(\hat{\sigma},\hat{\rho})$ is a cheap-talk equilibrium and yields
the persuasion payoff.
\end{lem}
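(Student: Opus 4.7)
The plan is to use scant-indifferences (specifically Lemma \ref{lem:no-state-difference-between-two-actions}) to extract a well-defined state-to-action map from $(\sigma,\rho)$ and then implement it with a partitional profile. First, I will show that in any state $\omega$, all messages $m\in\Supp(\sigma(\cdot|\omega))$ induce the same Receiver action. By S-BR of $(\sigma,\rho)$, Sender is indifferent over such messages in state $\omega$; since $\rho$ is pure on-path, this yields $u_{S}(\rho(m),\omega)=u_{S}(\rho(m'),\omega)$ for any two such $m,m'$. Lemma \ref{lem:no-state-difference-between-two-actions} then forces $\rho(m)=\rho(m')$, so I may define a function $\alpha:\Omega\to A$ by $\alpha(\omega)=\rho(m)$ for any $m\in\Supp(\sigma(\cdot|\omega))$.

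Next, I construct the partitional profile. Pick any injection $f:A\to M$ (possible since $|M|>|A|$), set $\hat{\sigma}(\omega)=f(\alpha(\omega))$, set $\hat{\rho}(f(a))=a$ for each $a\in A$, and assign off-path messages to a fixed action $a^{*}\in A$ via $\hat{\rho}(m)=a^{*}$ for $m\notin f(A)$. By construction $\hat{\sigma}$ is partitional, $\hat{\rho}$ is pure, and $M_{\hat{\sigma}}\subseteq f(A)$ has cardinality at most $|A|$. The joint distribution over $(\omega,a)$ induced by $(\hat{\sigma},\hat{\rho})$ coincides with that induced by $(\sigma,\rho)$ (in each $\omega$ both profiles place action $\alpha(\omega)$ with probability one), so the persuasion payoff is attained. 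R-BR holds by linearity: the posterior $\hat{\mu}_{f(a)}$ under $\hat{\sigma}$ is a convex combination of the posteriors $\mu_{m}$ in the original equilibrium over $m\in M_{\sigma}$ with $\rho(m)=a$, each of which certifies $a$ as Receiver-optimal. For S-BR, any on-path deviation from $f(\alpha(\omega))$ to some $f(a')$ delivers action $a'$; by S-BR of $(\sigma,\rho)$, shifting weight in state $\omega$ to any $m'\in M_{\sigma}$ with $\rho(m')=a'$ is unprofitable, so $u_{S}(\alpha(\omega),\omega)\geq u_{S}(a',\omega)$; off-path deviations reduce to the on-path case since $\hat{\rho}(m)=a^{*}=\hat{\rho}(f(a^{*}))$ for $m\notin f(A)$.

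The only delicate step is showing that $\alpha$ is well-defined, which is exactly where scant-indifferences (via Lemma \ref{lem:no-state-difference-between-two-actions}) does the essential work. Without that hypothesis, Sender could split a state's mass across messages inducing different actions without violating S-BR, and the construction would collapse because $\alpha(\omega)$ would depend on the chosen $m\in\Supp(\sigma(\cdot|\omega))$. Once $\alpha$ is in hand, the remaining verifications are short bookkeeping arguments, relying only on linearity of expected utility and the S-BR property already assumed of $(\sigma,\rho)$.
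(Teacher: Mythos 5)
Your overall strategy is the same as the paper's: use S-BR plus Lemma \ref{lem:no-state-difference-between-two-actions} to show that all messages sent in a given state induce the same action, extract the map $\alpha:\Omega\to A$, and implement it partitionally. The first step and the verifications of the induced outcome distribution and of R-BR (your convex-combination phrasing is just the paper's summation over $m\in M_{i}$ restated in terms of posteriors) are fine.

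There is, however, a genuine flaw in your definition of $\hat{\rho}$. You set $\hat{\rho}(f(a))=a$ for \emph{every} $a\in A$, including actions outside $A^{*}\equiv\alpha(\Omega)$, the set of actions actually induced by $(\sigma,\rho)$. For $a'\notin A^{*}$ the message $f(a')$ is off-path under $\hat{\sigma}$, so R-BR places no restriction on $\hat{\rho}(f(a'))$, but S-BR must still hold against a deviation to $f(a')$, which under your $\hat{\rho}$ delivers $a'$. Your S-BR argument covers only deviations to $f(a')$ with $a'\in A^{*}$ (where a message $m'\in M_{\sigma}$ with $\rho(m')=a'$ exists) and deviations to $m\notin f(A)$ (which you map to $a^{*}$). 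Nothing rules out $u_{S}(a',\omega)>u_{S}(\alpha(\omega),\omega)$ for some $a'\notin A^{*}$: the original equilibrium never makes $a'$ available to Sender by any deviation, and the persuasion-payoff hypothesis is silent too (e.g., $a'$ could be strictly dominated for Receiver and hence never induced, yet be Sender's favorite action in some state). In that case $(\hat{\sigma},\hat{\rho})$ as you define it fails S-BR. The repair is the one the paper uses: let $\hat{\rho}$ take values only in $A^{*}$ --- assign $\hat{\rho}(f(a))=a$ for $a\in A^{*}$ and map every remaining message, including $f(a')$ for $a'\notin A^{*}$, to a fixed element of $A^{*}$. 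With that change your argument goes through and coincides with the paper's proof.
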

\begin{proof}
Suppose a cheap-talk equilibrium $(\sigma,\rho)$ yields the persuasion
payoff and $\rho$ is pure-on-path.

First, we show that for any $\omega$ and any $m,m'$ such that $\sigma(m|\omega),\sigma(m'|\omega)>0$,
$\rho(m)=\rho(m')$. The fact that both $m$ and $m'$ are sent in
$\omega$ implies, by S-BR, that $u_{S}(\rho(m),\omega)=u_{S}(\rho(m'),\omega)$.
Moreover, by Lemma \ref{lem:no-state-difference-between-two-actions},
there exist no distinct $a$ and $a'$ such that $u_{S}(a,\omega)=u_{S}(a',\omega)$,
so it must be that $\rho(m)=\rho(m')$.

Let $A^{*}=\{a\in A|a=\rho(m)\text{ for some }m\in M_{\sigma}\}$
be the set of actions that are taken on-path. Without loss, let $A^{*}=\{a_{1},...,a_{k}\}$.
For each $a_{i}$, let $M_{i}=\{m\in M_{\sigma}|\rho(m)=a_{i}\}$
be the set of on-path messages that induce action $a_{i}$, and $\Omega_{i}=\{\omega\in\Omega|\Supp(\sigma(\cdot|\omega))\subseteq M_{i}\}$
be the set of states that induce action $a_{i}$. Note that $\left\{ M_{i}\right\} _{i=1}^{k}$
is a partition of $M_{\sigma}.$ Moreover, it is easy to see that
$\{\Omega_{i}\}_{i=1}^{k}$ is a partition of $\Omega$. First, $\Omega_{i}$
cannot be empty because every $a_{i}\in A^{*}$ is taken on-path.
Second, every $\omega\in\Omega$ belongs to some $\Omega_{i}$ as
only actions in $A^{*}$ are taken on-path; hence, $\cup_{i}\Omega_{i}=\Omega$.
Finally, the fact that for any $\omega$ and any $m,m'$ such that
$\sigma(m|\omega),\sigma(m'|\omega)>0$ we have $\rho(m)=\rho(m')$
implies that if $i\neq j$, $\Omega_{i}$ and $\Omega_{j}$ are disjoint.

Now select one message in each $M_{i}$, and label it as $m_{i}$.
Consider the following alternative strategy profile $(\hat{\sigma},\hat{\rho})$: 
\begin{itemize}
\item $\hat{\sigma}(m_{i}|\omega)=1$ if $\omega\in\Omega_{i}$. 
\item $\hat{\rho}(m_{i})=a_{i}.$ 
\item $\hat{\rho}(m)=a_{1}$ if $m\in M\backslash\{m_{1},....m_{k}\}$. 
\end{itemize}
Note that $\hat{\sigma}$ is well defined because $\{\Omega_{i}\}_{i=1}^{k}$
is a partition of $\Omega$. By construction, $\hat{\sigma}$ is partitional,
$\left|M_{\hat{\sigma}}\right|\leq\left|A\right|,$ and $\hat{\rho}$
is a pure strategy. Moreover, under both $(\sigma,\rho)$ and $(\hat{\sigma},\hat{\rho})$,
every state in $\Omega_{i}$ induces action $a_{i}$ with probability
1. Thus, the two strategy profiles induce the same distribution over
states and actions, so $(\hat{\sigma},\hat{\rho})$ also yields the
persuasion payoff. It remains to show that $(\hat{\sigma},\hat{\rho})$
is a cheap-talk equilibrium.

Note that S-BR of $(\sigma,\rho)$ implies that for any $\omega$
and $m\in\Supp(\sigma(\cdot|\omega))$, we have
\[
u_{S}(\rho(m),\omega)\geq u_{S}(\rho(m'),\omega)\text{ for all }m'\in M_{\sigma}.
\]
Therefore, for any $\omega\in\Omega_{i}$, $u_{S}(a_{i},\omega)\geq u_{S}(a_{j},\omega)$
for all $a_{j}\in A^{*}$. This implies that $u_{S}(\hat{\rho}(\hat{\sigma}(\omega)),\omega)\geq u_{S}(\hat{\rho}(m'),\omega)$
for all $m'\in M$. Hence, $(\hat{\sigma},\hat{\rho})$ satisfies
S-BR.

The fact that $(\sigma,\rho)$ is R-BR implies that for all $m\in M_{\sigma}$,
\[
\sum_{\omega\in\Omega}\mu_{0}(\omega)\sigma(m|\omega)u_{R}(\rho(m),\omega)\geq\sum_{\omega\in\Omega}\mu_{0}(\omega)\sigma(m|\omega)u_{R}(a',\omega)\quad\text{for all }a'\in A.
\]
For any $i\in\left\{ 1,...,k\right\} $, we can sum the inequality
above over $m\in M_{i}$. Since for $m\in M_{i}$ we have $\rho\left(m\right)=a_{i}$,
this yields
\[
\sum_{\omega\in\Omega}\mu_{0}(\omega)\sum_{m\in M_{i}}\sigma(m|\omega)u_{R}(a_{i},\omega)\geq\sum_{\omega\in\Omega}\mu_{0}(\omega)\sum_{m\in M_{i}}\sigma(m|\omega)u_{R}(a',\omega)\quad\text{for all }a'\in A.
\]
Since for any $m\in M_{i}$ and $\omega\notin\Omega_{i}$, we have
$\sigma(m|\omega)=0$, the inequality above implies
\[
\sum_{\omega\in\Omega_{i}}\mu_{0}(\omega)\sum_{m\in M_{i}}\sigma(m|\omega)u_{R}(a_{i},\omega)\geq\sum_{\omega\in\Omega_{i}}\mu_{0}(\omega)\sum_{m\in M_{i}}\sigma(m|\omega)u_{R}(a',\omega)\quad\text{for all }a'\in A.
\]
Since $\sum_{m\in M_{i}}\sigma(m|\omega)=1$ if $\omega\in\Omega_{i}$,
we have
\begin{equation}
\sum_{\omega\in\Omega_{i}}\mu_{0}(\omega)u_{R}(a_{i},\omega)\geq\sum_{\omega\in\Omega_{i}}\mu_{0}(\omega)u_{R}(a',\omega)\quad\text{for all }a'\in A.\label{eq:R-BR-pure}
\end{equation}
To establish $(\hat{\sigma},\hat{\rho})$ is R-BR, we need to show
that for any $m_{i}\in M_{\hat{\sigma}}$, we have
\[
\sum_{\omega\in\Omega}\mu_{0}(\omega)\hat{\sigma}\left(m_{i}|\omega\right)\sum_{a\in A}\hat{\rho}\left(a|m_{i}\right)u_{R}(a,\omega)\geq\sum_{\omega\in\Omega}\mu_{0}(\omega)\hat{\sigma}\left(m_{i}|\omega\right)u_{R}(a',\omega)\quad\text{for all }a'\in A.
\]
But, by definition of $(\hat{\sigma},\hat{\rho})$, we know that $\hat{\sigma}\left(m_{i}|\omega\right)=0$
for $\omega\notin\Omega_{i}$ and that $\hat{\rho}\left(a_{i}|m_{i}\right)=1$.
Hence, the inequality above is equivalent to Equation $\eqref{eq:R-BR-pure}$.
\end{proof}

\subsection{\protect\label{subsec:Proof-of-Only-if-Theorem1}Proof of Theorem
\ref{Thm:commitmentWTA}}

Here we present and prove a result that generalizes Theorem \ref{Thm:commitmentWTA}
into a threefold equivalence. 
\begin{theoremprime}
Generically, the following statements are equivalent:
\begin{enumerate}
\item Commitment is valuable. 
\item Committed Sender values randomization. 
\item For any optimal persuasion profile $(\sigma,\rho)$, there exists
$m\in M_{\sigma}$ such that 
\[
|\arg\max_{a\in A}\sum_{\omega}\mu_{m}(\omega)u_{R}(a,\omega)|\geq2,
\]
where $\mu_{m}$ is defined as $\mu_{m}(\omega)=\frac{\mu_{0}(\omega)\sigma(m|\omega)}{\sum_{\omega}\mu_{0}(\omega)\sigma(m|\omega)}.$ 
\end{enumerate}
\end{theoremprime}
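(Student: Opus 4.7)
The plan is to work throughout on the generic set of environments satisfying both scant-indifferences and partitional-unique-response (Lemmas \ref{lemma:full-rank-genericity} and \ref{lemma:R-unique-response-genericity}), and to establish the three-way equivalence by proving the cyclic chain of contrapositives $\neg(1) \Rightarrow \neg(2) \Rightarrow \neg(3) \Rightarrow \neg(1)$.

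The step $\neg(1) \Rightarrow \neg(2)$ is immediate from Proposition \ref{Lemma:purity}: if commitment has no value, that proposition yields a partitional $\hat{\sigma}$ and a pure $\hat{\rho}$ such that $(\hat{\sigma}, \hat{\rho})$ is a cheap-talk equilibrium yielding the persuasion payoff. Being R-BR, it is in particular a partitional optimal persuasion profile, so the partitional persuasion payoff equals the persuasion payoff and committed Sender does not value randomization. The step $\neg(2) \Rightarrow \neg(3)$ is equally direct: if $(\sigma, \rho)$ is a partitional optimal persuasion profile, then each $m \in M_\sigma$ induces a posterior of the form $\mu_E$ for some non-empty $E \subseteq \Omega$, and partitional-unique-response forces Receiver to have a unique best response at each such belief; thus $(\sigma, \rho)$ itself witnesses $\neg(3)$.

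The substantive step is $\neg(3) \Rightarrow \neg(1)$, which generalizes the ``only if'' sketch given in the body for Theorem \ref{Thm:commitmentWTA} from partitional profiles to arbitrary optimal persuasion profiles. Let $(\sigma, \rho)$ be an optimal persuasion profile in which every $m \in M_\sigma$ induces a belief $\mu_m$ with a unique Receiver best response $a_m = \rho(m)$. The key claim is that for every $m \in M_\sigma$ and every $\omega$ with $\sigma(m|\omega) > 0$, $u_S(a_m, \omega) \geq u_S(a_{m'}, \omega)$ for all $m' \in M_\sigma$. If this failed for some strictly preferred $m' \in M_\sigma$, shifting a mass $\varepsilon$ from $\sigma(m|\omega)$ to $\sigma(m'|\omega)$ would produce perturbed posteriors arbitrarily close to $\mu_m$ and $\mu_{m'}$; since $a_m$ and $a_{m'}$ are uniquely Receiver-optimal on open neighborhoods of these beliefs, the perturbed profile remains R-BR while strictly raising Sender's payoff by an amount proportional to $\varepsilon\bigl(u_S(a_{m'}, \omega) - u_S(a_m, \omega)\bigr) > 0$, contradicting the optimality of $(\sigma, \rho)$. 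Given the claim, $(\sigma, \rho)$ satisfies S-BR among on-path messages; extending $\rho$ off-path by $\rho(m) = \rho(m^*)$ for any fixed $m^* \in M_\sigma$ rules out profitable off-path deviations as well. Hence $(\sigma, \rho)$ is a cheap-talk equilibrium yielding the persuasion payoff, so commitment has no value.

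The delicate point is the perturbation argument in $\neg(3) \Rightarrow \neg(1)$: preserving R-BR after shifting message probabilities requires that each $\mu_m$ lie strictly inside Receiver's best-response region for $a_m$, which is precisely what the uniqueness hypothesis in $\neg(3)$ guarantees. All other steps reduce either to Proposition \ref{Lemma:purity} or to the standing genericity property that partitional experiments never generate Receiver ties.
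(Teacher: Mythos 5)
Your proposal is correct and follows essentially the same route as the paper: your cycle $\neg(1)\Rightarrow\neg(2)\Rightarrow\neg(3)\Rightarrow\neg(1)$ is exactly the contrapositive of the paper's chain (2)$\Rightarrow$(1) via Proposition \ref{Lemma:purity}, (3)$\Rightarrow$(2) via partitional-unique-response, and (1)$\Rightarrow$(3) via the $\varepsilon$-mass-shifting perturbation that preserves R-BR because unique best responses persist in a neighborhood of each induced posterior. The only cosmetic difference is that you state the no-profitable-deviation condition message-by-message rather than against the interim mixture, which yields the same S-BR conclusion.
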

\begin{proof}
We establish the equivalence for any environment that satisfies both
partitional-unique-response and scant-indifferences. Since the set
of partitional-unique-response environments is generic (Lemma \ref{lemma:R-unique-response-genericity})
and the set of scant-indifferences environments is generic (Lemma
\ref{lemma:full-rank-genericity}), the set of environments that satisfy
both properties is also generic.

We will establish that (ii) implies (i), then that (i) implies (iii),
and finally that (iii) implies (ii).

Since we are in a scant-indifferences environment, (ii) implies (i)
by Proposition \ref{Lemma:purity}.

Next we wish to show that (i) implies (iii). We do so by establishing
the contrapositive. Suppose that there exists an optimal persuasion
profile $(\sigma,\rho)$ such that for every $m\in M_{\sigma}$, $\arg\max_{a\in A}\sum_{\omega}\mu_{m}(\omega)u_{R}(a,\omega)$
is unique. This implies that $\rho$ must be pure-on-path. We will
construct an optimal persuasion profile $(\sigma,\hat{\rho})$ that
is a cheap-talk equilibrium. Consider the following $\hat{\rho}$:
for all $m\in M_{\sigma}$, let $\hat{\rho}(m)=\rho(m)$; for $m\notin M_{\sigma}$,
let $\hat{\rho}(m)=\rho(m_{0})$ for some $m_{0}\in M_{\sigma}$.
Since $\hat{\rho}$ and $\rho$ coincide on path, $(\sigma,\rho)$
and $(\sigma,\hat{\rho})$ yield the same payoffs to both Sender and
Receiver. Therefore, $(\sigma,\hat{\rho})$ satisfies R-BR and yields
the persuasion payoff. It remains to show that $(\sigma,\hat{\rho})$
is S-BR, which is equivalent to Sender's interim optimality: for each
$\omega$, 
\begin{equation}
\sum_{m}\sigma(m|\omega)u_{S}(\hat{\rho}(m),\omega)\geq u_{S}(\hat{\rho}(m'),\omega)\label{eq:S-BR-interim}
\end{equation}
for all $m'\in M$. First, note that it suffices to show that Equation
$\left(\ref{eq:S-BR-interim}\right)$ holds for $m'\in M_{\sigma}$.
Once we establish that, we know $\sum_{m}\sigma(m|\omega)u_{S}(\hat{\rho}(m),\omega)\geq u_{S}(\hat{\rho}(m_{0}),\omega)$
since $m_{0}\in M_{\sigma}$. Therefore, since $\hat{\rho}(m')=\rho(m_{0})=\hat{\rho}(m_{0})$
for $m'\notin M_{\sigma}$, Equation $\left(\ref{eq:S-BR-interim}\right)$
holds for $m'\notin M_{\sigma}$. 

Now, suppose toward contradiction that there exist $\hat{\omega}$
and $\hat{m}\in M_{\sigma}$ such that $\sum_{m}\sigma(m|\hat{\omega})u_{S}(\hat{\rho}(m),\hat{\omega})<u_{S}(\hat{\rho}(\hat{m}),\hat{\omega})$.
Consider an alternative messaging strategy $\hat{\sigma}$: $\hat{\sigma}(\omega)=\sigma(\omega)$
for $\omega\neq\hat{\omega}$ while $\hat{\sigma}\left(\hat{\omega}\right)$
sends the same distribution of messages as $\sigma\left(\hat{\omega}\right)$
with probability $1-\varepsilon$ and otherwise sends message $\hat{m}$.
Formally, $\hat{\sigma}\left(m|\hat{\omega}\right)=\begin{cases}
\left(1-\varepsilon\right)\sigma\left(m|\hat{\omega}\right) & \text{if }m\neq\hat{m}\\
\left(1-\varepsilon\right)\sigma\left(m|\hat{\omega}\right)+\varepsilon & \text{if }m=\hat{m}
\end{cases}$.

Fix any $m\in M_{\sigma}$. Since $A$ is finite, the fact that $\hat{\rho}(m)=\rho\left(m\right)$
is the unique $\arg\max_{a\in A}\sum_{\omega}\mu_{m}(\omega)u_{R}(a,\omega)$
implies that $\hat{\rho}(m)$ remains the best response for a neighborhood
of beliefs around $\mu_{m}$. Therefore, for sufficiently small $\varepsilon$,
$(\hat{\sigma},\hat{\rho})$ is R-BR. Hence, $(\hat{\sigma},\hat{\rho})$
is a persuasion profile and yields the payoff 
\begin{align*}
U_{S}(\hat{\sigma},\hat{\rho}) & =U_{S}(\sigma,\hat{\rho})+\varepsilon\mu_{0}(\hat{\omega})[u_{S}(\hat{\rho}(\hat{m}),\hat{\omega})-\sum_{m}\sigma(m|\hat{\omega})u_{S}(\hat{\rho}(m),\hat{\omega})]\\
 & >U_{S}(\sigma,\hat{\rho}).
\end{align*}
This contradicts the fact that $(\sigma,\hat{\rho})$ yields the persuasion
payoff.

Finally, since we are considering a partitional-unique-response environment,
the fact that (iii) implies (ii) is immediate.
\end{proof}

\subsection{\protect\label{subsec:Proof-of-Theorem2}Proof of Theorem \ref{Thm:commitmentWTP}}

Lemma \ref{lemma:full-rank-genericity} and Proposition \ref{Lemma:purity}
jointly imply Theorem \ref{Thm:commitmentWTP}.

\subsection{\protect\label{subsec:Proof-of-Theorem3}Proof of Theorem \ref{Thm:how-often}}

Let $\lambda_{n}$ denote the Lebesgue measure on $\mathbb{R}^{n}$.
Recall that the set of environments is $[0,1]^{2|A||\Omega|}$. For
any property $p$ of an environment, let $\lambda_{2|A||\Omega|}(p):=\lambda_{2|A||\Omega|}(\{(u_{S},u_{R})|(u_{S},u_{R})\text{ satisfies }p\})$
denote the share of environments that satisfy $p$.

Given $u_{S}$, let $\Omega_{i}^{u_{S}}=\{\omega\in\Omega|a_{i}\in\arg\max_{a\in A}u_{S}(a,\omega)\}$
denote the set of states where $a_{i}$ is an ideal action for Sender.\footnote{In the body of the paper we denoted this set as $\Omega_{i}$, but
for the formal proofs, it is helpful to keep track of the fact that
this set depends on $u_{S}$.} Note that each $\omega$ must belong to at least one $\Omega_{i}^{u_{S}}$,
but the same $\omega$ may appear in multiple $\Omega_{i}^{u_{S}}$.
Say that $u_{S}$ is \emph{regular} if $\Omega_{i}^{u_{S}}\cap\Omega_{j}^{u_{S}}=\emptyset$
for $i\neq j$. We also say that an environment $\left(u_{S},u_{R}\right)$
is regular if $u_{S}$ is regular. Recall that an environment is felicitous
if Sender obtains his ideal payoff under cheap talk. We say that an
environment is regular-felicitous if it is regular and felicitous.
Lemmas \ref{lemma:full-rank-genericity} and \ref{lem:no-state-difference-between-two-actions}
jointly imply that $\lambda_{|A||\Omega|}(\{u_{S}\in[0,1]^{A\times\Omega}|u_{S}\text{ is regular\}})=1$,
so $\lambda_{2\left|A\right|\left|\Omega\right|}\left(\text{regular-felicity}\right)=\lambda_{2\left|A\right|\left|\Omega\right|}\left(\text{felicity}\right)$.

The next lemma provides a convenient characterization of regular-felicity.
\begin{lem}
\label{lem:Regular-felicity}A regular environment is felicitous if
and only if for each $i$,
\begin{equation}
\sum_{\omega\in\Omega_{i}^{u_{S}}}\mu_{0}\left(\omega\right)\left(u_{R}\left(a_{i},\omega\right)-u_{R}\left(a',\omega\right)\right)\geq0\quad\text{for all }a'\in A.\label{eq:felicitous-with-regularity}
\end{equation}
\end{lem}
\begin{proof}
Suppose that $u_{S}$ is regular and that \eqref{eq:felicitous-with-regularity}
holds. Select $|A|$ elements from $M$ and denote them by $m_{1}$
through $m_{|A|}$. Consider the following strategy profile $(\sigma,\rho)$:
\begin{itemize}
\item $\sigma(\omega)=m_{i}$ for every $\omega\in\Omega_{i}^{u_{S}}$;
\item $\rho(m)=a_{i}$ for $m=m_{i}$ ;
\item $\rho(m)=a_{1}$ for $m\notin\{m_{1},...,m_{|A|}\}$. 
\end{itemize}
From regularity and \eqref{eq:felicitous-with-regularity}, $\left(\sigma,\rho\right)$
is R-BR. In addition, in every state, Sender induces his ideal action,
so $\left(\sigma,\rho\right)$ is S-BR and yields Sender's ideal payoff.
Hence, the environment is felicitous.

Conversely, suppose that $u_{S}$ is regular and that Sender obtains
his ideal payoff in a cheap-talk equilibrium $(\sigma,\rho)$. By
regularity, Sender's ideal action in each state is unique. This implies
that $\rho$ must be pure on-path; otherwise, Sender would fail to
induce his unique ideal action with probability 1 in some state. 

For each $i$, let $M_{i}=\{m\in M_{\sigma}|\rho(m)=a_{i}\}$ be the
set of on-path messages that induce action $a_{i}$. For any $m\in M_{i}$,
R-BR implies that

\[
\sum_{\omega\in\Omega}\mu_{0}(\omega)\sigma(m|\omega)[u_{R}(\omega,a_{i})-u_{R}(\omega,a')]\geq0\quad\text{for all }a'\in A.
\]

Summing the inequality above over $m\in M_{i}$, we obtain

\[
\sum_{\omega\in\Omega}\sum_{m\in M_{i}}\mu_{0}(\omega)\sigma(m|\omega)[u_{R}(\omega,a_{i})-u_{R}(\omega,a')]\geq0\quad\text{for all }a'\in A.
\]

Since $(\sigma,\rho)$ yields Sender's ideal payoff, we have that
(i) for $\omega\in\Omega_{i}$,$\sum_{m\in M_{i}}\sigma(m|\omega)=1$;
and (ii) for $\omega\notin\Omega_{i}$, $\sigma(m|\omega)=0$ for
all $m\in M_{i}$. Therefore, the inequality simplifies to

\[
\sum_{\omega\in\Omega_{i}}\mu_{0}(\omega)[u_{R}(\omega,a_{i})-u_{R}(\omega,a')]\geq0\quad\text{for all }a'\in A.
\]
\end{proof}

\subsubsection{Arbitrary state space}

In this section, we establish that for any $\Omega$, the share of
environments such that commitment has no value is weakly greater than
$\frac{1}{\left|A\right|^{\left|A\right|}}.$ 
\begin{lem}
\label{lem:obedient-probability-bound} $\lambda_{2\left|A\right|\left|\Omega\right|}\left(\text{regular-felicity}\right)\geq\frac{1}{\left|A\right|^{\left|A\right|}}$.
\end{lem}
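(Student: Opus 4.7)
The plan is to condition on $u_S$ via Fubini and exploit a symmetry argument in $u_R$. First, by Lemmas \ref{lemma:full-rank-genericity} and \ref{lem:no-state-difference-between-two-actions}, the regular Sender payoffs form a full-measure subset of $[0,1]^{|A||\Omega|}$, so it suffices to lower-bound the $u_R$-slice of felicitous environments for each regular $u_S$. For such a $u_S$, the sets $\Omega_1^{u_S},\ldots,\Omega_{|A|}^{u_S}$ are pairwise disjoint and cover $\Omega$. The felicity condition \eqref{eq:obedient-1} for index $i$ depends only on the restriction of $u_R$ to $A\times \Omega_i^{u_S}$, and these coordinate blocks are pairwise disjoint, so the conditions across indices are independent. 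Conditions indexed by an empty $\Omega_i^{u_S}$ are vacuous.

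The key step is to show that for each non-empty $\Omega_i^{u_S}$, the share of $u_R$ restricted to the block $A\times\Omega_i^{u_S}$ that satisfies \eqref{eq:obedient-1} equals $\tfrac{1}{|A|}$. I would do this with a symmetry argument: for any permutation $\pi$ of $A$, the relabeling $u_R(a,\omega)\mapsto u_R(\pi(a),\omega)$ is a Lebesgue-measure-preserving bijection of $[0,1]^{A\times\Omega_i^{u_S}}$ that maps the event ``$a_i$ is a weak $\arg\max$ of $\sum_{\omega\in\Omega_i^{u_S}}\mu_0(\omega)u_R(a,\omega)$'' to the analogous event for $a_{\pi(i)}$. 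Hence all $|A|$ such events have equal measure. Their union is the whole block, and their pairwise intersections sit inside the measure-zero tie set $\{u_R:\sum_{\omega\in\Omega_i^{u_S}}\mu_0(\omega)(u_R(a_j,\omega)-u_R(a_k,\omega))=0\}$ for $j\neq k$, so each event has measure exactly $\tfrac{1}{|A|}$.

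Putting the pieces together, for regular $u_S$ with $k(u_S)$ non-empty action-ideal sets $\Omega_i^{u_S}$, the share of $u_R$ satisfying felicity is $\left(\tfrac{1}{|A|}\right)^{k(u_S)}\geq\tfrac{1}{|A|^{|A|}}$ since $k(u_S)\leq|A|$. Fubini then yields $\lambda_{2|A||\Omega|}(\text{felicity})\geq\tfrac{1}{|A|^{|A|}}$. I do not anticipate any substantive obstacle: the only mild care needed is to verify that the tie sets have Lebesgue measure zero (which follows because each is contained in a hyperplane in $\mathbb{R}^{|A||\Omega_i^{u_S}|}$), and to ensure Fubini applies, which is immediate because felicity is defined by finitely many polynomial inequalities and is therefore Borel measurable.
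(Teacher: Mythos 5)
Your proposal is correct and follows essentially the same route as the paper: condition on a regular $u_{S}$ (full measure by Lemmas \ref{lemma:full-rank-genericity} and \ref{lem:no-state-difference-between-two-actions}), use the product structure of the felicity conditions across the disjoint blocks $A\times\Omega_{i}^{u_{S}}$, show each block-event has measure exactly $\frac{1}{|A|}$ by the action-permutation symmetry, and integrate over $u_{S}$. The only difference is that you spell out the symmetry and null-tie-set argument that the paper compresses into ``by symmetry,'' which is a fine (and slightly more explicit) rendering of the same step.
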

\begin{proof}
Fix some regular $u_{S}$. For any non-empty $\Omega_{i}^{u_{S}}$,
let 
\[
E_{i}=\{u_{R}\in[0,1]^{A\times\Omega_{i}^{u_{S}}}|\sum_{\omega\in\Omega_{i}^{u_{S}}}\mu_{0}(\omega)u_{R}(a_{i},\omega)>\sum_{\omega\in\Omega_{i}^{u_{S}}}\mu_{0}(\omega)u_{R}(a',\omega),\forall a'\neq a_{i}\}
\]
denote the set of Receiver's preferences on $A\times\Omega_{i}^{u_{S}}$
such that $a_{i}$ is Receiver's uniquely optimal action given the
information that $\omega\in\Omega_{i}^{u_{S}}$. By symmetry, we have
that $\lambda_{|A||\Omega_{i}^{u_{S}}|}(E_{i})=\frac{1}{|A|}$.\footnote{Note that the set of Receiver's preferences on $A\times\Omega_{i}^{u_{S}}$
such that Receiver is indifferent between two actions, given the information
that $\omega\in\Omega_{i}^{u_{S}}$, has measure zero.} The set of $u_{R}\in[0,1]^{A\times\Omega}$ such that $(u_{S},u_{R})$
is felicitous is $\prod_{i:\Omega_{i}^{u_{S}}\text{ is non-empty}}E_{i}$,
whose measure is
\begin{align}
\lambda_{\left|A\right|\left|\Omega\right|}\left(\prod_{i:\Omega_{i}^{u_{S}}\text{ is non-empty}}E_{i}\right) & =\prod_{i:\Omega_{i}^{u_{S}}\text{ is non-empty}}\lambda_{|A||\Omega_{i}^{u_{S}}|}(E_{i})\nonumber \\
 & =\prod_{i:\Omega_{i}^{u_{S}}\text{ is non-empty}}(1/|A|)\nonumber \\
 & \geq\prod_{i\in\left\{ 1,...,\left|A\right|\right\} }(1/|A|)\nonumber \\
 & =\frac{1}{|A|^{|A|}}.\label{eq:obedient-probability}
\end{align}
So, we have established that for any regular $u_{S}$, $\lambda_{\left|A\right|\left|\Omega\right|}(\{u_{R}|(u_{S},u_{R})\text{ is felicitous}\})\geq\frac{1}{|A|^{|A|}}.$
Recall that $\lambda_{\left|A\right|\left|\Omega\right|}(\{u_{S}|u_{S}\text{ is regular}\})=1$.
Therefore, 
\begin{align*}
\lambda_{2\left|A\right|\left|\Omega\right|}\left(\text{regular-felicity}\right) & =\int_{\{u_{S}|u_{S}\text{ is regular}\}}\int_{\{u_{R}\in[0,1]^{A\times\Omega}|(u_{S},u_{R})\text{ is felicitous}\}}d\lambda_{\left|A\right||\Omega|}\:d\lambda_{\left|A\right|\left|\Omega\right|}\\
 & \geq\int_{\{u_{S}|u_{S}\text{ is regular}\}}\frac{1}{|A|^{|A|}}d\lambda_{\left|A\right|\left|\Omega\right|}\\
 & =\frac{1}{|A|^{|A|}}.
\end{align*}
\end{proof}
Since commitment has no value in any felicitous environment, Lemma
\ref{lem:obedient-probability-bound} implies that 
\begin{align*}
\lambda_{2\left|A\right|\left|\Omega\right|}(\text{commitment has no value}) & \geq\frac{1}{\left|A\right|^{\left|A\right|}}.\\
\end{align*}

\subsubsection{Limit as $\left|\Omega\right|\rightarrow\infty$}

In this section, we establish that as $\left|\Omega\right|\rightarrow\infty$:
($*$) the share of felicitous environments converges to $\frac{1}{\left|A\right|^{\left|A\right|}}$,
and ($**$) generically, commitment has no value if and only if the
environment is felicitous. Together, these results imply part (ii)
of Theorem \ref{Thm:how-often}.

We first give an outline of the proof. We first establish part ($**$).
We show that generically, if the environment is jointly-inclusive,\footnote{Recall that an environment is jointly-inclusive if for every action
$a$, there is some state $\omega$ such that $a$ is the unique ideal
action for both Sender and Receiver in $\omega$.} then commitment having no value implies felicity (Lemma \ref{lem:permissive-implies-obedient}).
We then show that as $\left|\Omega\right|\rightarrow\infty$, the
share of jointly-inclusive preferences converges to one (Lemma \ref{lem:permissive-when-Omega-large}).
Combining these two results, we conclude that as $\left|\Omega\right|\rightarrow\infty$,
generically, commitment has no value if and only if the environment
is felicitous (Lemma \ref{lem:CNV->obedience}). To establish part
($*$), recall that $\lambda_{2\left|A\right|\left|\Omega\right|}\left(\text{felicity}\right)\geq\frac{1}{\left|A\right|^{\left|A\right|}}$
and that the reason this is an inequality is the possibility that
some $\Omega_{i}^{u_{S}}$ might be empty. We show that as $\left|\Omega\right|\rightarrow\infty$,
the share of preferences such that some $\Omega_{i}^{u_{S}}$ is empty
converges to zero, which implies that the share of felicitous preferences
converges to $\frac{1}{\left|A\right|^{\left|A\right|}}$ (Lemma \ref{lem:obedience-limit-probability}).
\begin{lem}
\label{lem:permissive-implies-obedient} If commitment has no value
in a jointly-inclusive environment that satisfies partitional-unique-response
and scant-indifferences, then the environment is felicitous. 
\end{lem}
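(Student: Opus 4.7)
The plan is to follow the proof sketch given in the body of the paper, carefully verifying each step using the three generic assumptions on the environment. First, because commitment has no value and the environment satisfies scant-indifferences, Proposition \ref{Lemma:purity} supplies a partitional messaging strategy $\sigma$ and a pure action strategy $\rho$ such that $(\sigma,\rho)$ is a cheap-talk equilibrium yielding the persuasion payoff. For each $m \in M_\sigma$ write $\Omega_m = \{\omega : \sigma(\omega)=m\}$ and $a_m = \rho(m)$.

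The first substantive step is to show that every $a \in A$ is induced by $(\sigma,\rho)$. Suppose toward a contradiction that some $a^*$ is never taken. By joint-inclusivity pick $\omega^*$ where $a^*$ is an ideal action for both players. Fix an off-path message $m^* \notin M_\sigma$ and consider $\hat{\sigma}$ that agrees with $\sigma$ except that in state $\omega^*$ it sends $m^*$ with probability $\varepsilon > 0$ and $\sigma(\omega^*)$ with probability $1-\varepsilon$; let $\hat{\rho}$ agree with $\rho$ on $M_\sigma$ and set $\hat{\rho}(m^*) = a^*$. Since the belief induced by $m^*$ is concentrated on $\omega^*$ and $a^*$ is Receiver-ideal there, R-BR holds at $m^*$. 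For $m \in M_\sigma$ with $m \neq \sigma(\omega^*)$ the induced belief is unchanged; for $m = \sigma(\omega^*)$ the belief moves slightly, but partitional-unique-response makes $a_m$ the strict maximizer at $\mu_m$, so for small $\varepsilon$ R-BR is preserved by continuity. Thus $(\hat{\sigma},\hat{\rho})$ is a persuasion profile. Since Lemma \ref{lem:no-state-difference-between-two-actions} rules out Sender indifference between distinct actions at any state, $u_S(a^*,\omega^*) > u_S(a_{\sigma(\omega^*)},\omega^*)$, so Sender's payoff strictly increases by $\varepsilon \mu_0(\omega^*)[u_S(a^*,\omega^*) - u_S(a_{\sigma(\omega^*)},\omega^*)]$, contradicting optimality.

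The second step shows that $\rho(\sigma(\omega))$ is Sender's ideal action at every $\omega$. Otherwise, pick $\omega$ where Sender strictly prefers some $a^* \neq a_{\sigma(\omega)}$ (strictness again from scant-indifferences); by the first step $a^* = \rho(m')$ for some $m' \in M_\sigma$, and deviating in state $\omega$ from $\sigma(\omega)$ to $m'$ is strictly profitable for Sender, contradicting S-BR of $(\sigma,\rho)$.

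Finally, since Lemma \ref{lem:no-state-difference-between-two-actions} implies $u_S$ is regular, $\{\Omega_i^{u_S}\}_i$ partitions $\Omega$. Letting $M_i = \{m \in M_\sigma : \rho(m) = a_i\}$, the two previous steps give $\Omega_i^{u_S} = \bigsqcup_{m \in M_i} \Omega_m$. R-BR at each $m \in M_i$ yields $\sum_{\omega \in \Omega_m} \mu_0(\omega)[u_R(a_i,\omega) - u_R(a,\omega)] \geq 0$ for every $a$; summing over $m \in M_i$ gives exactly the felicity inequality for $a_i$. The main obstacle is the perturbation argument in the first step: one must verify that the perturbed profile remains R-BR at every on-path message, and this is where partitional-unique-response is essential, since without a strict best response at $\mu_{\sigma(\omega^*)}$ the perturbation could move Receiver's best response across a boundary and the deviation would fail to be feasible.
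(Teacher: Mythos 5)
Your proof is correct and follows essentially the same route as the paper's: invoke Proposition \ref{Lemma:purity}, show every action is induced via the joint-inclusivity perturbation (with partitional-unique-response securing R-BR at the perturbed on-path message), conclude $\rho(\sigma(\omega))$ is Sender-ideal everywhere, and sum the R-BR inequalities over $M_i$ to obtain felicity. The only cosmetic difference is that you derive strictness of Sender's preference at $\omega^*$ from scant-indifferences rather than from the strict-inequality form of joint-inclusivity, which is equally valid.
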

\begin{proof}
Consider a jointly-inclusive environment that satisfies partitional-unique-response
and scant-indifferences and suppose that commitment has no value.
By Proposition \ref{Lemma:purity}, there is a partitional $\sigma$
and a pure strategy $\rho$ such that $\left|M_{\sigma}\right|\leq\left|A\right|$
and $(\sigma,\rho)$ is a cheap-talk equilibrium and yields the persuasion
payoff.

First, note that every action is induced under $(\sigma,\rho)$; that
is, for any $a\in A$, there exists $\omega$ such that $a=\rho(\sigma(\omega))$.
To see why, suppose toward contradiction that there is an $a^{*}\in A$
that is not induced. Since the environment is jointly-inclusive, there
exists $\omega^{*}$ such that 
\begin{equation}
u_{S}(a^{*},\omega^{*})>u_{S}(a,\omega^{*})\text{ and }u_{R}(a^{*},\omega^{*})>u_{R}(a,\omega^{*})\text{ for all }a\neq a^{*}.\label{eq:permissive}
\end{equation}
Since $\left|M_{\sigma}\right|\leq\left|A\right|<\left|M\right|$,
there is an unsent message, say $m^{*}$.

Consider the strategy profile $(\hat{\sigma},\hat{\rho})$: 
\begin{itemize}
\item $\hat{\sigma}(\omega)=\sigma(\omega)$ for $\omega\neq\omega^{*}$,
and $\hat{\sigma}\left(m|\omega^{*}\right)=\begin{cases}
\left(1-\varepsilon\right) & \text{if }m=\sigma(\omega^{*})\\
\varepsilon & \text{if }m=m^{*}\\
0 & \text{otherwise}
\end{cases}$.
\item $\hat{\rho}(m)=\rho(m)$ for $m\neq m^{*}$, and $\hat{\rho}(m^{*})=a^{*}$.
\end{itemize}
Note that $(\hat{\sigma},\hat{\rho})$ is R-BR for sufficiently small
$\varepsilon$. For any $m\notin\{\sigma(\omega^{*}),m^{*}\}$, Receiver's
belief upon observing $m$ is unchanged, so $\hat{\rho}(m)=\rho(m)$
remains a best response. For $m=m^{*}$, \eqref{eq:permissive} implies
that $\hat{\rho}(m^{*})=a^{*}$ is the best response. For $m=\sigma(\omega^{*})$,
the fact the environment satisfies partitional-unique-response implies
that $\hat{\rho}\left(m\right)=\rho(m)$ is the unique best response
to $\mu_{m}$. Moreover, since $A$ is finite, this further implies
that $\hat{\rho}(m)$ remains the best response for a neighborhood
of beliefs around $\mu_{m}$. Therefore, for sufficiently small $\varepsilon$,
$\hat{\rho}(m)$ remains a best response. 

Now, note that $\rho(\sigma(\omega^{*}))\neq a^{*}$ because $a^{*}$
is not induced under $(\sigma,\rho)$. By \eqref{eq:permissive},
\begin{align*}
U_{S}(\hat{\sigma},\hat{\rho}) & =U_{S}(\sigma,\rho)+\varepsilon\mu_{0}(\omega^{*})\left(u_{S}(a^{*},\omega^{*})-u_{S}(\rho(\sigma(\omega^{*})),\omega^{*})\right)\\
 & >U_{S}(\sigma,\rho).
\end{align*}
This contradicts the fact that $(\sigma,\rho)$ yields the persuasion
payoff. Hence, we have established that every action is induced under
$(\sigma,\rho)$. 

Next, we show that this fact, coupled with the maintained assumptions,
implies that the environment is felicitous. Recall that $(\sigma,\rho)$
is a cheap-talk equilibrium; hence for each $\omega$, 
\[
u_{S}(\rho(\sigma(\omega)),\omega)\geq u_{S}(\rho(m),\omega)\text{ for all }m\in M.
\]
Since every action is induced under $(\sigma,\rho)$, the inequality
above is equivalent to
\[
u_{S}(\rho(\sigma(\omega)),\omega)\geq u_{S}(a,\omega)\text{ for all }a\in A.
\]
Moreover, since the environment satisfies scant-indifferences, Lemma
\ref{lem:no-state-difference-between-two-actions} implies that the
environment is regular and
\begin{equation}
u_{S}(\rho(\sigma(\omega)),\omega)>u_{S}(a,\omega)\text{ for all }a\neq\rho(\sigma(\omega)).\label{eq:sender-ideal}
\end{equation}
Hence, $\Omega_{i}^{u_{S}}=\{\omega\in\Omega|\rho(\sigma(\omega))=a_{i}\}$
and $\Omega_{i}^{u_{S}}\cap\Omega_{j}^{u_{S}}=\emptyset$ for $i\neq j$.
Let $M_{i}=\{m\in M_{\sigma}|\rho(m)=a_{i}\}$. For each $i$ and
each $m\in M_{i}$, R-BR of $(\sigma,\rho)$ implies 
\[
\sum_{\omega\in\{\omega:\sigma(\omega)=m\}}\mu_{0}(\omega)u_{R}(a_{i},\omega)\geq\sum_{\omega\in\{\omega:\sigma(\omega)=m\}}\mu_{0}(\omega)u_{R}(a',\omega)\text{ for all }a'\in A.
\]
Summing over all $m\in M_{i}$, and noting that $\cup_{m\in M_{i}}\{\omega:\sigma(\omega)=m\}=\{\omega\in\Omega|\rho(\sigma(\omega))=a_{i}\}=\Omega_{i}^{u_{S}}$,
we have 
\[
\sum_{\omega\in\Omega_{i}^{u_{S}}}\mu_{0}(\omega)u_{R}(a_{i},\omega)\geq\sum_{\omega\in\Omega_{i}^{u_{S}}}\mu_{0}(\omega)u_{R}(a',\omega)\text{ for all }a'\in A.
\]
Hence, by Lemma \ref{lem:Regular-felicity}, the environment is felicitous.
\end{proof}
\begin{lem}
\label{lem:permissive-when-Omega-large} Fix $A$. As $|\Omega|\rightarrow\infty$,
$\lambda_{2\left|A\right|\left|\Omega\right|}\left(\text{joint-inclusivity}\right)\rightarrow1$. 
\end{lem}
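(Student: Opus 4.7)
The plan is to show that for each individual action $a\in A$, the share of environments in which no state has $a$ as an ideal action for both Sender and Receiver decays exponentially in $|\Omega|$, and then to finish via a union bound over the $|A|$ actions. Write $B_a$ for the set of environments such that there is no $\omega$ with $a\in\arg\max_{a'}u_S(a',\omega)\cap\arg\max_{a'}u_R(a',\omega)$; then the set of environments that fail joint-inclusivity is $\bigcup_{a\in A}B_a$, and $\lambda_{2|A||\Omega|}(\bigcup_a B_a)\leq\sum_a\lambda_{2|A||\Omega|}(B_a)$.

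Because Lebesgue measure on $[0,1]^{2|A||\Omega|}$ is the product of uniform measures over coordinates, I can treat the $2|A||\Omega|$ payoff values as independent uniform draws on $[0,1]$. For a fixed $\omega$ and a fixed $a$, the $|A|$ values $(u_S(a',\omega))_{a'\in A}$ are i.i.d.\ uniform, so by symmetry $a$ is Sender's ideal action in $\omega$ with probability $1/|A|$; ties happen on a Lebesgue-null set and can be ignored. Independently, the same holds for Receiver. Hence the probability that $a$ is jointly ideal in $\omega$ equals $1/|A|^2$.

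Since the payoff vectors $(u_S(\cdot,\omega),u_R(\cdot,\omega))$ are independent across $\omega$, the probability that $a$ fails to be jointly ideal in every state equals $\bigl(1-1/|A|^2\bigr)^{|\Omega|}$, i.e., $\lambda_{2|A||\Omega|}(B_a)=\bigl(1-1/|A|^2\bigr)^{|\Omega|}$. The union bound then gives
\[
1-\lambda_{2|A||\Omega|}(\text{joint-inclusivity})\;\leq\;|A|\cdot\bigl(1-1/|A|^2\bigr)^{|\Omega|},
\]
which tends to $0$ as $|\Omega|\to\infty$, yielding the claim.

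I do not anticipate a substantive obstacle: the argument reduces to independence across players and states plus a symmetry computation. The only care needed is to confirm that the measure-zero set of payoff profiles with ties among $\{u_S(a',\omega)\}_{a'}$ or $\{u_R(a',\omega)\}_{a'}$ does not affect the computation, which follows immediately from absolute continuity of the uniform distribution.
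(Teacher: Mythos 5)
Your proposal is correct and follows essentially the same route as the paper's own proof: a per-state, per-action symmetry computation giving probability $\frac{1}{|A|^{2}}$ of joint ideality, independence across states yielding the factor $\left(1-\frac{1}{|A|^{2}}\right)^{|\Omega|}$, and a union bound over the $|A|$ actions. Your remark that ties form a Lebesgue-null set matches the paper's use of \emph{unique} ideal actions in the definition, so there is no discrepancy to worry about.
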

\begin{proof}
For each $\omega\in\Omega$ and $a\in A$, let $E_{a,\omega}^{S}$
denote the set of Sender's preferences on $A\times\{\omega\}$ such
that action $a$ is Sender's uniquely ideal action in state $\omega$,
i.e., $E_{a,\omega}^{S}=\{u_{S}\in\left[0,1\right]^{A\times\left\{ \omega\right\} }|u_{S}(a,\omega)>u_{S}(a',\omega),\forall a'\neq a\}$.
Because the set of Sender's preferences on $A\times\{\omega\}$ such
that Sender is indifferent between two actions at $\omega$ has measure
zero, by symmetry we have that $\lambda_{\left|A\right|}\left(E_{a,\omega}^{S}\right)=1/|A|$
for each $\left(a,\omega\right)$. Similarly, $\lambda_{\left|A\right|}\left(E_{a,\omega}^{R}\right)=1/|A$
for each $\left(a,\omega\right)$ where $E_{a,\omega}^{R}=\{u_{R}\in[0,1]^{A\times\{\omega\}}|u_{R}(a,\omega)>u_{R}(a',\omega),\forall a'\neq a\}$.
Now, let $E_{a,\omega}=E_{a,\omega}^{S}\times E_{a,\omega}^{R}$ denote
the set of preferences on $A\times\{\omega\}$ such that $a$ is the
unique ideal action for both Sender and Receiver in state $\omega.$
By the product property of the Lebesgue measure, $\lambda_{2|A|}(E_{a,\omega})=1/|A|^{2}.$

Let $E_{a,\omega}^{c}\equiv[0,1]^{(A\times\{\omega\})^{2}}\backslash E_{a,\omega}$.
The Cartesian product $\prod_{\omega}E_{a,\omega}^{c}$ is the set
of environments in which action $a$ is not the unique ideal action
for both Sender and Receiver in any state. Let $E_{a}\equiv[0,1]^{(A\times\Omega)^{2}}\backslash\prod_{\omega}E_{a,\omega}^{c}$
denote the complement of $\prod_{\omega}E_{a,\omega}^{c}$, i.e.,
the set of environments in which action $a$ is the unique ideal action
for both Sender and Receiver in at least one state. Let $E_{a}^{c}\equiv[0,1]^{(A\times\Omega)^{2}}\backslash E_{a}$.
Note that
\begin{align*}
\lambda_{2\left|A\right|\left|\Omega\right|}(E_{a}) & =1-\lambda_{2\left|A\right|\left|\Omega\right|}\left(\prod_{\omega}E_{a,\omega}^{c}\right)\\
 & =1-\prod_{\omega}\lambda_{2|A|}\left(E_{a,\omega}^{c}\right)\\
 & =1-\prod_{\omega}\left(1-\frac{1}{|A|^{2}}\right)\\
 & =1-\left(1-\frac{1}{|A|^{2}}\right)^{|\Omega|}.
\end{align*}
The intersection $\cap_{a\in A}E_{a}$ is the set of jointly-inclusive
environments: for every action $a$, there exists at least one state
in which $a$ is the unique ideal action for both Sender and Receiver.
Therefore,
\begin{align*}
\lambda_{2\left|A\right|\left|\Omega\right|}(\text{\text{joint-inclusivity}})= & \lambda_{2\left|A\right|\left|\Omega\right|}(\cap_{a\in A}E_{a})\\
= & 1-\lambda_{2\left|A\right|\left|\Omega\right|}(\cup_{a\in A}E_{a}^{c})\\
\geq & 1-\sum_{a\in A}\lambda_{2\left|A\right|\left|\Omega\right|}(E_{a}^{c})\\
= & 1-\sum_{a\in A}\left(1-\lambda_{2\left|A\right|\left|\Omega\right|}(E_{a})\right)\\
= & 1-\sum_{a\in A}\left(1-\frac{1}{|A|^{2}}\right)^{|\Omega|}\\
= & 1-|A|\left(1-\frac{1}{|A|^{2}}\right)^{|\Omega|}\\
\rightarrow & 1\quad\text{as }|\Omega|\rightarrow\infty.
\end{align*}
\end{proof}
\begin{lem}
\label{lem:CNV->obedience} Fix $A$. As $|\Omega|\rightarrow\infty$,
generically, commitment has no value if and only if the environment
is felicitous.
\end{lem}
\begin{proof}
For any $\Omega$, commitment has no value if the environment is felicitous.
Hence, we only need to show that as $|\Omega|\rightarrow\infty$,
generically, commitment has no value only if the environment is felicitous.
Let \emph{CNV} denote the set of environments in which commitment
has no value, and \emph{F }denote the set of felicitous environments.
We need to show that as $|\Omega|\rightarrow\infty$, $\lambda_{2|A||\Omega|}(\text{CNV}\cap\text{F}^{c})\rightarrow0$.

Let \emph{JPS} denote the set of environments that are jointly-inclusive
and satisfy partitional-unique-response and scant-indifferences. We
know from Lemma \ref{lem:permissive-implies-obedient} that $\text{JPS}\cap\text{CNV}\cap\text{F}^{c}=\emptyset$,
so $\lambda_{2\left|A\right|\left|\Omega\right|}(\text{JPS}\cap\text{CNV}\cap\text{F}^{c})=0$.
As $|\Omega|\rightarrow\infty$, $\lambda_{2\left|A\right|\left|\Omega\right|}\left(\text{JPS}\right)\rightarrow1$
(Lemmas \ref{lemma:R-unique-response-genericity}, \ref{lemma:full-rank-genericity},
and \ref{lem:permissive-when-Omega-large}), which in turn implies
$\lambda_{2\left|A\right|\left|\Omega\right|}(\text{JPS}^{c}\cap\text{CNV}\cap\text{F}^{c})\rightarrow0$.
Therefore, as $|\Omega|\rightarrow\infty$, $\lambda_{2\left|A\right|\left|\Omega\right|}(\text{CNV}\cap\text{F}^{c})=\lambda_{2\left|A\right|\left|\Omega\right|}(\text{JPS}\cap\text{CNV}\cap\text{F}^{c})+\lambda_{2\left|A\right|\left|\Omega\right|}(\text{JPS}^{c}\cap\text{CNV}\cap\text{F}^{c})\rightarrow0.$
\end{proof}
\begin{lem}
\label{lem:obedience-limit-probability} Fix $A$. As $|\Omega|\rightarrow\infty$,
$\lambda_{2\left|A\right|\left|\Omega\right|}(\text{felicity})\rightarrow\frac{1}{|A|^{|A|}}$. 
\end{lem}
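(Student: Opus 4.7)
The strategy is to compute $\lambda_{2|A||\Omega|}(\text{felicity})$ by integrating out $u_{R}$ for each fixed $u_{S}$, and then decomposing according to how many of the sets $\Omega_{i}^{u_S}$ are non-empty. The proof of Lemma \ref{lem:obedient-probability-bound} already performed the inner computation: for any regular $u_{S}$, the measure of $u_{R}\in[0,1]^{|A||\Omega|}$ making $(u_{S},u_{R})$ felicitous is exactly $(1/|A|)^{K(u_S)}$, where $K(u_S):=|\{i:\Omega_{i}^{u_S}\neq\emptyset\}|$. Since $K(u_S)\leq|A|$, the function $u_S\mapsto(1/|A|)^{K(u_S)}$ is bounded below by $1/|A|^{|A|}$, and equals $1/|A|^{|A|}$ precisely when every action is Sender's ideal in at least one state.

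The key step, and the main substantive content of the proof, is to show that the share of $u_{S}$ for which some $\Omega_{i}^{u_S}$ is empty vanishes as $|\Omega|\to\infty$. Let $S_{|\Omega|}:=\{u_{S}:K(u_S)=|A|\}$. For each action $a_{i}$, by symmetry of the uniform distribution on $[0,1]^{|A|}$, the Lebesgue measure of the set of $u_{S}(\cdot,\omega)$ making $a_{i}$ Sender's unique ideal action in state $\omega$ is $1/|A|$; since preferences in different states are independent under the product measure, the share of $u_{S}$ such that $a_{i}$ is never Sender's ideal action is $(1-1/|A|)^{|\Omega|}$. A union bound over the $|A|$ actions then gives
\[
\lambda_{|A||\Omega|}(S_{|\Omega|}^{c})\;\leq\;|A|\left(1-\tfrac{1}{|A|}\right)^{|\Omega|}\;\longrightarrow\;0.
\]

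Combining the inner computation with this vanishing bound, I would write
\[
\lambda_{2|A||\Omega|}(\text{felicity})\;=\;\int_{S_{|\Omega|}}\frac{1}{|A|^{|A|}}\,d\lambda_{|A||\Omega|}(u_{S})\;+\;\int_{S_{|\Omega|}^{c}}(1/|A|)^{K(u_S)}\,d\lambda_{|A||\Omega|}(u_{S}),
\]
where the first term converges to $1/|A|^{|A|}$ because $\lambda_{|A||\Omega|}(S_{|\Omega|})\to 1$, and the second term is bounded above by $\lambda_{|A||\Omega|}(S_{|\Omega|}^{c})\to 0$. Together with the lower bound $\lambda_{2|A||\Omega|}(\text{felicity})\geq 1/|A|^{|A|}$ established in Lemma \ref{lem:obedient-probability-bound}, this yields the desired limit. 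The argument involves no real obstacle: the only place where care is needed is confirming that the residual contribution from the vanishing set $S_{|\Omega|}^{c}$ is genuinely negligible, which follows immediately from the uniform bound $(1/|A|)^{K(u_S)}\leq 1$.
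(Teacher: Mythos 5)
Your proposal is correct and follows essentially the same route as the paper: condition on $u_{S}$, use the computation from Lemma \ref{lem:obedient-probability-bound} to get that the conditional measure of felicitous $u_{R}$ equals $(1/|A|)^{K(u_S)}$ for regular $u_S$, and show that the set of $u_{S}$ with some empty $\Omega_{i}^{u_S}$ has vanishing measure. The only (minor) difference is that you establish this last vanishing directly via a union bound on Sender's preferences alone, giving the bound $|A|(1-1/|A|)^{|\Omega|}$, whereas the paper routes through its joint-inclusivity Lemma \ref{lem:permissive-when-Omega-large} (which involves both players' preferences and gives the weaker but still sufficient bound $|A|(1-1/|A|^{2})^{|\Omega|}$); both arguments are valid and your version is marginally more self-contained.
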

\begin{proof}
Since regular environments are generic, it will suffice to show that
$\lambda_{2\left|A\right|\left|\Omega\right|}(\text{regular-felicity})\rightarrow\frac{1}{|A|^{|A|}}$
as $|\Omega|\rightarrow\infty$. Let $E^{S}=\{u_{S}\in[0,1]^{A\times\Omega}|\Omega_{i}^{u_{S}}\text{ is non-empty for all }i\}$
and $E=\{(u_{S},u_{R})\in[0,1]^{(A\times\Omega)^{2}}|u_{S}\in E^{S}\}.$

As noted earlier in Equation $\left(\ref{eq:obedient-probability}\right)$,
for any regular $u_{S}\in E^{S}$, $\lambda_{\left|A\right|\left|\Omega\right|}(\{u_{R}|(u_{S},u_{R})\text{ is felicitous}\})=\prod_{i:\Omega_{i}^{u_{S}}\text{ is non-empty}}\frac{1}{|A|}=\frac{1}{|A|^{|A|}}$.
Therefore,
\begin{align*}
\lambda_{2\left|A\right|\left|\Omega\right|}(\text{regular-felicity}\cap E) & =\lambda_{2\left|A\right|\left|\Omega\right|}(\{(u_{S},u_{R})\in[0,1]^{(A\times\Omega)^{2}}|u_{S}\in E^{S},(u_{S},u_{R})\text{ is regular-felicitous})\\
 & =\int_{E^{S}}\int_{\{u_{R}|(u_{S},u_{R})\text{ is regular-felicitous}\}}d\lambda_{\left|A\right|\left|\Omega\right|}\:d\lambda_{\left|A\right|\left|\Omega\right|}\\
 & =\int_{E^{S}}\frac{1}{|A|^{|A|}}\,d\lambda_{\left|A\right|\left|\Omega\right|}\\
 & =\frac{1}{|A|^{|A|}}\lambda_{\left|A\right|\left|\Omega\right|}(E^{S})\\
 & =\frac{1}{|A|^{|A|}}\lambda_{2\left|A\right|\left|\Omega\right|}(E).
\end{align*}

Note that any jointly-inclusive environment must be contained in $E$.
Hence, by Lemma \ref{lem:permissive-when-Omega-large}, $\lambda_{2\left|A\right|\left|\Omega\right|}(E)\rightarrow1$
as $|\Omega|\rightarrow\infty.$ Therefore, $\lambda_{2\left|A\right|\left|\Omega\right|}(\text{regular-felicity})\rightarrow\frac{1}{|A|^{|A|}}$
as $|\Omega|\rightarrow\infty$.
\end{proof}
Lemmas \ref{lem:CNV->obedience} and \ref{lem:obedience-limit-probability}
jointly yield the fact that, as $|\Omega|\rightarrow\infty$, $\lambda_{2\left|A\right|\left|\Omega\right|}(\text{commitment has no value})\rightarrow\frac{1}{\left|A\right|^{\left|A\right|}}$.

\section{Online Appendix}

\subsection{\protect\label{subsec:finite}Role of finite $\Omega$ and $A$}

In this section, we illustrate some of the issues that arise when
$\Omega$ or $A$ is infinite.

\subsubsection{Infinite $\Omega$}

Suppose $A$ is finite but $\Omega$ is infinite with an atomless
prior. It is known (Corollary 1 in \citealt{zeng2023derandomization})
that in this case, randomization is never valuable. Yet, commitment
can be valuable. Hence, the analogue of Theorem \ref{Thm:commitmentWTA}
does not hold.

\subsubsection{Infinite $A$}

Consider a binary-state version of the quadratic-loss, constant-bias
preferences from \citet{crawford1982strategic}. Suppose $\Omega=\left\{ 0,1\right\} $
and let $\mu_{0}$ be equiprobable. We will contrast the case where
$A=\left[0,1\right]$ and the case where $A=\left\{ 0,\frac{1}{n},\frac{2}{n},...,\frac{n-1}{n},1\right\} $
for some even integer $n\geq2$. We refer to the former the \emph{interval
case }and the latter as the \emph{finite case}. 

Receiver's utility is $u_{R}\left(a,\omega\right)=-\left(a-\omega\right)^{2}$.
Sender's utility is $u_{S}\left(a,\omega\right)=-\left(a-\omega-b\right)^{2}$
for some $b>0$. We say a profile $(\sigma,\rho)$ is \emph{full revelation}
if there exist disjoint subsets $M_{0},M_{1}\subseteq M$ such that
$\sigma(M_{0}|\omega=0)=1$, $\sigma(M_{1}|\omega=1)=1$.

The parameter region where $b>\frac{1}{2}$ illustrates the contrast
between the interval case and the finite case.\footnote{When $b\leq\frac{1}{2}$, it does not matter whether we consider the
interval or the finite case. In both cases, full revelation yields
the persuasion payoff and is a cheap-talk equilibrium. Thus, neither
commitment nor randomization is valuable. } In the interval case, only a full revelation profile yields the persuasion
payoff, but such a profile cannot be a cheap-talk equilibrium.\footnote{Due to the strict convexity of Sender's indirect utility function,
the persuasion payoff can only be achieved by full revelation. However,
a full revelation profile cannot be a cheap-talk equilibrium: type
$\omega=0$ would deviate and send a message in $M_{1},$ because
$u_{S}(0,0)=-b^{2}<-(1-b)^{2}=u_{S}(1,0)$.} Hence, in the interval case, in contrast to Theorem \ref{Thm:commitmentWTA},
commitment has value even though committed Sender does not value randomization.
In the finite case, however, the assumptions underlying Theorem \ref{Thm:commitmentWTA}
apply. In this case, commitment is valuable and committed Sender values
randomization.\footnote{To see that committed Sender values randomization, note that a full
revelation profile yields the payoff $-[\mu_{0}(\omega=1)(0-0-b)^{2}+(1-\mu_{0}(\omega=1))(1-1-b)^{2}]=-b^{2}.$
Providing no information yields the payoff $-[\mu_{0}(\omega=1)(1/2-0-b)^{2}+(1-\mu_{0}(\omega=1))(1/2-1-b)^{2}]=-(b^{2}+1/4)$.
Therefore, Sender's partitional persuasion payoff is $-b^{2}$. But,
Sender can obtain a strictly higher payoff by inducing posteriors
$\mu=\frac{1}{2n}$ and $\mu=1$. When $\mu=\frac{1}{2n}$, Receiver's
optimal action is $a=\frac{1}{n}$, so Sender's interim value is $-[(1-\frac{1}{2n})(\frac{1}{n}-0-b)^{2}+(\frac{1}{2n})(\frac{1}{n}-1-b)^{2}]=-b^{2}+\frac{2b-1}{2n}>-b^{2}$.
When $\mu=1$, Sender's interim value is $-b^{2}$. Sender's ex-ante
payoff is a convex combination of the two interim values, which is
strictly higher than $-b^{2}.$}

\subsection{\protect\label{subsec:transparent}State-independent preferences}

As we mentioned in the discussion of related literature, several papers
examine value of commitment under the assumption that Sender has state-independent
preferences. To connect to that literature, it is worth asking whether
our results hold under that assumption. When $\left|A\right|\geq3$,
state-independent preferences by Sender mean that we are not in a
scant-indifferences environment, so the proofs above do not apply.
Nonetheless, Theorems \ref{Thm:commitmentWTA} and \ref{Thm:commitmentWTP}
indeed remain true.

To formalize this, say that environment $\left(u_{S},u_{R}\right)$
is transparent if there exists some function $v:A\rightarrow\mathbb{R}$
such that $u_{S}\left(a,\omega\right)=u_{S}(a,\omega')\equiv v(a)$
for any $a,\omega,\omega'$. The set of all transparent environments
is $[0,1]^{\left|A\right|\,\left(\left|\Omega\right|+1\right)}$.
A set of transparent environments is \emph{transparently-generic}
if it has Lebesgue measure one in $\mathbb{R}^{\left|A\right|\,\left(\left|\Omega\right|+1\right)}$.
We say that a claim holds \emph{generically in transparent environments},
if it holds for a transparently-generic set of environments.

A transparent environment $\left(u_{S},u_{R}\right)$ satisfies \emph{no-duplicate-actions}
if for any $a\neq a'$, $v(a)\neq v(a')$. Clearly, the set of no-duplicate-actions
transparent environments is transparently-generic.

A strategy profile $(\sigma,\rho)$ is a \emph{simple babbling cheap-talk
equilibrium }if it is a cheap-talk equilibrium in which $|M_{\sigma}|=1$
and $\rho(m)=a_{0}$ for all $m\in M$ and for some $a_{0}\in A$. 
\begin{lem}
\label{lem:transparent-commitment-no-value-implies-babbling}In a
no-duplicate-actions transparent environment, if commitment has no
value, then there exists a simple babbling cheap-talk equilibrium
that yields the persuasion payoff. 
\end{lem}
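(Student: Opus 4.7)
The plan is to exploit state-independence to show that, in any cheap-talk equilibrium attaining the persuasion payoff, all on-path messages must induce the same pure action $a_0$, and that $a_0$ is Receiver-optimal at the prior; the simple babbling profile with this $a_0$ will then do the job. Fix a cheap-talk equilibrium $(\sigma,\rho)$ that attains the persuasion payoff $V$. Because $u_S(a,\omega)=v(a)$, Sender's interim payoff at any message $m$ is $\sum_a \rho(a|m)\,v(a)$ and does not depend on $\omega$. S-BR therefore forces this quantity to be constant across on-path messages, so $\sum_a \rho(a|m)\,v(a) = V$ for every $m \in M_\sigma$, and Sender's ex-ante cheap-talk payoff equals this common value $V$.

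Next I would squeeze $V$ against the indirect utility $\hat v(\mu) := \max_{a \in A_R^*(\mu)} v(a)$, where $A_R^*(\mu) = \arg\max_{a\in A} \sum_\omega \mu(\omega)\,u_R(a,\omega)$. For each $m \in M_\sigma$, $\Supp(\rho(\cdot|m)) \subseteq A_R^*(\mu_m)$ gives $V \leq \hat v(\mu_m)$. Let $p(m) := \sum_\omega \mu_0(\omega)\,\sigma(m|\omega)$. The persuasion profile obtained by replacing $\rho$ with a pure strategy that picks, at each on-path $m$, some action in $\arg\max_{a\in A_R^*(\mu_m)} v(a)$ attains $\sum_m p(m)\,\hat v(\mu_m)$. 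Hence
\[
V \;=\; \sum_{m\in M_\sigma} p(m)\,V \;\leq\; \sum_{m\in M_\sigma} p(m)\,\hat v(\mu_m) \;\leq\; \text{persuasion payoff} \;=\; V,
\]
so equality holds throughout and $\hat v(\mu_m) = V$ for every $m \in M_\sigma$.

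The equality $\sum_a \rho(a|m)\,v(a) = V = \max_{a\in A_R^*(\mu_m)} v(a)$, combined with $\Supp(\rho(\cdot|m)) \subseteq A_R^*(\mu_m)$, forces every action in the support of $\rho(\cdot|m)$ to satisfy $v(a) = V$. No-duplicate-actions then collapses this support to a singleton $\{a_m\}$ with $v(a_m) = V$; injectivity of $v$ further forces $a_m$ to be the same action $a_0$ for every $m \in M_\sigma$. Because $a_0 \in A_R^*(\mu_m)$ for all on-path $m$, averaging the inequalities $\sum_\omega \mu_m(\omega)\bigl(u_R(a_0,\omega) - u_R(a,\omega)\bigr) \geq 0$ against $p(m)$ and invoking Bayes-plausibility $\sum_m p(m)\,\mu_m = \mu_0$ yields $a_0 \in A_R^*(\mu_0)$.

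It will then suffice to exhibit the simple babbling profile $(\hat\sigma,\hat\rho)$ in which $\hat\sigma$ sends a single fixed message in every state and $\hat\rho(m) = a_0$ for all $m \in M$: S-BR is trivial since Receiver's action is constant, R-BR reduces to $a_0 \in A_R^*(\mu_0)$, and Sender's payoff is $v(a_0) = V$. The main obstacle is the squeeze step above, as it is the only place the hypothesis that commitment has no value enters; once $\hat v(\mu_m) = V$ on-path is established, the collapse to a single action $a_0$ common to all on-path messages is a mechanical consequence of no-duplicate-actions, and Bayes-plausibility immediately closes out the construction.
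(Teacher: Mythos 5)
Your proof is correct and rests on the same three pillars as the paper's: (i) at a persuasion-payoff-attaining profile Receiver's ties must already be broken in Sender's favor, which together with no-duplicate-actions collapses each on-path $\rho(\cdot|m)$ to a single action of common $v$-value; (ii) S-BR plus transparency makes that action the same across on-path messages; and (iii) aggregating Receiver's obedience constraints over on-path messages (your Bayes-plausibility averaging is the paper's summation) delivers R-BR of the babbling profile. The only cosmetic difference is that you package step (i) as a global squeeze against $\sum_m p(m)\hat v(\mu_m)$ rather than the paper's local tie-breaking deviation at a single message, which additionally identifies $v(a_0)$ with the persuasion payoff along the way.
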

\begin{proof}
If commitment has no value, some cheap-talk equilibrium, denoted by
$\left(\sigma,\rho\right)$, yields the persuasion payoff. First,
we claim that $\rho$ must be pure on-path. Suppose by contradiction
that at some on-path message $m$, $|\Supp(\rho(\cdot|m))|>1$. R-BR
then implies that Receiver must be indifferent among all actions in
$\Supp(\rho(\cdot|m))$. Since the environment satisfies no-duplicate-actions,
Sender must strictly prefers one of the actions in $\Supp(\rho(\cdot|m))$
over all others. Therefore, an alternative strategy profile where
Receiver breaks ties in favor of Sender would still satisfy R-BR while
strictly improving Sender's payoff.

Since $\left(\sigma,\rho\right)$ is a cheap-talk equilibrium and
the environment is transparent, S-BR implies that for any $m,m'\in M_{\sigma}$,
we have $v(\rho(m))=v(\rho(m'))$. Moreover, because the environment
satisfies no-duplicate-actions, it follows that $\rho(m)=\rho(m')$
for any $m,m'\in M_{\sigma}$; that is, only a single action is induced
in equilibrium. 

We now construct a simple babbling cheap-talk equilibrium $(\hat{\sigma},\hat{\rho})$
that yields the same payoff as $\left(\sigma,\rho\right)$. Choose
an arbitrary message $m_{0}\in M_{\sigma}$. Let $\hat{\sigma}(m_{0}|\omega)=1$
for all $\omega\in\Omega$, and $\hat{\rho}(m)=\rho(m_{0})$ for all
$m\in M$. The strategy profile $(\hat{\sigma},\hat{\rho})$ trivially
satisfies S-BR and yields the same payoff as $\left(\sigma,\rho\right)$.
Since $\left(\sigma,\rho\right)$ satisfies R-BR, it follows that
for each $m\in M_{\sigma}$, $\sum_{\omega}\mu_{0}(\omega)\sigma(m|\omega)u_{R}(\rho(m),\omega)\geq\sum_{\omega}\mu_{0}(\omega)\sigma(m|\omega)u_{R}(a,\omega)$
for all $a\in A$. Summing over all $m$, we obtain
\[
\sum_{\omega\in\Omega}\mu_{0}(\omega)\sum_{m\in M}\sigma(m|\omega)u_{R}(\rho(m),\omega)\geq\sum_{\omega\in\Omega}\mu_{0}(\omega)\sum_{m\in M}\sigma(m|\omega)u_{R}(a,\omega)\quad\text{for all }a\in A.
\]
Since $\rho(m)=\rho(m_{0})$ for all $m\in M_{\sigma}$, we can rewrite
the inequality as
\[
\sum_{\omega\in\Omega}\mu_{0}(\omega)u_{R}(\rho(m_{0}),\omega)\geq\sum_{\omega\in\Omega}\mu_{0}(\omega)u_{R}(a,\omega)\quad\text{for all }a\in A,
\]
which then implies $(\hat{\sigma},\hat{\rho})$ satisfies R-BR. Therefore,
$(\hat{\sigma},\hat{\rho})$ is a simple babbling cheap-talk equilibrium
that yields the persuasion payoff.
\end{proof}
\begin{thm}
\label{Thm:transparent-committed-Sender}Generically in transparent
environments, commitment is valuable if and only if committed Sender
values randomization.
\end{thm}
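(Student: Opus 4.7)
The plan is to prove each direction separately, relying on Lemma \ref{lem:transparent-commitment-no-value-implies-babbling} for half of the argument. The implication ``committed Sender values randomization $\Rightarrow$ commitment is valuable'' is the contrapositive of ``commitment has no value $\Rightarrow$ committed Sender does not value randomization,'' which is immediate from that lemma: in a no-duplicate-actions transparent environment, if commitment has no value then there is a simple babbling cheap-talk equilibrium yielding the persuasion payoff, and since its messaging strategy is supported on a single message it is trivially partitional, so the partitional persuasion payoff coincides with the persuasion payoff.

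For the converse I would work inside the transparently-generic set of environments satisfying both no-duplicate-actions and partitional-unique-response (the latter depends only on $u_R$ and is generic by Lemma \ref{lemma:R-unique-response-genericity}, so its intersection with the transparent environments is transparently-generic). Suppose committed Sender does not value randomization, so there is a partitional persuasion profile $(\sigma,\rho)$ that achieves the persuasion payoff. Let $A^* = \{\rho(m) : m\in M_\sigma\}$. The key claim is $|A^*| = 1$. If instead $|A^*|\geq 2$, no-duplicate-actions lets me select $a_1,a_2\in A^*$ with $v(a_1)>v(a_2)$ and messages $m_1,m_2\in M_\sigma$ with $\rho(m_i)=a_i$; I then pick any state $\omega^*$ in the partition cell assigned to $m_2$ and consider the perturbation $\hat\sigma$ that agrees with $\sigma$ outside $\omega^*$ while sending $m_1$ with probability $\varepsilon$ and $m_2$ with probability $1-\varepsilon$ at $\omega^*$. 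Partitional-unique-response guarantees that $a_i$ is strictly best for Receiver at $\mu_{m_i}$ under $\sigma$, so by continuity $a_i$ remains the unique Receiver best response at the perturbed $m_i$-posterior for small $\varepsilon>0$. Hence $(\hat\sigma,\rho)$ is still R-BR, yet $U_S(\hat\sigma,\rho)-U_S(\sigma,\rho)=\varepsilon\mu_0(\omega^*)[v(a_1)-v(a_2)]>0$, contradicting the optimality of $(\sigma,\rho)$ among all persuasion profiles.

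Therefore $A^* = \{a_0\}$ for some $a_0$. Since $a_0$ is best at every on-path belief $\mu_m$, and $\mu_0$ is the $\mu_0(\Omega_m)$-weighted average of those posteriors, the linearity of Receiver's expected utility in beliefs gives that $a_0$ remains a best response at $\mu_0$; the simple babbling profile in which Sender pools on one message and Receiver responds with $a_0$ is then a cheap-talk equilibrium yielding $v(a_0)=U_S(\sigma,\rho)$, which equals the persuasion payoff, so commitment has no value. The main obstacle is this perturbation step in the converse: one must verify that shifting a small mass from $m_2$ to $m_1$ in a single state preserves Receiver's incentive compatibility at \emph{both} messages, and this is precisely what partitional-unique-response delivers---absent that slack, the optimal partitional profile could conceivably induce multiple distinct actions at knife-edge indifference posteriors, and the local reshuffling argument would fail.
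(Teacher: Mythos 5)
Your proposal is correct and follows essentially the same route as the paper: the ``if'' direction is exactly the paper's appeal to Lemma \ref{lem:transparent-commitment-no-value-implies-babbling}, and your ``only if'' direction rests on the same $\varepsilon$-perturbation that preserves R-BR via partitional-unique-response, which is the engine behind the paper's invocation of the (1)$\Rightarrow$(3)$\Rightarrow$(2) arguments from Theorem 1$'$. The only cosmetic difference is that, instead of citing Theorem 1$'$, you exploit transparency directly to show the optimal partitional profile induces a single action (so babbling suffices); one small point worth making explicit is that partitional-unique-response already forces $\rho$ to be pure on path, which your notation $\rho(m)$ tacitly assumes.
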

\begin{proof}
We establish the equivalence for any transparent environment that
satisfies both partitional-unique-response and no-duplicate-actions.
Recall that whether an environment satisfies partitional-unique-response
does not depend on Sender\textquoteright s preferences, so the same
argument as in Lemma \ref{lemma:R-unique-response-genericity} implies
that the set of partitional-unique-response transparent environments
is transparently-generic. Moreover, the set of no-duplicate-actions
transparent environments is transparently-generic. Therefore, the
set of transparent environments that satisfy both properties is also
transparently-generic.

The same arguments that establish (i) implies (iii) and (iii) implies
(ii) in Theorem 1' apply directly to any partitional-unique-response
transparent environment, thereby proving the ``only if'' direction.
The ``if'' direction follows immediately from Lemma \ref{lem:transparent-commitment-no-value-implies-babbling}.
\end{proof}
\begin{thm}
\label{Thm:transparent-cheap-talk-Sender}Generically in transparent
environments, commitment is valuable if cheap-talk Sender values randomization.
\end{thm}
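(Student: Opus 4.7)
The plan is to argue by contraposition: I will show that, generically in transparent environments, if commitment has no value then cheap-talk Sender does not value randomization. The argument will be short because Lemma \ref{lem:transparent-commitment-no-value-implies-babbling} has already done most of the work.

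First, I restrict attention to the transparently-generic set of no-duplicate-actions transparent environments (used in Lemma \ref{lem:transparent-commitment-no-value-implies-babbling}). Fix such an environment and assume commitment has no value, i.e.\ the cheap-talk payoff equals the persuasion payoff. By Lemma \ref{lem:transparent-commitment-no-value-implies-babbling}, there exists a simple babbling cheap-talk equilibrium $(\hat{\sigma},\hat{\rho})$ that yields the persuasion payoff; in particular, $\hat{\sigma}$ sends the same message $m_0$ in every state with probability one.

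Second, I observe that such a $\hat{\sigma}$ is, by definition, a partitional messaging strategy (it assigns probability one to a single message in every state). Hence $(\hat{\sigma},\hat{\rho})$ is a partitional cheap-talk equilibrium that attains the persuasion payoff, which in turn equals the cheap-talk payoff. Consequently, the partitional cheap-talk payoff equals the cheap-talk payoff, so cheap-talk Sender does not value randomization. Taking the contrapositive yields the theorem.

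I do not anticipate a substantive obstacle here: because Lemma \ref{lem:transparent-commitment-no-value-implies-babbling} delivers a \emph{simple} babbling equilibrium (single on-path message, constant response), partitionality comes for free, so the analog of the delicate Lemmas \ref{lem:Receiver-not-mix}--\ref{lem:Sender-not-mix} from the general case is not needed. The only care required is to note that the claim holds on the transparently-generic set of no-duplicate-actions environments, so that the hypothesis of Lemma \ref{lem:transparent-commitment-no-value-implies-babbling} is available.
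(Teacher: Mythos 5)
Your proposal is correct and is essentially the paper's own argument: the paper likewise deduces the theorem immediately from Lemma \ref{lem:transparent-commitment-no-value-implies-babbling} together with the transparent-genericity of no-duplicate-actions environments, the only (implicit) observation being the one you make explicit, namely that a simple babbling equilibrium is partitional, so its existence at the persuasion payoff forces the partitional cheap-talk payoff to coincide with the cheap-talk payoff.
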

\begin{proof}
The theorem follows immediately from Lemma \ref{lem:transparent-commitment-no-value-implies-babbling}
and the fact that the set of no-duplicate-actions transparent environments
is transparently-generic.
\end{proof}

\subsection{\protect\label{subsec:Partial}Value of partial commitment}

We consider a partial commitment setting where Sender can commit to
a distribution of messages, as in \citet{lin2024credible}. For any
Sender's strategy $\sigma,$ let $D(\sigma):=\{\sigma':\Omega\rightarrow\Delta M|\sum_{\omega}\mu_{0}(\omega)\sigma'(m|\omega)=\sum_{\omega}\mu_{0}(\omega)\sigma(m|\omega),\forall m\}$
denote the set of messaging strategies that preserve the same distribution
of messages. 

We say a profile $(\sigma^{*},\rho^{*})$ is a \emph{curve equilibrium}
if 

\[
\sigma^{*}\in\argmax_{\sigma\in D(\sigma^{*})}U_{S}(\sigma,\rho^{*})
\]
\[
\rho^{*}\in\argmax_{\rho}U_{R}(\sigma^{*},\rho).
\]
The first condition requires that Sender has no incentive to deviate
to any other messaging strategy that preserves the same distribution
of messages as $\sigma^{*}$, and the second condition requires Receiver
to play a best response (i.e., R-BR).

The \emph{curve payoff} is the maximum $U_{S}$ induced by a curve
equilibrium. The \emph{curve partitional payoff} is the maximum $U_{S}$
induced by a partitional curve equilibrium. We say Sender \emph{values
committing to a curve} if the curve payoff is strictly higher than
the cheap-talk payoff, and that \emph{curve-committed Sender values
randomization} if the curve payoff is strictly higher than the curve
partitional payoff.

A function $u_{S}:A\times\Omega\rightarrow\mathbb{R}$ is \emph{strictly
supermodular} if there exists a total order $>_{A}$ on $A$ and a
total order $>_{\Omega}$ on $\Omega$ such that for $a'>_{A}a$ and
$\omega'>_{\Omega}\omega$,
\[
u_{S}(a',\omega')-u_{S}(a,\omega')>u_{S}(a',\omega)-u_{S}(a,\omega).
\]
To simplify notation, once we fix a strictly supermodular $u_{S}$,
we use $>$ in place of $>_{A}$ and $>_{\Omega}$.

A set of Receiver's preferences is generic if it has Lebesgue measure
one in $[0,1]^{\left|A\right|\,\left|\Omega\right|}$.

We say that an \emph{outcome distribution }$\pi:\Omega\rightarrow\Delta A$
is induced by a profile $(\sigma,\rho)$ if $\pi(a|\omega)=\sum_{m\in M}\rho(a|m)\sigma(m|\omega)$
for all $\omega,a$. The following lemma offers a characterization
of outcome distributions that can be induced by a curve equilibrium.
\begin{lem}
\label{lem:curve-characterization}Fix any strictly supermodular $u_{S}$.
An outcome distribution $\pi:\Omega\rightarrow\Delta A$ is induced
by some curve equilibrium $(\sigma,\rho)$ where $\rho$ is pure strategy
on-path if and only if 

1. $\pi$ is comonotone; that is, for any $a<a'$, if $\pi(a|\omega)>0\text{ and }\pi(a'|\omega')>0$,
we must have $\omega\leq\omega'$;

2. $\pi$ is $u_{R}$-obedient: for each $a,a'\in A$,
\[
\sum_{\omega\in\Omega}\pi(a|\omega)\mu_{0}(\omega)\left[u_{R}(a,\omega)-u_{R}(a',\omega)\right]\geq0.
\]
\end{lem}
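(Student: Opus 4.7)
The plan is to prove the two directions separately using two complementary applications of strict supermodularity.

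For the only-if direction, let $(\sigma^*,\rho^*)$ be a curve equilibrium with $\rho^*$ pure on-path that induces $\pi$. Obedience of $\pi$ follows by aggregating Receiver's pointwise best-response inequalities: for each on-path $m$ with $\rho^*(m)=a$, R-BR gives $\sum_\omega \mu_0(\omega)\sigma^*(m|\omega)[u_R(a,\omega)-u_R(a',\omega)]\geq 0$, and summing over all on-path $m$ with $\rho^*(m)=a$ converts $\sum_{m:\rho^*(m)=a}\sigma^*(m|\omega)$ into $\pi(a|\omega)$. For comonotonicity, I argue by contradiction. Suppose $\pi(a|\omega)>0$ and $\pi(a'|\omega')>0$ for some $a<a'$ and $\omega>\omega'$. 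By purity on-path there exist distinct on-path messages $m,m'$ with $\rho^*(m)=a$, $\rho^*(m')=a'$, $\sigma^*(m|\omega)>0$, and $\sigma^*(m'|\omega')>0$. Construct a swap deviation $\hat\sigma$ by decreasing $\sigma^*(m|\omega)$ and $\sigma^*(m'|\omega')$ by $\varepsilon\mu_0(\omega')$ and $\varepsilon\mu_0(\omega)$, respectively, while increasing $\sigma^*(m'|\omega)$ and $\sigma^*(m|\omega')$ by the same respective amounts; for small $\varepsilon>0$ this stays in $\Delta M$, and the mass-balance choice of weights ensures $\hat\sigma\in D(\sigma^*)$. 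A direct calculation shows the change in Sender's payoff is
\[
\varepsilon\,\mu_0(\omega)\,\mu_0(\omega')\bigl\{[u_S(a',\omega)-u_S(a,\omega)]-[u_S(a',\omega')-u_S(a,\omega')]\bigr\},
\]
which is strictly positive by strict supermodularity, contradicting Sender's equilibrium condition.

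For the if direction, I construct an explicit curve equilibrium. Identify a subset of $M$ of size $|A|$ with the action set (using $|M|>|A|$) and set $\sigma^*(a|\omega)=\pi(a|\omega)$ for each $a\in A$, together with $\rho^*(a)=a$ for $a\in A$ and an arbitrary pure choice off-path. Then $\rho^*$ is pure on-path and $(\sigma^*,\rho^*)$ induces $\pi$. Obedience of $\pi$ is precisely R-BR on on-path messages, while off-path behavior is immaterial: any $\sigma'\in D(\sigma^*)$ must place zero weight on every off-path message $m$, since its marginal weight $\sum_\omega\mu_0(\omega)\sigma'(m|\omega)$ equals the corresponding (zero) marginal of $\sigma^*$. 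Sender's payoff under $(\sigma',\rho^*)$ therefore equals $\sum_{a,\omega}\mu_0(\omega)\sigma'(a|\omega)u_S(a,\omega)$, i.e., the $u_S$-expectation of a joint distribution on $A\times\Omega$ with the same marginals as $\pi$. Among all such couplings, a classical rearrangement argument for strictly supermodular $u_S$ (iterating the same pair-swap used above, now in reverse: any non-comonotone coupling admits a strictly improving swap) identifies the comonotone coupling as the unique maximizer, which is exactly $\pi=\sigma^*$.

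The main obstacle is the mass-balance bookkeeping in the swap and the precise invocation of the rearrangement step in the converse; both are standard but easy to muddle if the weights are mis-chosen, and care is needed to ensure strict positivity of the improvement (so that $\sigma^*$ could not be a curve best response) rather than mere weak improvement. Once the swap is stated correctly, strict supermodularity does all of the substantive work in both directions.
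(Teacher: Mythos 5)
Your proof is correct, but it takes a genuinely different route from the paper: the paper disposes of this lemma in two sentences by citing Theorem 1 and Lemma 1 of \citet{lin2024credible} (noting only that off-path actions are irrelevant for the curve-equilibrium conditions), whereas you give a self-contained argument. Your two key steps are sound. In the only-if direction, the mass-balanced pairwise swap (weights $\varepsilon\mu_0(\omega')$ and $\varepsilon\mu_0(\omega)$) does keep both the state-conditional distributions and the message marginals intact, and strict supermodularity makes the payoff change strictly positive, so non-comonotone outcomes cannot survive the Sender's constraint over $D(\sigma^*)$; the aggregation of R-BR inequalities over the messages inducing each action delivers obedience exactly as in the paper's Lemma \ref{lem:Sender-not-mix}. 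In the if direction, your observation that every $\sigma'\in D(\sigma^*)$ puts zero weight off-path (using interiority of $\mu_0$) correctly reduces the Sender's constraint to maximizing a strictly supermodular objective over couplings of $A\times\Omega$ with fixed marginals, and the comonotone coupling is the unique maximizer by the standard discrete rearrangement argument. The one step you lean on without proof is the uniqueness of the comonotone coupling with given marginals (needed to conclude that the maximizer is $\pi$ itself rather than some other comonotone coupling); this is the classical Fr\'echet--Hoeffding/quantile-matching fact and is easy to supply, but you should state it explicitly. What your approach buys is independence from the external reference and transparency about where strict supermodularity enters; what the paper's approach buys is brevity and the reuse of an already-established characterization of curve-implementable outcomes.
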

\begin{proof}
The lemma follows immediately from Theorem 1 and Lemma 1 in \citet{lin2024credible},
with the small caveat that in \citet{lin2024credible}, $\rho$ is
restricted to be pure both on and off-path. However, note that off-path
actions do not affect whether a profile is a curve equilibrium. Thus,
the lemma follows.
\end{proof}
\begin{thm}
\label{Thm:Curve}Fix any strictly supermodular $u_{S}$. For a generic
set of Receiver's preferences, if Sender values committing to a curve,
then a curve-committed Sender values randomization.
\end{thm}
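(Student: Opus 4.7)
The plan is to prove the contrapositive: generically in $u_R$, if the curve payoff is attained by some partitional curve equilibrium, then some cheap-talk equilibrium attains the same payoff. First, take a partitional curve equilibrium $(\sigma^*, \rho^*)$ achieving the curve payoff. Invoking the partitional-unique-response property (generic in $u_R$ by an argument identical to Lemma~\ref{lemma:R-unique-response-genericity}), Receiver has a unique best response at the posterior induced by each cell of $\sigma^*$, so $\rho^*$ is forced to be pure on-path. Setting $a^*(\omega) := \rho^*(\sigma^*(\omega))$ and applying Lemma~\ref{lem:curve-characterization}, $a^*$ is non-decreasing and $u_R$-obedient. Writing $R = \{b_1 < \dots < b_k\}$ for its range and $I_j = (a^*)^{-1}(b_j)$ for its cells (consecutive non-empty intervals of $\Omega$), I claim it suffices to show that $(\sigma^*, \rho^*)$, extended with arbitrary off-path responses, is a cheap-talk equilibrium. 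R-BR is immediate from obedience, so only Sender's IC needs verification: $u_S(a^*(\omega), \omega) \geq u_S(b, \omega)$ for all $\omega$ and $b \in R$.

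Next, I argue by contradiction. Suppose Sender's IC fails at some $\omega_0 \in I_j$ with $u_S(b_l, \omega_0) > u_S(b_j, \omega_0)$; without loss $l > j$ (the case $l < j$ is symmetric, handled by a leftward boundary shift at $\underline\omega_{s^*+1}$). Telescoping picks out an index $s^* \in \{j, \dots, l-1\}$ with $u_S(b_{s^*+1}, \omega_0) > u_S(b_{s^*}, \omega_0)$, and strict supermodularity of $u_S$ propagates this inequality to every $\omega \geq \omega_0$. In particular, for $\bar\omega_{s^*}$ (the rightmost state of $I_{s^*}$, which satisfies $\bar\omega_{s^*} \geq \omega_0$ since $s^* \geq j$), we get $u_S(b_{s^*+1}, \bar\omega_{s^*}) > u_S(b_{s^*}, \bar\omega_{s^*})$. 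I then consider the perturbed outcome $\pi_\epsilon$ that agrees with $a^*$ everywhere except at $\bar\omega_{s^*}$, where $\pi_\epsilon(b_{s^*}|\bar\omega_{s^*}) = 1-\epsilon$ and $\pi_\epsilon(b_{s^*+1}|\bar\omega_{s^*}) = \epsilon$. A direct check shows $\pi_\epsilon$ is comonotone (the support $\{b_{s^*}, b_{s^*+1}\}$ at $\bar\omega_{s^*}$ is compatible with the adjacent cells by non-decreasingness of $a^*$), and Sender's payoff strictly increases by $\epsilon\,\mu_0(\bar\omega_{s^*})[u_S(b_{s^*+1}, \bar\omega_{s^*}) - u_S(b_{s^*}, \bar\omega_{s^*})] > 0$.

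The main obstacle is preserving $u_R$-obedience of $\pi_\epsilon$ for small $\epsilon > 0$, and this is where genericity in $u_R$ is essential. The obedience inequality for $b_{s^*}$ against $a'$ under $\pi_\epsilon$ reads $\mathrm{Obed}(s^*, a') - \epsilon \mu_0(\bar\omega_{s^*})[u_R(b_{s^*}, \bar\omega_{s^*}) - u_R(a', \bar\omega_{s^*})] \geq 0$, with the symmetric (opposite-sign) expression for $b_{s^*+1}$; here $\mathrm{Obed}(s, a') := \sum_{\omega \in I_s} \mu_0(\omega)[u_R(b_s, \omega) - u_R(a', \omega)]$ denotes the original obedience slack. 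Since there are only finitely many ordered partitions $P$ of $\Omega$, and for each triple $(P, s, a')$ with $a' \neq b_s$ (where $b_s$ is Receiver's unique best response at the $s$-th cell of $P$) the set of $u_R$ with $\mathrm{Obed}(s, a', P) = 0$ is a measure-zero hyperplane in $[0,1]^{|A||\Omega|}$, the generic set of $u_R$ satisfies $\mathrm{Obed}(s, a', P) > 0$ strictly for every such triple. For $u_R$ in this generic set, small enough $\epsilon > 0$ maintains obedience of $\pi_\epsilon$, so by Lemma~\ref{lem:curve-characterization} there is a curve equilibrium with outcome $\pi_\epsilon$ that strictly improves on $a^*$---contradicting the curve-optimality of $a^*$. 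Hence Sender's IC must hold, so $(\sigma^*, \rho^*)$ extends to a cheap-talk equilibrium with payoff equal to the curve payoff, as claimed.
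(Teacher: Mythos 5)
Your proposal is correct and follows essentially the same route as the paper's proof: contrapositive, purity of $\rho$ on path via partitional-unique-response, the comonotone/obedient characterization of Lemma \ref{lem:curve-characterization}, a supermodularity argument pushing a profitable action swap to the right boundary of a cell, and an $\epsilon$-perturbation that preserves comonotonicity and (generically strict) obedience. The only differences are cosmetic --- you locate the adjacent improving pair by telescoping rather than via the smallest Sender-optimal action above $a^{*}$, and your ``strict obedience slack'' genericity condition is exactly the partitional-unique-response property the paper invokes; just make sure the off-path responses are set to on-path actions (not literally arbitrary) so that checking Sender's IC against the range of $a^{*}$ suffices.
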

\begin{proof}
We consider any Receiver preference that satisfies partitional-unique-response.
By Lemma \ref{lemma:R-unique-response-genericity}, this set of preferences
is generic. 

We prove the statement by contraposition. Suppose that a curve-committed
Sender does not value randomization. This means there exists a partitional
curve equilibrium, denoted by $(\sigma,\rho)$, that yields the curve
payoff. Since $\sigma$ is partitional, partitional-unique-response
and R-BR of $(\sigma,\rho)$ imply that $\rho$ is a pure strategy
on-path. We will construct a strategy profile $(\sigma,\hat{\rho})$
that is a cheap-talk equilibrium and yields the curve payoff.

Consider the following $\hat{\rho}$: for all $m\in M_{\sigma}$,
let $\hat{\rho}(m)=\rho(m)$; for $m\notin M_{\sigma}$, let $\hat{\rho}(m)=\rho(m_{0})$
for some $m_{0}\in M_{\sigma}$. Since $\hat{\rho}$ and $\rho$ coincide
on path, $(\sigma,\rho)$ and $(\sigma,\hat{\rho})$ induce the same
outcome distribution and yield the same payoffs to both Sender and
Receiver. Therefore, $(\sigma,\hat{\rho})$ satisfies R-BR and yields
the curve payoff. It remains to show that $(\sigma,\hat{\rho})$ is
S-BR, which is equivalent to Sender's interim optimality: for each
$\omega$, 
\begin{equation}
u_{S}(\hat{\rho}(\sigma(\omega)),\omega)\geq u_{S}(\hat{\rho}(m'),\omega)\label{eq:S-BR-interim-curve}
\end{equation}
for all $m'\in M$. Note that it suffices to show that Equation $\left(\ref{eq:S-BR-interim-curve}\right)$
holds for $m'\in M_{\sigma}$. Once we establish that, we know $\sum_{m}\sigma(m|\omega)u_{S}(\hat{\rho}(m),\omega)\geq u_{S}(\hat{\rho}(m_{0}),\omega)$
since $m_{0}\in M_{\sigma}$. Therefore, since $\hat{\rho}(m')=\rho(m_{0})=\hat{\rho}(m_{0})$
for $m'\notin M_{\sigma}$, Equation $\left(\ref{eq:S-BR-interim-curve}\right)$
holds for $m'\notin M_{\sigma}$. 

Since $(\sigma,\rho)$ is a curve equilibrium and $\rho$ is pure
on-path, by Lemma \ref{lem:curve-characterization}, the induced outcome
distribution, denoted by $\pi$, satisfies comonotonicity and $u_{R}$-obedience.
In addition, since $\sigma$ is partitional and $\rho$ is pure on-path,
the induced outcome distribution $\pi$ is also a pure mapping. Moreover,
$\pi$ being comonotone can be strengthened to $\pi$ being monotone
partitional:
\begin{equation}
\forall a<a',\text{ }\pi(a|\omega)>0\text{ and }\pi(a'|\omega')>0\text{ implies }\omega<\omega'.\label{eq:curve-monotone-partition}
\end{equation}

Moreover, since the environment satisfies partitional-unique-response,
$u_{R}$-obedience can be strengthened to strict $u_{R}$-obedience:
for each $a\in A^{*}\equiv\cup_{\omega\in\Omega}\Supp(\pi(\cdot|\omega))$
and $a'\in A/\{a\}$,
\begin{equation}
\sum_{\omega\in\Omega}\pi(a|\omega)\mu_{0}(\omega)\left[u_{R}(a,\omega)-u_{R}(a',\omega)\right]>0.\label{eq:curve-strict-uR-obedience}
\end{equation}
For each $a\in A^{*}$, let $\Omega_{a}=\{\omega|\hat{\rho}(\sigma(\omega))=a\}$
denote the set of states that induce action $a$. By \eqref{eq:curve-monotone-partition},
$\{\Omega_{a}\}_{a\in A^{*}}$ forms a monotone partition of $\Omega$:
$\cup_{a\in A^{*}}\Omega_{a}=\Omega$, and for any $a<a'$, $\omega\in\Omega_{a}$,
$\omega'\in\Omega_{a'}$, we have $\omega<\omega'$.

To establish Equation $\left(\ref{eq:S-BR-interim-curve}\right)$,
it suffices to show that for each $\omega\in\Omega$, $u_{S}(a',\omega)\leq u_{S}(\pi(\omega),\omega)$
for all $a'\in A^{*}.$ Suppose, toward a contradiction, that there
exists $\omega^{*}\in\Omega$ and $a'\in A^{*}$ such that $u_{S}(a',\omega^{*})>u_{S}(\pi(\omega^{*}),\omega^{*})$.
Without loss of generality, we assume $a'>\pi(\omega^{*})$; the proof
for $a'<\pi(\omega^{*})$ follows symmetrically.

Let $a^{*}\equiv\pi(\omega^{*})$ and $\hat{a}\in\min\{\argmax_{a'>a^{*},a'\in A^{*}}u_{S}(a',\omega)\}$
denote type $\omega^{*}$'s smallest optimal action among $\{a'|a'>a^{*}\}$.
Let $\tilde{\omega}=\max\{\omega|\pi(\omega)<\hat{a}\}$ denote the
largest type that induces an action smaller than $\hat{a}$. Let $\tilde{a}\equiv\pi(\tilde{\omega})<\hat{a}$.
Since $\hat{a}$ is $\omega^{*}$'s smallest optimal action among
$\{a'|a'>a^{*}\}$, $u_{S}(\hat{a},\omega^{*})>u_{S}(\tilde{a},\omega^{*}).$
By definition, $\tilde{\omega}\geq\omega^{*}$, so by supermodularity,
\begin{equation}
u_{S}(\hat{a},\tilde{\omega})-u_{S}(\tilde{a},\tilde{\omega})\geq u_{S}(\hat{a},\omega^{*})-u_{S}(\tilde{a},\omega^{*})>0.\label{eq:curve-strict-value-improvement}
\end{equation}

We now construct an alternative outcome distribution $\hat{\pi}$:
$\hat{\pi}(\omega)=\pi(\omega)$ for $\omega\neq\tilde{\omega}$,
while $\hat{\pi}\left(\tilde{\omega}\right)$ induces action $\tilde{a}$
with probability $1-\varepsilon$ and induces action $\hat{a}$ with
probability $\varepsilon$. By \eqref{eq:curve-strict-value-improvement},
$\hat{\pi}$ yields a strictly higher value than $\pi$. In addition,
by \eqref{eq:curve-strict-uR-obedience}, for sufficiently small $\varepsilon$,
$\hat{\pi}$ remains obedient. 

Lastly, we show that $\hat{\pi}$ satisfies comonotonicity. To see
this, first note that whether an outcome distribution $\hat{\pi}$
satisfies comonotonicity depends only on its support; that is, the
set of $(a,\omega)$ such that $\hat{\pi}(a|\omega)>0$. By construction,
the supports of $\hat{\pi}$ and $\pi$ differ only in that $\hat{\pi}$'s
support contains an additional element, $(\tilde{\omega},\hat{a}).$
Since $\pi$ is comonotone, to establish that $\hat{\pi}$ is comonotone,
it suffices to show that: for any $a<\hat{a}$ and $\omega\in\Omega_{a}$,
we have $\omega\leq\tilde{\omega}$; for any $a'>\hat{a}$ and $\omega'\in\Omega_{a'}$,
we have $\omega'\geq\tilde{\omega}$. To prove the first part, note
that $\{\Omega_{a}\}_{a\in A^{*}}$ forms a monotone partition of
$\Omega$, which implies that for any $a<\hat{a}$ and $\omega\in\Omega_{a}$,
we have $\pi(\omega)<\hat{a}$. Recall that $\tilde{\omega}=\max\{\omega|\pi(\omega)<\hat{a}\}$
is largest type that induces an action smaller than $\hat{a}$; therefore,
$\omega\leq\tilde{\omega}.$ To prove the second part, note that since
$\{\Omega_{a}\}_{a\in A^{*}}$ forming a monotone partition, it follows
that that for any $a'>\hat{a}$, and $\omega'\in\Omega_{a'}$, we
have $\omega'>\min\Omega_{\hat{a}}>\tilde{\omega}$.

Since $\hat{\pi}$ that satisfies comonotonicity and $u_{R}$-obedience,
by Lemma \ref{lem:curve-characterization}, there exists a curve equilibrium
that yields a strictly higher payoff than $(\sigma,\rho)$. This contradicts
the fact that $(\sigma,\rho)$ yields the curve payoff.
\end{proof}

\subsection{Away from zeros}

Theorem \ref{Thm:commitmentWTA} tells us that, generically, commitment
has zero value if and only if randomization has zero value. A natural
question is whether, generically, a small value of commitment implies
or is implied by a small value of randomization. This section establishes
that the answer is no.

Specifically, we show that there is a $\Delta>0$ such that for any
$\varepsilon>0$: (i) there is a positive measure of environments
where the value of commitment is greater than $\Delta$ and the value
of randomization is less than $\varepsilon$, and (ii) there is a
positive measure of environments where the value of randomization
is greater than $\Delta$ and the value of commitment is less than
$\varepsilon$.\footnote{Recall that we assume that the range of the utility functions is $\left[0,1\right]$
in order to succinctly state our results about how often commitment
is valuable. If we relax the assumption that the range of the utility
functions is uniformly bounded, we can alternatively show that \emph{for
any} $\Delta>0$ and any $\varepsilon>0$: (i) there is a positive
measure of environments where the value of commitment is greater than
$\Delta$ and the value of randomization is less than $\varepsilon$,
and (ii) there is a positive measure of environments where the value
of randomization is greater than $\Delta$ and the value of commitment
is less than $\varepsilon$.} 

We begin by illustrating the role of the genericity condition for
Theorem \ref{Thm:commitmentWTA}. We present two examples. The first
example presents an environment (that violates partitional-unique-response)
where commitment is valuable but randomization is not. The second
example presents an environment (that violates scant indifferences)
where randomization is valuable but commitment is not.

Then, we build on the first example to show to construct, given any
$\varepsilon$, a positive measure of environments where the value
of commitment is above $\frac{1}{6}$ but the value of randomization
is less than $\varepsilon$. We build on the second example to show
how to construct, given any $\varepsilon$, a positive measure of
environments where the value of randomization is above $\frac{1}{13}$
but the value of of commitment is less than $\varepsilon$.
\begin{example}
\label{Ex:value-commitment-not-value-randomization}Consider $\Omega=\{\omega_{1},\omega_{2}\}$
with prior $\mu_{0}(\omega_{1})=\mu_{0}(\omega_{2})=0.5$ and $A=\{a_{1},a_{2}\}$.
Players' payoffs are given in Table \ref{table: example-partition-unique}.
Receiver's best response is $a_{1}$ when $\mu\equiv\mu(\omega_{2})\in[0,1)$,
and she is indifferent between $a_{1}$ and $a_{2}$ when $\mu=1$.
The concavification of Sender's indirect utility function is depicted
in Figure \ref{figure: example-partition-unique}.

\begin{figure}[h]  \centering
\begin{minipage}[b]{0.45\textwidth} \centering

\begin{center}
\begin{tabular}{|c|c|c|c|}
\hline
    $u_S$     &  $a_1$  &  $a_2$     \\ \hline
$\omega_1$ &  0   &  $2/3$      \\ \hline
$\omega_2$ &  0    &  $2/3$     \\ \hline
\end{tabular}
\end{center}

\vspace{0.05in}
\begin{center}
\begin{tabular}{|c|c|c|}
\hline
    $u_R$      &  $a_1$  &  $a_2 $    \\ \hline
$\omega_1$ &  1/2   &  0         \\ \hline
$\omega_2$ &  1/2    &  1/2        \\ \hline
\end{tabular}
\end{center}
\vspace{-0.2in}
\end{minipage} 
\hspace{3ex}
\begin{minipage}[b]{0.4\textwidth} 
\centering
\begin{tikzpicture}[domain=0:3, scale=5, thick]

\draw[<->] (1.1,0)node[right]{$\mu$}--(0,0)node[left,yshift=-5]{0}--(0,0.9)node[above]{${v}(\mu)$};

\draw (1.02,0)node[below]{1};

\draw[blue, ultra thick] (0,0.01)--(1,0.01);

\filldraw[blue] (1,2/3) circle (0.3pt);

\draw[red,thick] (0,0+0.01)--(1,2/3);

\draw[ dashed] (0.5,0)node[below]{$\mu_0=0.5$}--(0.5,1/3);



\draw (0,2/3)node[left]{$2/3$}--(0.02,2/3);

\end{tikzpicture}
\vspace{-0.3in}
\end{minipage}
\vspace{0.4in}

\begin{minipage}[t]{0.45\textwidth}
\captionof{table}{Sender and Receiver's payoffs \label{table: example-partition-unique}}    
\end{minipage}
\begin{minipage}[t]{0.4\textwidth}
\caption{Concavification\label{figure: example-partition-unique}}
\end{minipage}
\end{figure}

Clearly, full revelation is the unique optimal information structure,
which yields a payoff of $1/3$; therefore, Sender does not value
randomization. In addition, the only possible cheap-talk equilibrium
outcome is babbling, which yields a payoff of 0. Hence, commitment
is valuable.
\end{example}
\begin{example}
\label{Ex:value-randomization-not-value-commitment}Consider $\Omega=\{\omega_{1},\omega_{2}\}$
with prior with prior $\mu_{0}(\omega_{1})=\mu_{0}(\omega_{2})=0.5$
and $A=\{a_{1},a_{2},a_{3}\}$. Players' payoffs are given in Table
\ref{table: example-scant-indifference}. Receiver's best response
is $a_{1}$ when $\mu\in[0,1/3]$, $a_{2}$ when $\mu\in[1/3,2/3]$,
and $a_{3}$ when $\mu\equiv\mu(\omega_{2})\in[2/3,1]$. This leads
to Sender's indirect utility function (blue) and its concave envelope
(red) depicted in Figure \ref{figure: example-scant-indifference}.

\begin{figure}[h]  \centering
\begin{minipage}[b]{0.45\textwidth} \centering

\begin{center}
\begin{tabular}{|c|c|c|c|}
\hline
    $u_S$       &  $a_1$  &  $a_2$  &  $a_3$    \\ \hline
$\omega_1$ &  $1/3$   &  $2/3$    &  $0$      \\ \hline
$\omega_2$ &  $5/6$    &  $1/6$    &  $1/2$    \\ \hline
\end{tabular}
\end{center}

\vspace{0.05in}
\begin{center}
\begin{tabular}{|c|c|c|c|}
\hline
    $u_R$       &  $a_1$  &  $a_2$  & $ a_3$    \\ \hline
$\omega_1$ &  $3/4$   &  $1/2$    &  $0$      \\ \hline
$\omega_2$ &  $0$    &  $1/2$    &  $3/4$    \\ \hline
\end{tabular}
\end{center}
\vspace{-0.2in}
\end{minipage} 
\hspace{3ex}
\begin{minipage}[b]{0.4\textwidth} 
\centering
\begin{tikzpicture}[domain=0:3, scale=5, thick]

\draw[<->] (1.1,0)node[right]{$\mu$}--(0,0)node[left,yshift=-5]{0}--(0,0.9)node[above]{${v}(\mu)$};

\draw (1.02,0)node[below]{1};

\draw[blue] (0,1/3)--(1/3,1/2)  (1/3,1/2)--(2/3, 1/3) (2/3, 1/3)--(1, 1/2);

\draw[red,thick] (0,1/3+0.01)--(1/3,1/2+0.01)--(1, 1/2+0.01);

\draw[ dashed] (0.5,0)node[below]{$\mu_0=0.5$}--(0.5,1/2);

\draw (1/6,0)node[above]{\small$a_1$} (1/2,0)node[above]{\small$a_2$} (5/6,0)node[above,xshift=2]{\small$a_3$} ;

\draw (1/3,0)--(1/3,0.02) (2/3,0)--(2/3,0.02)  (1,0)--(1,0.02);

\draw (0,1/2)node[left]{$1/2$}--(0.02,1/2);

\end{tikzpicture}
\vspace{-0.3in}
\end{minipage}
\vspace{0.4in}

\begin{minipage}[t]{0.45\textwidth}
\captionof{table}{Sender's and Receiver's payoffs \label{table: example-scant-indifference}}    
\end{minipage}
\begin{minipage}[t]{0.4\textwidth}
\caption{Concavification\label{figure: example-scant-indifference}}
\end{minipage}
\end{figure}

Sender values randomization, because the unique optimal information
structure that induces a posterior $1/3$ cannot be generated by a
partitional messaging strategy. Since Sender's indirect utility function
is continuous, by \citet{lipnowski2020equivalence}, Sender does not
value commitment. For the sake of completeness, we construct a cheap-talk
equilibrium that yields the persuasion payoff $1/2$ to Sender.

Consider a strategy profile $(\sigma,\rho)$ with two on-path messages
$m_{1},m_{2}$: $\sigma(m_{1}|\omega_{2})=1/2$, $\sigma(m_{2}|\omega_{2})=1/2$,
$\sigma(m_{1}|\omega_{1})=1$; $\rho(a_{1}|m_{1})=1/2$, $\rho(a_{2}|m_{1})=1/2$
$\rho(a_{3}|m_{2})=1.$

The profile satisfies R-BR because the posterior upon observing $m_{1}$
is 1/3 and when observing $m_{2}$ is $1$. We next show that the
profile also satisfies S-BR. For type $\omega_{2}$ Sender, the expected
payoff of sending message $m_{2}$ is 1/2 and the expected payoff
of sending message $m_{1}$ is $\frac{1}{2}\cdot\frac{5}{6}+\frac{1}{2}\cdot\frac{1}{6}=\frac{1}{2}$,
so type $\omega_{2}$ Sender is indifferent and has no incentive to
deviate. For type $\omega_{1}$ Sender, the expected payoff of sending
message $m_{2}$ is $0$ and the expected payoff of sending message
$m_{1}$ is $\frac{1}{2}\cdot\frac{1}{3}+\frac{1}{2}\cdot\frac{2}{3}=\frac{1}{2}$,
so he strictly prefer sending message $m_{1}$. Hence, $(\sigma,\rho)$
is a cheap-talk equilibrium that yields the persuasion payoff.
\end{example}

\subsubsection{\protect\label{subsec:Large-comm-small-rand}Small value of randomization
does not imply small value of commitment}

We construct a positive measure of environments where the value of
commitment is bounded away from zero but the value of randomization
is arbitrarily small. Formally, given an environment $(u_{S},u_{R})$,
let $\Delta_{\mathcal{R}}(u_{S},u_{R})=\text{\emph{Persuasion Payoff}}-\text{\emph{Partitional Persuasion Payoff}}$
and $\Delta_{\mathcal{C}}(u_{S},u_{R})=\text{\emph{Persuasion Payoff}}-\text{\emph{Cheap-Talk Payoff}}$
denote the value of randomization and the value of commitment, respectively.
We will establish that for any $\varepsilon>0$, there exists a positive
measure set of environments $E$ such that for any $(u_{S},u_{R})\in E$,
we have $\Delta_{\mathcal{R}}(u_{S},u_{R})<\varepsilon$ and $\Delta_{\mathcal{C}}(u_{S},u_{R})>1/6$.

Fix any $\varepsilon>0$, we perturb players' payoffs in Example \ref{Ex:value-commitment-not-value-randomization}
to construct the desired positive share of environments. The idea
is that for sufficiently small perturbations in an appropriate direction,
the changes to the persuasion payoff, partitional persuasion payoff,
and cheap-talk payoff will also be small. Since in Example \ref{Ex:value-commitment-not-value-randomization},
the value of randomization is zero and the value of commitment is
positive, we obtain a positive measure of environments with a small
value of randomization and a value of commitment bounded away from
zero.

Players' payoffs are as in Table \ref{table:small-random-large-commit},
where $s_{ij},r_{ij}$ are the perturbations to Sender's and Receiver's
payoffs, respectively, when action $a_{j}$ is taken in state $\omega_{i}$.

\begin{table}
\begin{center}
\begin{tabular}{|C|S|S|}
\hline
    u_S    &  a_1  &  a_2    \\ \hline
\omega_1&  0+s_{11}   &  2/3+s_{12}      \\ \hline
\omega_2 &  0+s_{21}    &  2/3+s_{22}    \\ \hline
\end{tabular}
\end{center}

\vspace{0.05in}
\begin{center}
\begin{tabular}{|C|S|S|}
\hline
    u_R      &  a_1  &  a_2     \\ \hline
\omega_1 &  1/2+r_{11}  &  r_{12}      \\ \hline
\omega_2 &  1/2+r_{21}   &  1/2+r_{22}       \\ \hline
\end{tabular}
\end{center}
\caption{Sender's and Receiver's payoffs}\label{table:small-random-large-commit}
\end{table}

Let $s_{ij}\in[0,\delta]$, $r_{11},r_{21}\in[-\delta,0]$, and $r_{12},r_{22}\in[0,\delta]$,
where $\delta>0$. These perturbations generate a positive measure
set of environments, denoted by $E^{\delta}$. We will establish that
if $\delta<\min\{\frac{1}{4},\frac{\varepsilon}{2(1+\varepsilon)}\},$
then for any $(u_{S},u_{R})\in E^{\delta},$ the value of commitment
is greater than $1/6$ and the value of randomization is less than
$\varepsilon$. 

Consider any $(u_{S},u_{R})\in E^{\delta}$. Since $\delta<1/4,$
Receiver's best response is $a_{2}$ iff $\mu\geq\mu^{*}\equiv\frac{1/2+r_{11}-r_{12}}{1/2+r_{11}-r_{21}-r_{12}+r_{22}}\in[1-2\delta,1].$
Sender's indirect utility function is therefore
\[
v(\mu)=\begin{cases}
s_{11}+(s_{21}-s_{11})\mu & \text{if }\mu<\mu^{*}\\
\frac{2}{3}+s_{12}+(s_{22}-s_{12})\mu & \text{if }\mu\geq\mu^{*}.
\end{cases}
\]

By inducing beliefs $\mu=0$ and $\mu=\mu^{*}$, Sender achieves her
persuasion payoff
\[
\frac{1}{2\mu^{*}}[\frac{2}{3}+s_{12}+(s_{22}-s_{12})\mu^{*}]+(1-\frac{1}{2\mu^{*}})s_{11}.
\]
Meanwhile, full revelation yields a payoff of 
\[
\frac{1}{3}+\frac{s_{11}+s_{22}}{2}.
\]
Therefore, 
\begin{align*}
\Delta_{\mathcal{R}}(u_{S,}u_{R})\leq & \frac{1}{2\mu^{*}}[\frac{2}{3}+s_{12}+(s_{22}-s_{12})\mu^{*}]+(1-\frac{1}{2\mu^{*}})s_{11}-\frac{1}{3}-\frac{s_{11}+s_{22}}{2}\\
= & \frac{1-\mu^{*}}{\mu^{*}}(\frac{1}{3}+\frac{s_{12}-s_{11}}{2})\\
\leq & \frac{2\delta}{1-2\delta}(\frac{1}{3}+\delta)\\
< & \varepsilon
\end{align*}
where the third line follows from $\mu^{*}\geq1-2\delta$ and $s_{ij}\in[0,\delta]$,
and the last line follows from $\delta<\frac{\varepsilon}{2(1+\varepsilon)}$
and $\delta<\frac{2}{3}$.

In addition, since $\delta<1/4$, $a_{2}$ is Sender's preferred action
regardless of the states. It follows that any cheap-talk equilibrium
outcome must be the babbling outcome where Receiver takes action $a_{1}$
with probability 1. Therefore, Sender's cheap-talk payoff is $\frac{s_{11}+s_{21}}{2}$.
Hence, $\Delta_{\mathcal{C}}(u_{S,}u_{R})=\text{\emph{Persuasion Payoff}}-\text{\emph{Cheap-Talk Payoff}}\geq\frac{1}{3}+\frac{s_{11}+s_{22}}{2}-\frac{s_{11}+s_{21}}{2}>\frac{1}{6}$,
where the weak inequality follows from the fact that the persuasion
payoff is greater than the payoff from full revelation, and the strict
inequality follows from $\frac{s_{22}-s_{21}}{2}\geq\frac{-\delta}{2}>-\frac{1}{6}$.

\subsubsection{\protect\label{subsec:Small-comm-large-rand}Small value of commitment
does not imply small value of randomization}

We will establish that for any $\varepsilon>0$, there exists a positive
measure set of environments $E$ such that for any $(u_{S},u_{R})\in E$,
we have $\Delta_{\mathcal{C}}(u_{S},u_{R})<\varepsilon$ and $\Delta_{\mathcal{R}}(u_{S},u_{R})>\frac{1}{13}$. 

Fix any $\varepsilon>0$, we perturb players' payoffs in Example \ref{Ex:value-randomization-not-value-commitment}
to construct the desired positive share of environments. Similar to
Section \ref{subsec:Large-comm-small-rand}, the idea is to show that
changes to the persuasion payoff, partitional persuasion payoff, and
cheap-talk payoff are small under small perturbations. Since in Example
\ref{Ex:value-randomization-not-value-commitment}, the value of commitment
is zero and the value of randomization is positive, we obtain a positive
measure of environments with a small value of commitment and a value
of randomization bounded away from zero.

Players' payoffs are as in Table \ref{table:small-commit-large-random},
where $s_{ij},r_{ij}$ are the perturbations to Sender's and Receiver's
payoffs, respectively, when action $a_{j}$ is taken in state $\omega_{i}$.

\begin{table}
\begin{center}
\begin{tabular}{|C|N|N|N|}
\hline
    u_S      &  a_1  &  a_2  &  a_3    \\ \hline
\omega_1 &  1/3+s_{11}   &  2/3 +s_{12}   &  0+s_{13}      \\ \hline
\omega_2 &  5/6+s_{21}  &  1/6 +s_{22}   &  1/2+s_{23}    \\ \hline
\end{tabular}
\end{center}

\vspace{0.05in}
\begin{center}
\begin{tabular}{|C|N|N|N|}
\hline
    u_R       &  a_1  &  a_2  &  a_3    \\ \hline
\omega_1 &  3/4+r_{11}   &  1/2+r_{12}    & 0+r_{13}     \\ \hline
\omega_2 &  0+r_{21}    &  1/2+r_{22}    &  3/4+r_{23}    \\ \hline
\end{tabular}
\end{center}
\caption{Sender's and Receiver's payoffs}\label{table:small-commit-large-random}
\end{table}

Let $s_{ij},r_{ij}\in[0,\delta]$, where $\delta>0$. These perturbations
generate a positive measure set of environments, denoted by $E^{\delta}$.
Consider any $(u_{S},u_{R})\in E^{\delta}$ with $\delta<1/4$. Receiver's
best response is 
\[
a_{R}(\mu)=\begin{cases}
a_{1} & \text{if }\mu\in[0,\mu_{12}]\\
a_{2} & \text{if }\mu\in[\mu_{12},\mu_{23}]\\
a_{3} & \text{if }\mu\in[\mu_{23},1]
\end{cases}
\]
where $\mu_{12}=\frac{1}{3+r_{12}+r_{22}-r_{12}-r_{21}}$ and $\mu_{23}=\frac{2+r_{12}-r_{13}}{3+r_{23}-r_{13}+r_{12}-r_{22}}.$
Similar to Example \ref{Ex:value-randomization-not-value-commitment}
, the optimal information structure induces beliefs $\mu_{12}$ and
$1$, yielding a value
\begin{multline*}
\frac{1}{2(1-\mu_{12})}\max\{\mu_{12}(5/6+s_{21})+(1-\mu_{12})(1/3+s_{11}),\mu_{12}(1/6+s_{22})+(1-\mu_{12})(2/3+s_{12})\}\\
+\frac{1-2\mu_{12}}{2(1-\mu_{12})}(1/2+s_{23}).
\end{multline*}

Since $s_{ij}\in[0,\delta],$ taking $\delta\rightarrow0$, we have
$\mu_{12}\rightarrow1/3$, and the above value approaches $1/2$.
By continuity, there exists $\delta^{1}>0$ such that for any $\delta<\delta^{1}$,
the persuasion value lies within the interval $[\frac{1}{2}-\frac{\varepsilon}{2},\frac{1}{2}+\frac{\varepsilon}{2}].$ 

Meanwhile, full revelation yields a payoff of $\frac{5/6+s_{11}+s_{23}}{2}$
and providing no information yields a payoff of $\frac{5/6+s_{12}+s_{22}}{2}$.
Both values approach $5/12$ when $\delta\rightarrow0$. By continuity,
there exists $\delta^{2}>0$ such that for any $\delta<\delta^{2}$,
the partitional persuasion value is less than $\frac{5}{12}+\frac{\varepsilon}{2}$.

Next, we will construct a cheap-talk equilibrium that yields a payoff
close to $1/2$ for small $\delta$. Consider a strategy profile $(\sigma,\rho)$
with two on-path messages $m_{1},m_{2}$: $\sigma(m_{1}|\omega_{2})=\frac{\mu_{12}}{1-\mu_{12}}$,
$\sigma(m_{2}|\omega_{2})=\frac{1-2\mu_{12}}{1-\mu_{12}}$, $\sigma(m_{1}|\omega_{1})=1$;
$\rho(a_{1}|m_{1})=p$, $\rho(a_{2}|m_{1})=1-p$, $\rho(a_{3}|m_{2})=1$,
where $p=\frac{1/3+s_{23}-s_{22}}{2/3+s_{21}-s_{22}}$. Since $\delta<1/3$,
$p\in(0,1)$ is a well-defined probability.

The strategy profile satisfies R-BR because the posterior upon observing
$m_{1}$ is $\mu_{12}$, and upon observing $m_{2}$ is $1$. We now
show that the profile also satisfies S-BR. For type $\omega_{2}$
Sender, the expected payoff of sending message $m_{2}$ is $1/2+s_{23}$
and the expected payoff of sending message $m_{1}$ is $p(5/6+s_{21})+(1-p)(1/6+s_{22})=1/2+s_{23}$,
so type $\omega_{2}$ Sender is indifferent and has no incentive to
deviate. For type $\omega_{1}$ Sender, the expected payoff of sending
message $m_{2}$ is $0+s_{13}$ and the expected payoff of sending
message $m_{1}$ is $p(1/3+s_{11})+(1-p)(2/3+s_{12})$. As $\delta\rightarrow0$,
$p(1/3+s_{11})+(1-p)(2/3+s_{12})\rightarrow1/2$ and $0+s_{13}\rightarrow0$,
so type $\omega_{1}$ Sender strictly prefers to send message $m_{1}$.
By continuity, there exists $\delta^{3}>0$ such that for any $\delta<\delta^{3},$
the strategy profile $(\sigma,\rho)$ is a cheap-talk equilibrium,
yielding a value $\frac{1}{2}(1/2+s_{23})+\frac{1}{2}[p(1/3+s_{11})+(1-p)(2/3+s_{12})]>1/2-\frac{\varepsilon}{2}$.

Therefore, for any $\delta<\delta^{*}\equiv\min\{\delta^{1},\delta^{2},\delta^{3},\frac{1}{4}\}$,
and for any $(u_{S},u_{R})\in E^{\delta}$, $\Delta_{\mathcal{C}}\left(u_{S},u_{R}\right)<(\frac{1}{2}+\frac{\varepsilon}{2})-(\frac{1}{2}-\frac{\varepsilon}{2})=\varepsilon$,
and $\Delta_{\mathcal{R}}\left(u_{S},u_{R}\right)>(\frac{1}{2}-\frac{\varepsilon}{2})-(\frac{5}{12}+\frac{\varepsilon}{2})=\frac{1}{12}-\varepsilon$.

We have established that for any $\varepsilon>0$, there exists a
positive measure set of environments $E$ such that for any $(u_{S},u_{R})\in E$,
we have $\Delta_{\mathcal{C}}(u_{S},u_{R})<\varepsilon$ and $\Delta_{\mathcal{R}}(u_{S},u_{R})>\frac{1}{12}-\varepsilon$.
Now fix any $\hat{\varepsilon}>0$, let $\varepsilon=\min\{\hat{\varepsilon},\frac{1}{12}-\frac{1}{13}\}$,
our previous result implies that there exists a positive measure set
of environments $E$ such that for any $(u_{S},u_{R})\in E$, $\Delta_{\mathcal{C}}(u_{S},u_{R})<\varepsilon\leq\hat{\varepsilon}$
and $\Delta_{\mathcal{R}}(u_{S},u_{R})>\frac{1}{12}-\varepsilon\geq\frac{1}{12}-(\frac{1}{12}-\frac{1}{13})=\frac{1}{13}$.
\end{document}